\newlang{\BPSAT}{BP\text{-}SAT}
\newlang{\CNFSAT}{CNF\text{-}SAT}
\newlang{\OVabreviation}{OV}
\newlang{\CircuitSAT}{CircuitSAT}
\newclass{\DC}{DEPTH2}
\newcommand{\problem}[1]{\textsc{#1}}
\newcommand{\newproblem}[2]{\newcommand{#1}{\problem{#2}}}
\newproblem{\editdist}{Edit Distance}
\newproblem{\OV}{Orthogonal Vectors}
\newproblem{\matchingTrian}{Matching Triangles}
\newproblem{\TriangleCol}{Triangle Collection}
\newproblem{\CNFSATproblem}{CNF-SAT}
\newproblem{\FD}{Frechet Distance}
\newproblem{\AlignmentProblem}{Alignment}
\newproblem{\LCS}{Longest Common Subsequence}
\newproblem{\DTW}{Dynamic Time Warping}
\newproblem{\APSP}{APSP}
\newproblem{\threeSUM}{3SUM}
\newproblem{\BPSATproblem}{BP-SAT}
\newproblem{\collision}{2-to-1 Collision}
\newcommand{\ndots}{\!\!\dots}
\newclass{\DCQSETH}{\mathrm{DEPTH2\text{-}QSETH}}
\newclass{\NCQSETH}{\mathrm{NC\text{-}QSETH}}
\newclass{\gQSETH}{\gamma\mathrm{\text{-}QSETH}}
\newclass{\SETH}{\mathrm{SETH}}
\newcommand{\encodingName}{\textsc{Matrix Encoding}}
\DeclareMathOperator{\truthtable}{tt}
\DeclareMathOperator{\desc}{desc}
\DeclareMathOperator{\qTimeWB}{qTimeWB_{\epsilon}}
\DeclareMathOperator{\qTimeBB}{qTimeBB_{\epsilon}}
\DeclareMathOperator{\Query}{Q_{\epsilon}}
\DeclareMathOperator{\propertyP}{P}
\DeclareMathOperator{\circuitC}{C}
\DeclareMathOperator{\algoA}{\mathrm{A}}
\DeclareMathOperator{\compressionOblivious}{\mathcal{CO}}
\DeclareMathOperator{\alignCost}{alignment\text{-}cost}
\DeclareMathOperator{\propertyPPedit}{PP_{edit}}
\DeclareMathOperator{\propPedit}{P_{edit}}
\DeclareMathOperator{\matrixM}{M}
\DeclareMathOperator{\setOfPaths}{PATHS}
\DeclareMathOperator{\pathP}{P}
\DeclareMathOperator{\pathR}{R}
\DeclareMathOperator{\pathCost}{cost}
\DeclareMathOperator{\minPathCost}{\Delta}
\DeclareMathOperator{\editCost}{edit\text{-}cost}
\DeclareMathOperator{\coarseAlignC}{C}
\DeclareMathOperator{\coarseAlignD}{D}
\DeclareMathOperator{\coarseAlignN}{N}
\DeclareMathOperator{\groupG}{G}
\DeclareMathOperator{\bad}{bad}
\newtheorem{definition}{Definition}
\newtheorem*{theorem*}{Theorem}
\newtheorem{theorem}{Theorem}
\newtheorem{fact}{Fact}
\newtheorem{lemma}{Lemma}
\newtheorem{conjecture}{Conjecture}
\newtheorem*{conjecture*}{Conjecture}
\newtheorem{corollary}{Corollary}
\begin{document}


\title{The Quantum Strong Exponential-Time Hypothesis}

\author{Harry Buhrman${^*} {^\ddagger}$ \\ \href{mailto:harry.buhrman@cwi.nl}{harry.buhrman@cwi.nl} 
 \and Subhasree Patro${^*} {^\ddagger}$ \\ \href{mailto:subhasree.patro@cwi.nl}{subhasree.patro@cwi.nl}
   \and Florian Speelman $^\ddagger$ \\ \href{mailto:f.speelman@cwi.nl}{f.speelman@cwi.nl}\\
   $^*$University of Amsterdam, 
$^\ddagger$QuSoft, CWI Amsterdam}

\date{}
\maketitle

\begin{abstract}
The strong exponential-time hypothesis (SETH) is a commonly used conjecture in the field of  complexity theory. It states that CNF formulas cannot be analyzed for satisfiability with a speedup over exhaustive search. This hypothesis and its variants gave rise to a fruitful field of research, fine-grained complexity, obtaining (mostly tight) lower bounds for many problems in $\P$ whose unconditional lower bounds are hard to find. In this work, we introduce a framework of Quantum Strong Exponential-Time Hypotheses, as quantum analogues to SETH.

Using the QSETH framework, we are able to translate quantum query lower bounds on black-box problems to conditional quantum time lower bounds for many problems in  $\BQP$. As an example, we illustrate the use of the QSETH by providing a conditional quantum time lower bound of $\Omega(n^{1.5})$ for the Edit Distance problem. We also show that the $n^2$ SETH-based lower bound for a recent scheme for Proofs of Useful Work, based on the Orthogonal Vectors problem holds for quantum computation assuming QSETH, maintaining a quadratic gap between verifier and prover.
\end{abstract}

\section{Introduction}
\label{sec:intro}

There is a rich diversity of problems that can be solvable in polynomial time, some that have surprisingly fast algorithms, such as the computation of Fourier transforms or solving linear programs, and some for which the worst-case run time has not improved much for many decades. 
Of the latter category $\editdist$ is a good example: this is a problem with high practical relevance, and an $O(n^2)$ algorithm using dynamic programming, which is simple enough to be taught in an undergraduate algorithms course, has been known for many decades.
Even after considerable effort, no algorithm has been found that can solve this problem in fewer than $O(n^2 / \log^2 n)$ steps~\cite{EditDist-MasekPaterson-80}, still a nearly quadratic run time.

Traditionally, the field of complexity theory has studied the time complexity of problems in a relatively coarse manner -- the class $\P$, the problems solvable in polynomial time, is one of the central objects of study in complexity theory.

Consider $\CNFSATproblem$, the problem of whether a formula, input in conjunctive normal form, has a satisfying assignment.
What can complexity theory tell us about how hard it is to solve this problem?
For $\CNFSATproblem$, the notion of $\NP$-completeness gives a convincing reason why it is hard to find a polynomial-time algorithm for this problem: if such an algorithm is found, all problems in the complexity class $\NP$ are also solvable in polynomial time, showing $\P=\NP$.

Not only is no polynomial-time algorithm known, but (if the clause-length is arbitrarily large) no significant speed-up over the brute-force method of trying all $2^n$ assignments is known.
Impagliazzo, Paturi, and, Zane \cite{ETH-ImpagliazzoPaturi-01, ETH-ImpagliazzoPaturiZane-01} studied two ways in which this can be conjectured to be optimal. The first of which is called the \emph{Exponential-Time Hypothesis} (ETH).

\begin{conjecture}[Exponential-Time Hypothesis]
	\label{conj:ETH}
	There exists a constant $\alpha > 0$ such that $\CNFSATproblem$ on $n$ variables and $m$ can not be solved in time $O(m2^{\alpha n})$ by a (classical) Turing machine.
\end{conjecture}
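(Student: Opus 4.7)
The final statement is labeled a conjecture, not a theorem, and for good reason: any proof would unconditionally separate $\P$ from $\NP$ (since $\NP$-hard problems reduce to $\CNFSATproblem$ and the assertion rules out even $m \cdot 2^{o(n)}$ algorithms), and would go substantially further by pinning down the exponent of brute-force search. Accordingly, what follows is not a proof sketch but an outline of the kind of argument one would attempt and the obstacle that makes it hopeless with current techniques.

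The plan would proceed in two stages. First, I would try to reduce the question to a combinatorially cleaner statement: via the \emph{sparsification lemma} of Impagliazzo--Paturi--Zane, the conjecture can equivalently be phrased as a lower bound for $k$-SAT for every fixed $k$, so one could focus on ruling out an algorithm of the form $2^{(1-c)n}$ with a constant $c$ independent of the clause width. Second, I would try to establish such a bound against a restricted computational model -- resolution, DPLL, or local search -- where matching exponential lower bounds are already known, and then lift to general Turing machines by exhibiting a structural property any efficient algorithm must possess (e.g.\ limited ``backtracking depth'' or some form of read-once dependence on the variables).

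The main obstacle, and the reason ETH will remain a conjecture, is that the second step lies beyond all current techniques: unconditional superpolynomial lower bounds on general Turing machines for problems in $\NP$ run into the relativization, natural-proofs, and algebrization barriers. The best one can do today is to marshal evidence -- the decades-long failure to improve on brute force, the best known running times $O(2^{(1-c_k)n})$ with $c_k \to 0$ as $k \to \infty$, and the tightness of reductions from many $\NP$-hard problems -- and then adopt ETH as a \emph{hypothesis} from which conditional lower bounds are derived, which is exactly the role it plays here (and which the paper goes on to adapt to the quantum setting as QSETH).
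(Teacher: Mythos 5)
You correctly recognize that the Exponential-Time Hypothesis is a conjecture, not a theorem, and that the paper simply states it without proof (a proof would imply $\P\neq\NP$ and is far beyond current techniques). Your discussion matches the paper's treatment: ETH is adopted as a hypothesis from which conditional lower bounds are derived, not something to be proven.
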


This conjecture can be directly used to give lower bounds for many natural $\NP$-complete problems, showing that if ETH holds then these problems also require exponential time to solve.
The second conjecture, most importantly for the current work, is the \emph{Strong Exponential-Time Hypothesis} (SETH).

\begin{conjecture}[Strong Exponential-Time Hypothesis]
	\label{conj:SETH}
	There does not exist $\delta > 0$ such that $\CNFSATproblem$ on $n$ variables and $m$ clauses can be solved in $O(m2^{n(1-\delta)})$ time by a (classical) Turing machine.
\end{conjecture}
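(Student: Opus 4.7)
The statement above is the Strong Exponential-Time Hypothesis, which is a conjecture rather than a theorem, so an honest proof proposal has to admit up front that SETH is a long-standing open problem. My plan would accordingly be a plan for attempting the conjecture, together with an account of where I expect it to fail.

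First I would chase the obvious implication chain: SETH implies ETH, and ETH in turn implies $\P\neq\NP$ (and in fact that $\NP$ cannot be decided in time $2^{o(n)}$). So any proof of SETH in particular separates $\P$ from $\NP$, which immediately rules out relativizing techniques (Baker--Gill--Solovay) and algebrizing techniques (Aaronson--Wigderson). Next I would exploit the fact that SETH can be rephrased as a statement about $k$-SAT for all constant $k$: the algorithm of Paturi, Pudl\'ak, Saks and Zane already achieves a saving of the form $2^{n(1-\Theta(1/k))}$, and SETH asserts that this $\Theta(1/k)$ dependence on the clause width is essentially optimal. My approach would be to start from the tightest known $k$-SAT lower bounds and try to upgrade them to a uniform statement of the form ``no algorithm beats $2^{n(1-\delta)}$ simultaneously for all sufficiently large $k$'', reducing $\CircuitSAT$ on linear-size circuits to $\CNFSATproblem$ via a standard Tseitin-style compression in order to transfer any purported CNF-SAT speedup to a speedup on a more restrictive but better-understood model.

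The hard part --- which I do not see how to overcome --- is that the step that remains is an unconditional time lower bound against a general, uniform model of computation, and this runs directly into the natural proofs barrier of Razborov and Rudich (under the assumption that strong one-way functions exist). No currently known proof technique can simultaneously survive the relativization, natural-proofs, and algebrization barriers while pinning the worst-case running time of $\CNFSATproblem$ to precisely $2^n$. For this reason my actual proposal, consistent with the philosophy of the paper, is not to prove SETH but to adopt it as a hypothesis and to study its consequences; the quantum variants the paper is about to introduce derive their value from exactly the same situation, namely that SETH is not yet provable but is nonetheless precise enough to deliver sharp conditional lower bounds on problems in $\P$ and $\BQP$.
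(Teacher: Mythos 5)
You have correctly recognized that the statement is a conjecture, not a theorem, and that the paper supplies no proof of it (nor could it). The paper simply states SETH as a working hypothesis inherited from Impagliazzo, Paturi, and Zane, and your account of why an unconditional proof is currently out of reach --- the implication to $\P\neq\NP$ and the resulting collision with the relativization, algebrization, and natural-proofs barriers, together with the observation that SETH sharpens the Paturi--Pudl\'ak--Saks--Zane $2^{n(1-\Theta(1/k))}$ bound for $k$-SAT --- is accurate and consistent with the paper's stance of adopting SETH as a premise for conditional lower bounds rather than something to be established. One small caveat worth flagging: your Tseitin-compression step as stated would increase the number of variables (one per internal gate), so a CNF-SAT speedup measured in the CNF's variable count does not automatically transfer to a $\CircuitSAT$ speedup measured in the circuit's input count; the usual route in the literature goes in the opposite direction (reductions \emph{from} CNF-SAT to other problems), or restricts to circuit classes where the blow-up can be absorbed. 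This does not affect your overall conclusion, which matches the paper's treatment.
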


The strong exponential-time hypothesis also directly implies many interesting exponential lower bounds within $\NP$, giving structure to problems within the complexity class.
A wide range of problems (even outside of just $\NP$-complete problems) can be shown to require strong exponential time assuming SETH: for instance, recent work shows that, conditioned on SETH, classical computers require exponential time for so-called \emph{strong simulation} of several models of quantum computation~\cite{HMS-18,MT-19}.

Surprisingly, the SETH conjecture is not only a very productive tool for studying the hardness of problems that likely require exponential time, but can also be used to study the difficulty of solving problems within $\P$, forming a foundation for the field of \emph{fine-grained complexity}.
The first of such a SETH-based lower bound was given in \cite{CNFSATtoOV-RWilliams-05}, via a reduction from $\CNFSATproblem$ to the $\OV$ problem, showing that a truly subquadratic algorithm that can find a pair of orthogonal vectors among two lists would render SETH false.

The $\OV$ problem became one of the central starting points for proving SETH-based lower bounds, and conditional lower bounds for problems such as computing the Frechet distance between two curves \cite{FrechetDist-Bringmann-14}, sequence comparison problems such as the string alignment problem \cite{ConsequencesOFAlignment-AbboudEtAl-14}, Longest Common Subsequence and Dynamic Time Warping \cite{TightHardness-AbboudEtAl-15}, can all obtained via a reduction from $\OV$.
Also the $\editdist$ problem~\cite{EditDist-BackursIndyk-15} can be shown to require quadratic time conditional on SETH, implying that any super-logarithmic improvements over the classic simple dynamic programming algorithm would also imply better algorithms for satisfiability -- a barrier which helps explain why it has been hard to find any new algorithms for this problem.

Besides $\CNFSATproblem$, the conjectured hardness of other key problems like $\threeSUM$ and $\APSP$ is also commonly used to prove conditional lower bounds for problems in $\P$.
See the recent  surveys \cite{Survey-VVWilliams-15,Survey-Vassilevska-Williams-18} for an overview of the many time lower bounds that can be obtained when assuming only the hardness of these key problems.

All these results give evidence for the hardness of problems relative to classical computation, but interestingly SETH does not hold relative to \emph{quantum} computation.
Using Grover's algorithm~\cite{Grover-96,QCT-BernsteinVazirani-97}, quantum computers are able to solve $\CNFSATproblem$ (and more general circuit satisfiability problems) in time $2^{n/2}$, a quadratic speedup relative to the limit that SETH conjectures for classical computation.

Even though this is in violation of the SETH bound, it is not in contradiction to the concept behind the strong exponential-time hypothesis: the input formula is still being treated as a black box, and the quantum speedup comes `merely' from the general quadratic improvement in unstructured search\footnote{For unstructured search this bound is tight~\cite{StrengthWeaknessQC-BennettEtAl-97,Searching-BoyerEtAl-98}.
Bennett, Bernstein, Brassard, and Vazirani additionally show that with probability 1 relative to a random oracle all of NP cannot be solved by a bounded-error quantum algorithm in time $o(2^{n/2})$.)}.

It could therefore be natural to formulate the quantum exponential time hypothesis as identical to its classical equivalent, but with an included quadratic speedup, as a `basic QSETH'.
For some problems, such as $\OV$, this conjecture would already give tight results, since these problems are themselves amenable to a speedup using Grover's algorithm.
See for instance the Master's thesis \cite{QAlgForSETHbasProbs-Jorg-19} for an overview of some of the SETH-based lower bounds that are violated in the quantum setting.

On the other hand, since the conditional lower bound for all problems are a quadratic factor lower than before, such a `basic QSETH' lower bound for $\editdist$ would be merely linear.
Still, the best currently-known quantum algorithm that computes edit distance takes quadratic time, so we would lose  some of the explanatory usefulness of SETH in this translation to the quantum case.

In this work, we present a way around this limit.
Realize that while finding a single marked element is quadratically faster for a quantum algorithm, there is no quantum speedup for many other similar problems.
For instance, computing whether the number of marked elements is odd or even can not be done faster when allowing quantum queries to the input, relative to allowing only classical queries~\cite{QPolyMethod-BealsEtAl-01,Parity-FarhiEtAl-98}. 

Taking the edit distance again as an illustrative example, after careful inspection of the reductions from $\CNFSATproblem$ to $\editdist$~\cite{QuadLowerBounds-BringmannKunnemann-15,EditDist-BackursIndyk-15,EditDist-AbboudEtAl-16}, we show that the result of such a reduction encodes more than merely the existence of an a satisfying assignment.
Instead, the result of these reductions also encodes whether \emph{many} satisfying assignments exist (in a certain pattern), a problem that could be harder for quantum computers than unstructured search.
The `basic QSETH' is not able to account for this distinction, and therefore does not directly help with explaining why a linear-time quantum algorithm for $\editdist$ has not been found.

We present a framework of conjectures, that together form an analogue of the strong exponential-time hypothesis: QSETH.
In this framework, we account for the complexity of computing various properties on the set of satisfying assignments, giving conjectured quantum time lower bounds for variants of the satisfiability problem that range from $2^{n/2}$ up to $2^n$.

\paragraph{Summary of results.}
\begin{itemize}
	\item We define the QSETH framework, connecting quantum query complexity to the proving of fine-grained (conditional) lower bounds of quantum algorithms.
	The framework encompasses both different properties of the set of satisfying assignments, and is also able to handle different input circuit classes -- giving a hierarchy of assumptions that encode satisfiability on CNF formulas, general formulas, branching programs, and so on.
	\item Some SETH-based $\Omega(T)$ lower bounds carry over to $\Omega(\sqrt{T})$ QSETH lower bounds, from which we immediately gain structural insight to the complexity class $\BQP$. 
	\item We show that, assuming QSETH, the \textit{Proofs of Useful Work} of Ball, Rosen, Sabin and Vasudevan~\cite{uPoW-BallRosenSabinVasudevan-17} require time $\widetilde{O}(n^2)$ to solve on a quantum computer, matching the classical complexity of these proofs of work.
	\item We prove that the $\editdist$ problem requires $\Omega(n^{1.5})$ time to solve on a quantum computer, conditioned on QSETH.
	We do this by showing that the edit distance can be used to compute a harder property of the set of satisfying assignments than merely deciding whether one satisfying assignment exists.
	
	Following~\cite{EditDist-AbboudEtAl-16}, we are able to show this for a version of QSETH where the input formulas are \emph{branching programs} instead, giving a stronger result than assuming the hardness for only CNF inputs.
	
	\item
	As a corollary to proof of the conditional edit-distance lower bound, we can show that the query complexity of the restricted Dyck language is linear for any $k=\omega(\log n)$, partially answering an open question posed by Aaronson, Grier, and Schaeffer~\cite{Trichotomy-AaronsonEtAl-19}.\footnote{Lower bounds for the restricted Dyck language  were recently independently proven by Ambainis, Balodis, Iraids, Pr\={u}sis, and Smotrovs~\cite{Dyck-AmbainisEtAl-19}, and  Fr\'ed\'eric Magniez~\cite{Dyck-Magniez-19}.}
\end{itemize} 

\paragraph{Related work.} Independently from this work, Aaronson, Chia, Lin, Wang, and Zhang~\cite{CP-AaronsonEtAl-19} recently also defined a basic quantum version of the strong exponential-time hypothesis, which assumes that a quadratic speed-up over the classical SETH is optimal.
They present conditional quantum lower bounds for $\OVabreviation$, the closest pair problem, and the bichromatic closest pair problem, by giving fine-grained quantum reductions to $\CNFSATproblem$.
All such lower bounds have a quadratic gap with the corresponding classical SETH lower bound.

Despite the overlap in topic, these results turn out to be complementary to the current paper:
In the current work we focus on defining a more extensive framework for QSETH that generalizes in various ways the basic version. Our more general framework can exhibit a quantum-classical gap that is less than quadratic, which allows us to give conditional lower bounds for edit distance ($\Omega(n^{1.5})$) and useful proofs of work (a quadratic gap between prover and verifier).
For our presented applications, the requirements of the fine-grained reductions are lower, e.g., when presenting a lower bound of $n^{1.5}$ for edit distance it is no problem if the reduction itself takes time $\widetilde{O}(n)$.\footnote{We use $\widetilde{O}$ to denote asymptotic behavior up to polylogarithmic factors.}
Conversely, we do not give the reductions that are given by~\cite{CP-AaronsonEtAl-19} -- those results are distinct new consequences of QSETH (both of the QSETH that is presented in that work, and of our more extensive QSETH framework).

\paragraph{Structure of the paper.} In Section~\ref{sec:QSETH} we motivate and state the QSETH framework. Following that, in Section~\ref{sec:ConsequuencesCNFQSETH} we present the direct consquences of QSETH, including the maintaining of some current bounds (with a quadratic loss), and the Useful Proof of Work lower bound.
In Section~\ref{sec:ConsequuencesNCQSETH} we present a conditional lower bound for the Edit Distance problem, and the lower bound to the restricted Dyck language we get as a corollary to the proof.
Finally, we conclude and present several open questions in Section~\ref{sec:Conclusion}.

\section{Defining the Quantum Strong Exponential-Time Hypothesis}
\label{sec:QSETH}
Almost all known lower bounds for quantum algorithms are defined in terms of \textit{query} complexity, which measures the number of times any quantum algorithm must access the input to solve an instance of a given problem.
There are two main methods in the field for proving lower bounds on quantum query complexity: The first one is the \textit{polynomial method}, based on the observation that the (approximate) degree of the unique polynomial representing a function is a lower bound on the number of queries any bounded-error quantum algorithm needs to make \cite{QPolyMethod-BealsEtAl-01}. The second main method is the \textit{adversary method} \cite{QLowerBounds-Ambainis-00}  which analyzes a hypothetical quantum adversary that runs the algorithm with a superposition of inputs instead of considering a classical adversary that runs the algorithm with one input and then modifies the input.

Despite the success of quantum query complexity and the fact that we know tight query lower bounds for many problems, the model does not take into account the computational efforts required after querying the input.
In particular, it is not possible to use query complexity to prove any lower bound greater than linear, since any problem is solvable in the query-complexity model after all bits are queried.
In general we expect the time needed to solve most problems to be much larger than the number of queries required for the computation, but it still seems rather difficult to formalize methods to provide unconditional quantum time lower bounds for explicit problems.
We overcome these difficulties by providing a framework of conjectures that can assist in obtaining \emph{conditional} quantum time lower bounds for many problems in $\BQP$.
We refer to this framework as the QSETH framework.

\paragraph{Variants of the classical SETH.}
The Strong Exponential-Time Hypothesis (SETH) was first studied \cite{ETH-ImpagliazzoPaturi-01, ETH-ImpagliazzoPaturiZane-01}, who showed that the lack of a $O(2^{n(1-\delta)})$ for a $\delta > 0$ algorithm to solve $\CNFSATproblem$ is deeply connected to other open problems in complexity theory.
Despite it being one the most extensively studied problems in the field of (classical) complexity theory, the best known classical algorithms for solving \textit{k}-SAT run in $2^{n-n/O(k)}m^{O(1)}$ time \cite{AlgoKsat-PaturiPudlakSaksZane-05}, while the best algorithm for the more-general $\CNFSATproblem$ is $2^{n-n/O(\log \Delta)}m^{O(1)}$ \cite{ClauseDensityCNFSAT-CalabroEtAl-06}, where $m$ denotes the number of clauses and $\Delta=m/n$ denotes the clause to variable ratio.

Even though no refutation of SETH has been found yet, it is plausible that the CNF structure of the input formulas does allow for a speed-up.
Therefore, if possible, it is preferable to base lower bounds on the hardness of more general kinds of (satisfiability) problems, where the input consists of wider classes of circuits.
For example, lower bounds based on $\NC$-SETH, satisfiability with $\NC$-circuits as input,\footnote{$\NC$ circuits are of polynomial size and polylogarithmic depth consisting of fan-in 2 gates.} have been proven for $\editdist$, $\LCS$ and other problems \cite{EditDist-AbboudEtAl-16}, in particular all the problems that fit the framework presented in~\cite{QuadLowerBounds-BringmannKunnemann-15}. 

Additionally, a different direction in which the exponential-time hypothesis can be weakened, and thereby made more plausible, is requiring the computation of different properties of a formula than whether at least one satisfying assignment exists.
For example, hardness of \emph{counting} the number of satisfying assignments is captured by \#ETH~\cite{DellEtAl-14}.
Computing existence is equivalent to computing the OR of the set of satisfying assignments, but it could also conceivably be harder to output, e.g., whether the number of satisfying assignments is odd or even or whether the number of satisfying assignments is larger than some threshold.
In the quantum case, generalizing the properties to be computed is not only a way to make the hypothesis more plausible: for many of such tasks it is likely that the quadratic quantum speedup, as given by Grover's algorithm, no longer exist.

\subsection{The basic QSETH}
\label{sec:QSETHframeworkBasic}
To build towards our framework, first consider what would be a natural generalization of the classical SETH. 

\begin{conjecture*}[Basic QSETH]
There is no bounded error quantum algorithm that solves $\CNFSATproblem$ on $n$ variables, $m$ clauses in $O(2^{\frac{n}{2}(1-\delta)}m^{O(1)})$ time, for any $\delta > 0$.
\end{conjecture*}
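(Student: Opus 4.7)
The final statement is a conjecture rather than a theorem, so a "proof proposal" has to be read as an attempt to justify it rather than derive it unconditionally; a true unconditional proof is out of reach, since it would entail a $2^{\Omega(n)}$ quantum time lower bound for an explicit $\NP$ problem, a strength of lower bound far beyond any current technique. The plan is therefore to argue that basic QSETH is well-supported by (a) the tightness of Grover's algorithm against unstructured search, (b) the classical SETH, and (c) consistency with what is known relative to random oracles.

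First, I would establish the relativized version of the conjecture. Treating the CNF formula --- equivalently, its truth-table on $\{0,1\}^n$ --- as a black box, the Bennett--Bernstein--Brassard--Vazirani lower bound gives $\Omega(2^{n/2})$ quantum queries for unstructured search, matching Grover's upper bound; moreover, the authors of that same paper show that with probability one relative to a random oracle all of $\NP$ requires time $\Omega(2^{n/2})$ on any bounded-error quantum algorithm. This already establishes the basic QSETH in the relativized setting and makes explicit that any refutation must genuinely exploit the CNF encoding, not merely the input/output behaviour of the formula.

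Second, I would argue the passage from the relativized to the white-box setting, which is what the conjecture is really about. The best known classical algorithms for $\CNFSATproblem$ save only a subpolynomial factor in the exponent over brute force, so applying Grover's square-root speed-up on top of the best classical routine already recovers the $2^{n/2}$ bound. A hypothetical quantum algorithm breaking this by a polynomial factor in the exponent would have to either (i) convert --- via standard amplification and a fine-grained quantum-to-classical simulation --- into a classical algorithm violating SETH, or (ii) exploit structural features of CNFs that decades of classical SAT research have failed to uncover. Either route gives a plausibility argument, but not a reduction.

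The main obstacle, and the reason the statement can only be a conjecture, is precisely this gap: no known reduction turns a faster quantum CNF-SAT algorithm into a faster classical one while preserving the quadratic ratio, so the conjecture cannot be formally based on classical SETH. A rigorous proof would require either such a fine-grained quantum-to-classical simulation at exponential scales or an unconditional quantum time hierarchy strong enough to rule out $2^{n(1-\delta)/2}$-time algorithms for an explicit problem, both of which lie well beyond current methods; this is precisely why the paper adopts the statement as a hypothesis rather than attempting to prove it.
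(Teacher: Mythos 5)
You correctly recognize that the statement is a conjecture, not a theorem, and that the paper (like any work in this area) adopts it as an unproven hypothesis rather than deriving it. Your supporting motivation---Grover optimality, the Bennett--Bernstein--Brassard--Vazirani random-oracle bound, and consistency with the classical SETH---matches the paper's own informal justification in the introduction and Section 2, so there is nothing to flag.
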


This conjecture is already a possible useful tool in proving conditional quantum lower bounds, as we present an example of this in Section~\ref{sec:LowerBoundDCQSETH}.\footnote{Additional examples of implications from such a version of QSETH can be found in the recent independent work of \cite{CP-AaronsonEtAl-19}.}

We first extend this conjecture with the option to consider wider classes of circuits.
Let~$\gamma$ denote a class of representations of computational models. Such a representation can for example be polynomial-size CNF formulas, polylog-depth circuits $\NC$, polynomial-size branching programs $\BP$, or the set of all polynomial-size circuits. The complexity of the latter problem is also often studied in the classical case, capturing the hardness of $\CircuitSAT$.

\begin{conjecture*}[Basic $\gQSETH$]
A quantum algorithm cannot, given an input $C$ from the set $\gamma$, decide in time $O(2^{\frac{n}{2}(1-\delta)})$  whether there exists an input $x \in \{0,1\}^n$ such that $C(x)=1$ for any $\delta>0$.
\end{conjecture*}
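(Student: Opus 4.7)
Since the final statement is a \emph{conjecture} rather than a theorem, there is no proof in the usual sense — by analogy with the classical SETH it is intended as a working hypothesis from which conditional lower bounds are derived. My plan therefore is to describe how I would \emph{argue the plausibility} of Basic $\gQSETH$ and identify what a genuine unconditional proof would require.

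The plan is to decompose the task into two sub-problems: (i) justify the $2^{n/2}$ exponent as the right black-box barrier, and (ii) argue that the structural information present in the representation class $\gamma$ does not allow a polynomial-in-$n$ saving in the exponent beyond what Grover's algorithm already exploits. For (i), I would appeal to the standard query lower bound for unstructured search of Bennett--Bernstein--Brassard--Vazirani and Boyer--Brassard--H\o{}yer--Tapp, cited in the excerpt: any quantum algorithm that decides whether $f\colon\{0,1\}^n\to\{0,1\}$ has a $1$-input makes $\Omega(2^{n/2})$ queries, so one cannot hope to beat the $2^{n/2}$ exponent by any black-box technique. Combined with the BBBV relativized result that NP cannot be decided in $o(2^{n/2})$ quantum time relative to a random oracle, this provides strong evidence against any generic super-Grover speedup.

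For (ii), the approach would mirror the justification of the classical SETH. One would survey the landscape of known quantum algorithms for $\CNFSATproblem$ (and for $\gamma$-satisfiability when $\gamma$ is richer, such as $\NC$ or $\BP$): every such algorithm proceeds by Grover-type amplitude amplification on top of a classical search routine, and therefore inherits at best the square root of the classical running time. The classical bounds of Paturi--Pudl\'ak--Saks--Zane and Calabro et al.\ quoted in the excerpt already save only a $2^{n/O(\log\Delta)}$ factor in the exponent; amplified quantumly, they remain of the form $2^{n/2 - o(n)}$. The conjecture is consistent with all of these, and no algorithmic idea is currently known that would combine structural exploitation with amplitude amplification in a way that breaks the $2^{n/2}$ barrier by a constant factor in the exponent.

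The hard part is, of course, step (ii). A genuine proof would require an unconditional super-linear quantum time lower bound for an explicit problem, which would in particular imply $\BQP\neq\NP$ in a very strong quantitative form and is far beyond current techniques (adversary and polynomial methods only yield query lower bounds, which are automatically at most linear in the input length $2^n$ of the truth table). So my ``proof proposal'' is honest about its scope: one can justify Basic $\gQSETH$ by (a) the tight black-box search lower bounds, (b) the failure of decades of classical SAT algorithms to beat the relevant exponent, and (c) the fact that the quantum toolkit, once stripped of Grover, has produced no candidate for a polynomial-in-$n$ improvement in the exponent — but a true proof is out of reach, which is precisely why the statement is posed as a conjecture to be used as a foundation for fine-grained quantum lower bounds.
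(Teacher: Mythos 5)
You correctly identify that this is a \emph{conjecture}, not a theorem: the paper offers no proof, only a plausibility argument grounded in the tightness of Grover search (\cite{StrengthWeaknessQC-BennettEtAl-97,Searching-BoyerEtAl-98}), the BBBV random-oracle separation, and the fact that known classical SAT algorithms only shave a $2^{n/O(\log\Delta)}$ factor off the exponent. Your plausibility discussion tracks the paper's own motivation essentially point for point, so there is no gap to report.
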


We also define $\DC$ for the set of all depth-2 circuits consisting of unbounded fan-in, consisting only of AND and OR gates.
This definition is later convenient when considering wider classes of properties, 
and it can be easily seen that `basic $\DCQSETH$' is precisely the `basic QSETH' as defined above.

Since both these basic QSETH variants already contain a quadratic speedup relative to the classical SETH, conditional quantum lower bounds obtained via these assumptions will usually also be quadratically worse than any corresponding classical lower bounds for the same problems.
For some problems, lower bounds obtained using the basic QSETH, or using $\gQSETH$ for a wider class of computation, will be tight.
However, for other problems no quadratic quantum speedup is known.

\subsection{Extending QSETH to general properties}
\label{sec:QSETHframework}

We  now extend the `basic $\gQSETH$' as defined in the previous section, to also include computing different properties of the set of satisfying assignments.
By extending QSETH in this way, we can potentially circumvent the quadratic gap between quantum and classical lower bounds for some problems.

Consider a problem in which one is given some circuit representation of a boolean function $f:\{0,1\}^n \rightarrow \{0,1\}$ and asked whether a property $\propertyP:\{0,1\}^{2^n}\rightarrow \{0,1\}$ on the truth table of this function evaluates to 1, that is, given a circuit $\circuitC$ the problem is to decide if $\propertyP(\truthtable(\circuitC))=1$, where $\truthtable(\circuitC)$ denotes the truth table of the boolean function computed by the circuit $\circuitC$.
If one can only access $\circuitC$ as a black box then it is clear that the amount of time taken to compute $\propertyP(\truthtable(\circuitC))$ is lower bounded by the number of queries made to the string $\truthtable(\circuitC)$. However, if provided with the description of $\circuitC$, which we denote by $\desc(\circuitC)$, then one can analyze $\circuitC$ to compute $\propertyP(\truthtable(\circuitC))$ possibly much faster.

For example, take the representation to be polynomial-sized CNF formulas and the property to be OR.
Then for polynomial-sized CNF formulas this is precisely the $\CNFSATproblem$ problem. Conjecturing quantum hardness of this property would make us retrieve the `basic QSETH' of the previous section.
Do note that we cannot simply conjecture that any property is hard to compute on CNF formulas:
Even though the query complexity of AND on a string of length $2^n$ is $\Omega(2^n)$ classically and $\Omega(2^{n/2})$ in the quantum case, this property can be easily computed in polynomial time both classically and quantumly when provided with the description of the $n^{O(1)}$ sized CNF formula.

To get around this problem, we can increase the complexity of the input representation:
If we consider inputs from $\DC$, the set of all depth-2 circuits consisting of unbounded fan-in AND and OR gates, we now have a class that is closed under complementation. For this class, it is a reasonable conjecture that both AND, the question whether the input is a tautology and all assignments are satisfying, and OR, the normal $\SAT$ problem, are hard to compute.

After this step we can look at further properties than AND and OR. For instance, consider the problem of computing whether there exists an even or an odd number of satisfying assignments. This task is equivalent to computing the PARITY of the truth table of the input formula.
How much time do we expect a quantum algorithm to need for such a task?

The quadratic speedup for computing the OR is already captured in the model where the quantum computation only tries possible assignments and then performs Grover's algorithm in a black box way.
If PARITY is also computed in such a way, then we know from query complexity~\cite{QPolyMethod-BealsEtAl-01} that there is no speedup, and the algorithm will have to use $\Omega(2^n)$ steps.
Our QSETH framework will be able to consider more-complicated properties, like PARITY.

Finally, observe that such a correspondence, i.e., between the query complexity of a property and the time complexity of computing this property on the set of satisfying assignments, cannot hold for \emph{all} properties, even when we consider more complicated input classes besides CNF formulas.
For instance, consider a property which is 0 on exactly the strings that are truth tables of polynomial-sized circuits, and is PARITY of its input on the other strings.
Such a property has high quantum query complexity, but is trivial to compute when given a polynomial-sized circuit as input.
We introduce the notion of \emph{compression oblivious} below to handle this problem.

\paragraph{Defining QSETH.} We formalize the above intuitions in the following way.
Let the variable $\gamma$ denote a class of representation at least as complex as the set $\DC$, where $\DC$ denotes the set of poly sized depth-2 circuits consisting of only OR and AND gates of unbounded fan-in. We define a meta-language $L_{\propertyP}$ such that $L_{\propertyP}=\{\desc(\circuitC)$ $|$ $\circuitC$ is an element from the set $\gamma$ and $\propertyP(\truthtable(\circuitC))=1\}$.
We now define the following terms:

\begin{definition}[White-box algorithms] An algorithm $\algoA$ decides the property $\propertyP$ in \textbf{white-box} if $\algoA$ decides the corresponding meta-language $L_{\propertyP}$. That is, given an input string $\desc(\circuitC)$, $\algoA$ accepts if and only if $\propertyP(\truthtable(\circuitC))=1$. We use $\qTimeWB(\propertyP)$ to denote the time taken by a quantum computer to decide the language $L_{\propertyP}$ with error probability $\epsilon$.
\end{definition}

\begin{definition}[Black-box algorithms]
An algorithm $\algoA$ decides the property $\propertyP$ in \textbf{black-box} if the algorithm $\algoA^f(1^n,1^m)$ accepts if and only if $\propertyP(\truthtable(f))=1$. Here, $f$ is the boolean function computed by the circuit $\circuitC$ and $m$ is the upper bound on $|\desc(\circuitC)|$ which is the size of the representation\footnote{For instance a CNF/DNF formula, an $\NC$ circuit, or a general circuit.} that describes $f$, and $\algoA^f$ denotes that the algorithm $\algoA$ has oracle access to the boolean function $f$.
We use $\qTimeBB(\propertyP)$ to denote the time taken by a quantum computer to compute the property $\propertyP$ in the black-box setting with error probability $\epsilon$.
\end{definition}

We define the set of \emph{compression oblivious} properties corresponding to $\gamma$ as the set of properties where the time taken to compute this property in the black-box setting is lower bounded by the quantum query complexity of this property on all strings. Formally,
\begin{equation*}
    \compressionOblivious(\gamma)=\{ \text{properties } \propertyP \text{ such that } \qTimeBB(\propertyP|_{\text{S}_{\gamma}}) \geq \Omega(\Query(\propertyP)) \},
\end{equation*}
where $\Query(\propertyP)$ denotes the quantum query complexity of the property $\propertyP$ in a $\epsilon$-bounded error query model and 
   $\text{S}_{\gamma}=\{ \truthtable(\circuitC) \text{ | } \circuitC \text{ is an element of the set } \gamma \}$.
For example, the properties~AND and OR are in $\compressionOblivious(\DC)$ because the adversarial set that gives the tight query bound for the property~AND (OR) are truth tables of functions that can be represented by $n^{O(1)}$ sized DNF (CNF) formulas.
As $\Query(\text{AND}|_{\text{S}_{\DC}})=\Query(\text{AND})$ and $\qTimeBB(\text{AND}|_{\text{S}_{\DC}}) \geq \Query(\text{AND}|_{\text{S}_{\DC}})$.
Therefore, we have $\text{AND} \in \compressionOblivious(\DC)$.
The same result holds for the property OR as well.

For each class of representation $\gamma$ we now define the corresponding $\gQSETH$, which states that computing any compression-oblivious property $\propertyP$ in the \textit{white-box} setting is at least as hard as computing $\propertyP$ in the \textit{black-box} setting. More formally,
\begin{conjecture}[$\gQSETH$]
For every class of representation $\gamma$, such as the class of depth-2 circuits $\DC$ or poly-sized circuits of a more complex class, for all properties $\propertyP \in \compressionOblivious(\gamma)$, we have $\qTimeWB(\propertyP|_{\gamma}) \geq \Omega(\Query(\propertyP))$. 
\end{conjecture}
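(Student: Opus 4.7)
This statement is a conjecture rather than a theorem, so the proposal is really a template for what a proof would have to deliver, together with a pinpoint of the step that forces the statement to remain conjectural.

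The plan is to proceed by contrapositive. Suppose there were a white-box quantum algorithm $\algoA$ that decides $L_{\propertyP}$ on inputs from $\gamma$ in time $T = o(\Query(\propertyP))$. I would try to build a black-box algorithm $\algoA^{f}$ that, given oracle access to $\truthtable(\circuitC)$ for some $\circuitC \in \gamma$, outputs $\propertyP(\truthtable(\circuitC))$ using $\widetilde{O}(T)$ queries. Because $\propertyP \in \compressionOblivious(\gamma)$, such an $\algoA^{f}$ would contradict the defining inequality $\qTimeBB(\propertyP|_{S_\gamma}) \geq \Omega(\Query(\propertyP))$, thereby closing the argument.

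The natural way to construct $\algoA^{f}$ is a two-phase simulation. In the first phase, the black-box algorithm would use its oracle queries to commit to some short description $d$ that is consistent with $\truthtable(\circuitC)$ on the positions the simulated $\algoA$ will need; in the second phase, it runs $\algoA(d)$ with no further queries, since once the description is fixed the white-box algorithm consumes no oracle. A crude bit-counting heuristic makes this look plausible: $|\desc(\circuitC)|$ is polynomial in $n$ while $\Query(\propertyP)$ can be as large as $2^{n/2}$ or even $2^{n}$, leaving ample budget for a recovery sub-routine.

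The main obstacle, and the reason the statement has to be conjectured rather than proved, is exactly this recovery step. Quantumly learning a circuit from its truth table is in general hopeless (for classes $\gamma$ rich enough to contain, say, pseudorandom functions it is provably impossible), and compression-obliviousness of $\propertyP$ does not in itself hand us a learning primitive for circuits in $\gamma$. A genuine proof would therefore need either a subtler simulation that answers $\algoA$'s queries to $\desc(\circuitC)$ through truth-table queries without ever materializing the description, or a restriction to subclasses of $\gamma$ admitting efficient quantum circuit reconstruction; neither route is available in the required generality. Indeed, any unconditional resolution would immediately imply the basic QSETH and strong quantum lower bounds for tautology and PARITY-SAT, so the honest best one can do in this section is to frame the inequality as an assumption and argue for its plausibility via the compression-oblivious gate-keeper, which is exactly what the paper does.
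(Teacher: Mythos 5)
You correctly recognize that the paper states this as a conjecture and offers no proof; there is nothing to compare against, since the $\gQSETH$ is an \emph{assumption} of the framework rather than a result. Your sketch of what a proof would require — converting a fast white-box algorithm into a fast black-box one by first reconstructing a short description $\desc(\circuitC)$ from oracle queries to $\truthtable(\circuitC)$, and then noting that this reconstruction step is infeasible in general (e.g., if $\gamma$ contains pseudorandom functions) — is a sound heuristic explanation for why the statement is framed as a conjecture and not a theorem. This is consistent with the paper's own discussion: the role of the compression-oblivious restriction is precisely to rule out trivially-false instances of the inequality (properties that are hard in query complexity but easy given the description), while the remaining content is left as a plausible but unproven hardness assumption, analogous to classical SETH. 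One small point worth flagging: your contrapositive framing ("white-box fast $\Rightarrow$ black-box fast") quietly assumes that black-box time is at least white-box time after paying a learning cost, but in the definitions of the paper the black-box and white-box models are not nested in a way that makes this automatic; the conjecture is genuinely asserting a transfer of hardness, not a containment, which is exactly why it cannot be reduced to the definition of $\compressionOblivious(\gamma)$ alone.
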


\subsection{Observations on the set of compression oblivious properties}
As the class $\gamma$ gets more complex, the corresponding $\gQSETH$ becomes more credible.
The set of compression oblivious properties is an interesting object of study by itself.
First consider the following facts about sets of compression-oblivious properties relate, relative to different computational models.

\begin{fact}\label{fact:CO}
Given two classes of representations $A$ and $B$, if $A \subseteq B$ then for every property $\propertyP$, we have $\propertyP \in \compressionOblivious(\text{B})$ whenever $\propertyP \in \compressionOblivious(\text{A})$.
\end{fact}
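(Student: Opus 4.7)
The plan is to reduce the claim about compression obliviousness to a simple monotonicity statement about black-box quantum time complexity under enlargement of the promise set of truth tables. The key observation is that the set $\mathrm{S}_\gamma$ of allowed truth tables grows with $\gamma$, and a black-box algorithm whose promise is a larger set of inputs must, a fortiori, succeed on every smaller sub-promise.

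First, I would note that the inclusion $A \subseteq B$ between classes of representations immediately yields $\mathrm{S}_A \subseteq \mathrm{S}_B$, since every truth table arising from a circuit in $A$ also arises from a circuit in $B$. Next, I would argue the monotonicity $\qTimeBB(\propertyP|_{\mathrm{S}_B}) \geq \qTimeBB(\propertyP|_{\mathrm{S}_A})$: any bounded-error quantum oracle algorithm computing $\propertyP$ correctly on all $f$ with $\truthtable(f) \in \mathrm{S}_B$ also, by restriction, computes $\propertyP$ correctly on all $f$ with $\truthtable(f) \in \mathrm{S}_A$, and with the same query/time overhead. Hence the optimal black-box time over the larger promise is at least the optimal time over the smaller promise.

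Chaining these two observations with the hypothesis $\propertyP \in \compressionOblivious(A)$, which is exactly $\qTimeBB(\propertyP|_{\mathrm{S}_A}) \geq \Omega(\Query(\propertyP))$, gives
\[
\qTimeBB(\propertyP|_{\mathrm{S}_B}) \;\geq\; \qTimeBB(\propertyP|_{\mathrm{S}_A}) \;\geq\; \Omega(\Query(\propertyP)),
\]
which is precisely the statement that $\propertyP \in \compressionOblivious(B)$.

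There is essentially no obstacle here: the whole argument is a one-line set-inclusion plus a trivial restriction argument, and the quantity $\Query(\propertyP)$ on the right-hand side does not depend on $\gamma$ at all, so enlarging the class can only tighten the black-box lower bound side of the inequality. The only point worth being pedantic about is that the black-box model does not use the description of the circuit, so restricting an algorithm from promise $\mathrm{S}_B$ to promise $\mathrm{S}_A$ is genuinely cost-free and preserves the bounded-error guarantee.
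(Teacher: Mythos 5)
Your proof is correct and follows the same route as the paper's: derive $\text{S}_A \subseteq \text{S}_B$ from $A \subseteq B$, observe that black-box time complexity is monotone in the promise set, and chain the two inequalities. Your write-up is slightly more explicit about \emph{why} the monotonicity holds (a black-box algorithm for the larger promise restricts for free to the smaller one), but the argument is the same.
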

\begin{fact}\label{fact:SETHinclusion}
Given two classes of representations $A$ and $B$, if $A \subseteq B$ then $A\mathrm{\text{-}QSETH}$ implies $B\mathrm{\text{-}QSETH}$.
\end{fact}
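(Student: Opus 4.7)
The plan is to chain the two conjectures together via a monotonicity observation on white-box running times across nested circuit classes. The proof reduces to a one-line inequality once the right quantity is identified, and the only nontrivial bookkeeping concerns which family of properties the transfer covers.

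First I would observe that the inclusion $A \subseteq B$ forces $\qTimeWB(P|_A) \leq \qTimeWB(P|_B)$ for every property $P$. Any quantum algorithm that decides the meta-language $L_P$ on inputs $\desc(C)$ with $C \in B$ can be used unchanged on inputs $\desc(C)$ with $C \in A$, because every description of an $A$-circuit is already a valid description of a $B$-circuit. Hence restricting to the narrower class cannot increase the time needed to decide the corresponding meta-language.

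Next I would take an arbitrary property $P \in \compressionOblivious(A)$ and invoke the hypothesis $A$-QSETH, which yields $\qTimeWB(P|_A) \geq \Omega(\Query(P))$. Concatenating with the inequality above gives $\qTimeWB(P|_B) \geq \qTimeWB(P|_A) \geq \Omega(\Query(P))$, which is exactly the statement of $B$-QSETH for this $P$. Fact~1 already supplies $\compressionOblivious(A) \subseteq \compressionOblivious(B)$, so every such $P$ is a legitimate instance of the $B$-QSETH quantifier, and the implication covers the entire family of properties to which $A$-QSETH assigns a lower bound.

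The step I expect to require the most care is handling the properties in $\compressionOblivious(B) \setminus \compressionOblivious(A)$, on which $A$-QSETH is silent: such properties admit a fast black-box algorithm on the narrower truth-table set $S_A$ yet not on $S_B$, so the one-line transfer above does not reach them. In the writeup I would either restrict the scope of the implication to the common compression-oblivious core (where the chain goes through automatically), or argue that such exotic properties fall outside the intended spirit of the framework, since QSETH conjectures white-box hardness precisely where a matching query lower bound is already witnessed on the restricted truth-table set. In either case the substantive content of the argument is the white-box monotonicity step, and I do not expect any deeper obstacle.
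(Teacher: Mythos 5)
Your argument is the same as the paper's: monotonicity of white-box time under enlarging the input class (so that $\qTimeWB(\propertyP|_A) \leq \qTimeWB(\propertyP|_B)$ whenever $A \subseteq B$), chained with $A$-QSETH applied to a $\propertyP \in \compressionOblivious(A)$, together with Fact~\ref{fact:CO} to place $\propertyP$ inside $\compressionOblivious(B)$. The concern you flag at the end is not merely a matter of taste, though: the paper's own proof also only reaches properties in $\compressionOblivious(A)$, whereas $B$-QSETH quantifies over the possibly strictly larger set $\compressionOblivious(B)$. For $\propertyP \in \compressionOblivious(B) \setminus \compressionOblivious(A)$, $A$-QSETH is vacuous and the chain gives no information about $\qTimeWB(\propertyP|_B)$, so what the argument actually establishes is that $A$-QSETH implies the restriction of $B$-QSETH to properties in $\compressionOblivious(A)$. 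The literal statement of Fact~\ref{fact:SETHinclusion} is stronger, and neither your writeup nor the paper's supplies the missing piece. Your instinct to either state the implication on the common compression-oblivious core, or to read $\gQSETH$ as a per-property family of conjectures (in which case the monotonicity step transfers each individual conjecture that is non-vacuous for $A$), is the right way to make the claim precise.
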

\begin{proof}
For Fact~\ref{fact:CO}. If $A \subseteq B$ then also for the corresponding sets of truth tables it holds that $\text{S}_\text{A} \subseteq \text{S}_\text{B}$.
If a property $\propertyP \in \compressionOblivious(\text{A})$, then $\qTimeBB(\propertyP|_{S_A}) \geq \Omega(\Query)(\propertyP)$ also implies $\qTimeBB(\propertyP|_{S_B}) \geq \qTimeBB(\propertyP|_{S_A})$ as $\text{S}_\text{B}$ is a superset of $\text{S}_\text{A}$. Therefore, $\propertyP \in \compressionOblivious(B)$.

For Fact~\ref{fact:SETHinclusion}: 
Whenever some property $\propertyP \in \compressionOblivious(\text{A})$ is hard to compute for inputs coming from $A$, this property is also $\propertyP \in \compressionOblivious(\text{B})$ by Fact~\ref{fact:CO}. Therefore, it is also hard to compute on an even wider range of inputs.
\end{proof}

Given an explicit property $\propertyP$ and a class of representation $\gamma$, it would be desirable to unconditionally prove that the property $\propertyP$ is \textit{$\gamma$-compression oblivious}\footnote{We call a property $\propertyP$ a $\gamma$-compression oblivious property if $\propertyP \in \compressionOblivious(\gamma)$.}.
This is possible for some simple properties that have query complexity $\Theta(\sqrt{N})$ like $\mathrm{OR}$, corresponding to ordinary satisfiability, and $\mathrm{AND}$.
Unfortunately, for more complicated properties, like computing the parity of the number of satisfying assignments, it turns out to be hard to find an unconditional proof that such a property is compression oblivious.
The following theorem shows a barrier to finding such an unconditional proof: proving that such a property is compression oblivious implies separating $\P$ from $\PSPACE$.


\begin{theorem}
\label{thm:PneqPSPACE}
If there exists a property $\propertyP$ such that $\Query(\propertyP)=\widetilde{\omega}(\sqrt{N})$ and $\propertyP$ is $\gamma$-compression oblivious, and for all $\epsilon>0$ we have $\propertyP \in \SPACE(N^\epsilon)$,  then $\P \neq \PSPACE$.
\end{theorem}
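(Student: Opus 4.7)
I would prove this by contrapositive, assuming $\P = \PSPACE$ and showing that no property $P$ satisfying all three hypotheses can exist. The strategy is to combine $P \in \SPACE(N^\epsilon)$ with $\P = \PSPACE$ to manufacture a fast black-box quantum algorithm for $P$ restricted to $S_\gamma$, thereby contradicting the $\widetilde{\omega}(\sqrt{N})$ lower bound that compression-obliviousness together with $\Query(P) = \widetilde{\omega}(\sqrt{N})$ supplies.

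The central ingredient is the standard padding collapse under $\P = \PSPACE$: for any space-constructible $s(n) \geq n$, padding the input of a $\SPACE(s)$ machine to length $s(n)$ makes it a linear-space problem on an input of size $s(n)$, which under $\P = \PSPACE$ lies in $\TIME$ polynomial in $s(n)$; hence $\SPACE(s(n)) \subseteq \TIME(s(n)^c)$ for a fixed constant $c$ coming from the universal $\PSPACE$-completeness simulation. I would then apply this to the white-box version of $P$: given a circuit $\circuitC$ of size $m = \poly(\log N)$, the $\SPACE(N^\epsilon)$ algorithm for $P$ can be simulated by generating bits of $\truthtable(\circuitC)$ on demand via pointwise evaluation of $\circuitC$, for total space $O(N^\epsilon + \poly(m)) = O(N^\epsilon)$. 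Because $N^\epsilon$ is super-polynomial in $m$ for any $\epsilon > 0$, the padding step applies with $s = N^\epsilon \geq m$, giving a white-box time bound of $O((N^\epsilon)^c) = N^{c\epsilon}$.

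Next I would observe that this simulation touches $\circuitC$ only through pointwise evaluations $\circuitC(x)$, which in the black-box setting are exactly the oracle answers $f(x)$. Hence the same algorithm runs in the black-box model when the input is restricted to $S_\gamma$, giving $\qTimeBB(P|_{S_\gamma}) = O(N^{c\epsilon})$. Choosing $\epsilon < 1/(2c)$ yields $\qTimeBB(P|_{S_\gamma}) = O(N^{1/2 - \delta})$ for some $\delta > 0$, contradicting the compression-oblivious lower bound $\qTimeBB(P|_{S_\gamma}) \geq \Omega(\Query(P)) = \widetilde{\omega}(\sqrt{N})$ and ruling out the existence of such a $P$.

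The hard part will be certifying that the polynomial-time algorithm produced by the $\P = \PSPACE$ simulation really accesses $\circuitC$ only pointwise, so the black-box translation is faithful. Opening up the universal $\PSPACE$-completeness reduction (for instance through \textsc{TQBF}), the input bits appear only as constants plugged into the constructed formula; since in our simulation every such ``input bit'' of the $\SPACE(N^\epsilon)$ algorithm is really a value $\circuitC(x)$, each can be supplied on demand by a single query to the $f$-oracle. Carrying out this bookkeeping carefully, and absorbing the polynomial blow-up of the reduction into the $O(\epsilon)$ exponent so that the overall bound $N^{c\epsilon}$ still survives for small enough $\epsilon$, is the most delicate point of the argument.
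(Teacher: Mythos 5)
Your high-level plan (contradict the compression-oblivious lower bound by exhibiting a cheap black-box algorithm under $\P=\PSPACE$) is the same as the paper's, but the way you try to get the cheap black-box algorithm has a genuine gap, and it is exactly where you flag ``the hard part.'' The paper closes that gap with an idea you do not have: a quantum \emph{oracle identification} step. Concretely, the paper first uses the promise $f \in S_\gamma$ (so $f$ has a $\poly(\log N)$-bit description) together with a simplified oracle-identification algorithm and $\P=\PSPACE$ to \emph{learn} a compressed description of $f$ from the oracle in $\widetilde{O}(\sqrt{N})$ quantum time, and only then runs the white-box $\SPACE(N^\epsilon)\subseteq\TIME(N^{O(\epsilon)})$ procedure on that recovered description. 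The total is $\widetilde{O}(\sqrt{N}) + O(N^{\epsilon})$, contradicting $\widetilde{\omega}(\sqrt{N})$ once $\epsilon<1/2$.

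Your substitute for this step does not work. After the padding collapse, the $\TIME(N^{c\epsilon})$ white-box algorithm is handed $\desc(\circuitC)$ as its actual input, and you have no control over how it uses that string; nothing forces it to interact with $\circuitC$ only via pointwise evaluations, so it cannot be faithfully replayed against an oracle that only answers $f(x)$. Trying to rescue this by ``opening up'' the TQBF reduction also fails: the formula produced from the space-bounded machine that simulates the $P$-algorithm on $\desc(\circuitC)$ has the bits of $\desc(\circuitC)$ plugged in as constants, not the bits of $\truthtable(\circuitC)$ — and in the black-box model those description bits are precisely what you do not have. If instead you apply TQBF to the original $\SPACE(N^\epsilon)$ machine $A$ running on the $N$-bit truth table, the formula depends on all $N$ input bits, so evaluating it does not give a sublinear-query algorithm; note also that $\SPACE(N^\epsilon)$ with $\epsilon<1$ is \emph{sublinear} space relative to the $N$-bit input, so the padding step $\SPACE(s)\subseteq\TIME(s^c)$ does not even apply at that level (it needs $s\geq$ input length). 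In short, without first reconstructing a short description of $f$ from oracle access — the oracle-identification step — there is no way to obtain $\qTimeBB(P|_{S_\gamma}) = o(\Query(P))$, which is the contradiction both you and the paper are after.
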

\begin{proof}[Proof (sketch)]
By way of contradiction, assume $\P = \PSPACE$.
We are given a promise that the circuit to whom we have black-box access to is in the set $\gamma$.
Using a simplified version of the algorithm for the oracle identification problem \cite{OracleIdentification-AmbainisEtAl-04,OIdent-kothari-14} and assuming $\P = \PSPACE$, we can extract a compressed form of the entire input using only $\widetilde{O}(\sqrt{N})$ quantum time.

As the property $\propertyP \in \SPACE(N^{\epsilon})$, using the $\P=\PSPACE$ assumption again, we can directly compute $\propertyP$ in time $O(N^{\epsilon})$ for any arbitrarily small $\epsilon$. Therefore, the total number of (quantum) steps taken is $\widetilde{O}(\sqrt{N}) + O(N^{\epsilon})$, which for an $\epsilon < \frac{1}{2}$ is in contradiction to the assumption that $\propertyP$ is $\gamma$-compression oblivious.
\end{proof}

An expanded version of the proof will be presented in a future version of the paper.
Note that SETH is already a much stronger assumption than $\P \neq \PSPACE$, therefore this observation leaves open the interesting possibility of proving that properties are compression oblivious assuming that the (Q)SETH holds for simpler properties. (For instance, these simpler properties could include OR and AND, for which it is possible to unconditionally prove that they are compression oblivious.)

\section{QSETH lower bounds for Orthogonal Vectors and Useful Proofs of Work}
\label{sec:ConsequuencesCNFQSETH}
Recall that $\DC$ denotes the set of polynomial-sized depth-2 circuits consisting of only OR and AND gates of unbounded fan-in.
Because of the simple input structure, the $\DCQSETH$ conjecture is therefore closest to the classical SETH, and implies the `basic QSETH' as introduced in Section~\ref{sec:QSETHframeworkBasic}:
\begin{corollary}
\label{cor:CNF-ORQSETH}
If $\DCQSETH$ is true then there is no bounded error quantum algorithm that solves $\CNFSATproblem$ on $n$ variables, $m$ clauses in $O(2^{(1-\delta)n/2}m^{O(1)})$ time, for any $\delta>0$.
\end{corollary}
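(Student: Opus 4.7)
}
The plan is to argue by contradiction, leveraging that $\DC$ decomposes into CNF formulas (AND-of-ORs) and DNF formulas (OR-of-ANDs), and that the OR property is already known to lie in $\compressionOblivious(\DC)$ (as the paper remarks immediately after defining $\compressionOblivious$). So assume for contradiction that for some $\delta>0$ there is a bounded-error quantum algorithm $\algoA$ solving $\CNFSATproblem$ on $n$ variables and $m$ clauses in time $O(2^{(1-\delta)n/2} m^{O(1)})$.

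From $\algoA$ I will build a white-box quantum algorithm that decides the OR property on inputs from $\DC$ in time $O(2^{(1-\delta)n/2}\operatorname{poly}(m,n))$. Given $\desc(\circuitC)$ for $\circuitC\in\DC$, inspect the top gate: if it is an AND, then $\circuitC$ is a CNF formula and $\mathrm{OR}(\truthtable(\circuitC))=1$ iff $\circuitC$ is satisfiable, which is decided by running $\algoA$ in the required time bound; if the top gate is an OR, then $\circuitC$ is a DNF formula and the OR of its truth table is $1$ iff at least one conjunctive term is consistent (contains no literal and its negation), which is checkable deterministically in time $\operatorname{poly}(m,n)$. Combining both cases yields the claimed running time.

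Next I would invoke $\DCQSETH$. Since $\mathrm{OR}\in\compressionOblivious(\DC)$, the conjecture gives $\qTimeWB(\mathrm{OR}|_{\DC}) \geq \Omega(\Query(\mathrm{OR}))$, and the quantum query complexity of OR on a string of length $N=2^n$ is $\Theta(\sqrt{N})=\Theta(2^{n/2})$ by the standard Grover lower bound. Together with the upper bound $O(2^{(1-\delta)n/2}\operatorname{poly}(m,n))$ derived from $\algoA$, this is a contradiction for any $\delta>0$, finishing the proof.

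The main thing to be careful about is the DNF case: one must confirm that depth-2 circuits with an OR at the top really do allow trivial satisfiability checking, so that the reduction from OR-on-$\DC$ to $\CNFSATproblem$ does not lose time. This is essentially immediate since each bottom-level AND is a conjunction of literals, but it is the one non-mechanical step of the argument and is what makes $\DCQSETH$ (a statement about \emph{all} of $\DC$) imply a lower bound that is phrased purely in terms of CNF inputs.
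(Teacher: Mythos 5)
Your proposal is correct and follows essentially the same route as the paper's proof: both rely on $\mathrm{OR}\in\compressionOblivious(\DC)$, the resulting $\DCQSETH$ lower bound $\qTimeWB(\mathrm{OR}|_{\DC})\ge\Omega(\Query(\mathrm{OR}))=\Omega(2^{n/2})$, and the observation that the DNF half of $\DC$ is poly-time checkable so the hardness concentrates on CNF inputs. Your phrasing is a proof by contradiction while the paper argues directly, but the substance is identical.
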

\begin{proof}
Consider the property OR: $\{0,1\}^{2^n} \rightarrow \{0,1\}$. Using the fact that OR $\in \compressionOblivious(\DC)$, as shown in the previous section, we get $\qTimeWB(\text{OR}|_{\DC}) \geq \Omega(\Query(\text{OR}))=\Omega(2^{n/2})$.
Due to the structure of the DNF formulas one can compute the property OR on DNF formulas on $n$ variables, $m$ clauses in $n^{O(1)}m^{O(1)}$ time. This implies that the hard cases in the set $\DC$ for the OR property are the CNF formulas. Therefore, $\qTimeWB(\text{OR}|_{\text{CNF}}) \geq \Omega(2^{n/2})$ where the set CNF denotes all the polynomial sized CNF formulas.
\end{proof}

In this section we present several immediate consequences of the $\DCQSETH$ conjecture, including: 
\begin{enumerate}
    \item For some problems, classical $\SETH$-based $\Omega(T)$ time lower bounds carry over to the quantum case, with $\DCQSETH$-based $\Omega(\sqrt{T})$ quantum time lower bounds using (almost) the same reduction.
    \item The \textit{Proofs of Useful Work} of Ball, Rosen, Sabin and Vasudevan \cite{uPoW-BallRosenSabinVasudevan-17} require time $\widetilde{O}(n^2)$ to solve on a quantum computer, equal to their classical complexity, under $\DCQSETH$. 
\end{enumerate}

\subsection{Quantum time lower bounds based on \texorpdfstring{$\DCQSETH$}{DEPTH2-QSETH}}
\label{sec:LowerBoundDCQSETH}
The statement of $\DCQSETH$ along with Corollary \ref{cor:CNF-ORQSETH} can give quantum time lower bounds for some problems for which we know classical lower bounds under $\SETH$ (Conjecture \ref{conj:SETH}).
\begin{corollary}
\label{cor:SETH-CNF-OR-QSETH}
Let $\propertyP$ be a problem with an $\Omega(T)$ time lower bound modulo $\SETH$. Then, $\propertyP$ has an $\widetilde{\Omega}(\sqrt{T})$ quantum time lower bound conditioned under $\DCQSETH$ if there exists a classical reduction from $\CNFSAT$ to the problem $\propertyP$ taking $O(2^{\frac{n}{2}(1-\alpha)})$ (for $\alpha>0$) time or if there exists an efficient reduction that can access a single bit of the reduction output.\footnote{Note that $\SETH$ talks about solving $\CNFSATproblem$ as opposed to bounded $k$-SAT problems.
One could also define a quantum hardness conjecture for $k$-CNF or $k$-DNF, for an arbitrary constant $k$, in the same way as the original $\SETH$.
This variant is required for reductions that use the fact that $k$ is constant, which can occur through usage of the sparsification lemma~\cite{ETH-ImpagliazzoPaturi-01}. For examples where this is necessary within fine-grained complexity, see the \emph{Matching Triangles} problem mentioned in \cite{MatchingTrian-AbboudEtAl-15} or reductions like in \cite{probsHardCNFSAT-MarekEtAl-16}.}
\end{corollary}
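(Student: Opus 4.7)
The plan is to argue by contraposition. Suppose a bounded-error quantum algorithm $\algoA$ decides $\propertyP$ on inputs of size $N$ in time $\widetilde{o}(\sqrt{T})$; more precisely, assume an improvement of the form $T^{(1-\epsilon)/2}$ for some $\epsilon > 0$. I will build a bounded-error quantum algorithm for $\CNFSATproblem$ whose running time is $O(2^{n(1-\delta)/2}\cdot m^{O(1)})$ for some $\delta > 0$, contradicting Corollary~\ref{cor:CNF-ORQSETH} (and therefore $\DCQSETH$). The key observation is that, because the given reduction is SETH-preserving, the classical lower bound $\Omega(T(N))$ on $\propertyP$ has to map onto the $\Omega(2^n)$ classical lower bound for $\CNFSATproblem$, which forces the instance size $N = N(n)$ produced from an $n$-variable CNFSAT input to satisfy $T(N(n)) = \widetilde{O}(2^n)$, so $\sqrt{T(N(n))} = \widetilde{O}(2^{n/2})$.

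In the first case the reduction itself runs in time $O(2^{n(1-\alpha)/2})$. Given a CNFSAT instance $\phi$ on $n$ variables and $m$ clauses, I would run the reduction to construct the $\propertyP$ instance $x$ explicitly, and then call $\algoA$ on $x$. The total quantum running time is bounded by
\begin{equation*}
O\!\left(2^{n(1-\alpha)/2}\right) \;+\; T(N(n))^{(1-\epsilon)/2} \;=\; O\!\left(2^{n(1-\delta)/2}\cdot m^{O(1)}\right)
\end{equation*}
for any $\delta < \min(\alpha,\epsilon)$, so Corollary~\ref{cor:CNF-ORQSETH} is violated.

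In the second case the reduction output may be too large to materialize (e.g.\ of size $2^{\Theta(n)}$), but the hypothesis supplies an efficient procedure $B(\phi,i)$ that returns the $i$-th bit $x_i$ of the reduction output in $\mathrm{poly}(n)$ time. I would simulate every quantum query $\ket{i}\ket{b}\mapsto\ket{i}\ket{b\oplus x_i}$ that $\algoA$ issues to its input by compiling $B$ into a reversible polynomial-time quantum subroutine acting on $(\phi,i)$. Running $\algoA$ against this simulated oracle yields a $\CNFSATproblem$ solver of time $T^{(1-\epsilon)/2}\cdot\mathrm{poly}(n) = 2^{n(1-\epsilon)/2}\cdot\mathrm{poly}(n)$, which again fits the form forbidden by Corollary~\ref{cor:CNF-ORQSETH}. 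The $\widetilde{\Omega}$ (rather than $\Omega$) in the conclusion absorbs precisely these $\mathrm{poly}(n) = \mathrm{polylog}(N)$ overheads.

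The main obstacle is really just bookkeeping: relating $T$ as a function of $N$ to $2^n$ as a function of $n$ through the reduction, tracking polylog factors to justify the tilde in $\widetilde{\Omega}(\sqrt{T})$, and confirming that ``single-bit access to the reduction output'' can be compiled into a quantum oracle call (i.e.\ that the bit-computation is reversible and deterministic from $(\phi,i)$). All of this is standard for the fine-grained reductions (\OV, \editdist, etc.) that this corollary is intended to cover, so no new technical machinery should be required.
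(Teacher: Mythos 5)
Your proposal is correct and captures the general contraposition argument that the paper only sketches through its subsequent discussion of the OV and edit-distance examples: there the paper observes exactly what you call case 2, namely that the reduction output can be accessed bit-by-bit in $\mathrm{poly}(n)$ time and compiled into a quantum oracle, so a fast quantum algorithm for $\propertyP$ combined with this simulated oracle would violate the hardness of $\CNFSATproblem$ from Corollary~\ref{cor:CNF-ORQSETH}. Your case 1 (materializing the reduction output because the reduction itself runs in $O(2^{n(1-\alpha)/2})$ time) and your bookkeeping relating $T(N(n))$ to $2^n$ are both consistent with the paper's intent, and no genuinely different technique is used.
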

As examples we will consider the $\OV$ and the $\editdist$ problem.
The $\OV$ ($\OVabreviation$) problem is defined as follows. Given two sets $U$ and $V$ of $n$ vectors, each over $\{0,1\}^d$ where $d=\omega(\log n)$, determine whether there exists a $u\in U$ and a $v \in V$ such that $\Sigma_{l \in [d]}u_lv_l=0$.
In \cite{CNFSATtoOV-RWilliams-05}, Williams showed that $\SETH$ implies the non-existence of a sub-quadratic classical algorithm for the $\OVabreviation$ problem.
In the quantum case the best-known query lower bound is $\Omega(n^{2/3})$, which can be achieved by reducing the $\collision$ problem to the $\OV$ problem; however, the known quantum time upper bound is $\widetilde{O}(n)$ \cite{QAlgForSETHbasProbs-Jorg-19}.
First note that we cannot use Williams' classical reduction directly, since a hypothetical quantum algorithm for $\OVabreviation$ expects quantum access to the input, and writing down the entire reduction already takes time $2^{n/2}$.
Instead, observe that the reduction produces a separate vector for each partial assignment: let $t(n)$ be the time needed to compute a single element of the output of the reduction, then $t(n) = \poly(n)$, which is logarithmic in the size of the total reduction.
Let $N=O^*(2^{n/2})$ be the size of the output of the reduction of~\cite{CNFSATtoOV-RWilliams-05}, for some CNF formula with $n$ variables.
Any quantum algorithm that solves $\OVabreviation$ in time $N^\alpha$, can solve $\CNFSAT$ in time $t(n) O^*(2^{\alpha n/2}) = O^*(2^{\alpha n/2})$.\footnote{We use $O^*$ to denote asymptotic complexity ignoring polynomial factors.}
Assuming $\DCQSETH$, this implies that a quantum algorithm requires time $\tilde{\Theta}(N)$ to solve $\OVabreviation$ for instances of size $N$.

See the recent results by Aaronson, Chia, Lin, Wang, and Zhang~\cite{CP-AaronsonEtAl-19} for more examples of reductions from (a variant of) QSETH, that also hold for the basic QSETH of our framework.
Additionally, there the authors define the notion of \emph{Quantum Fine-grained Reductions} more generally, and present a study of $\OVabreviation$ that also includes the case of constant dimension.

The next example we consider is the $\editdist$ problem. The $\editdist$ problem is defined as follows. Given two strings $a$ and $b$ over an alphabet set $\Sigma$, the edit distance between $a$ and $b$ is the minimum number of operations (insertions, deletions, substitutions) on the symbols required to transform string $a$ to $b$ (or vice versa).
A reduction by \cite{EditDist-BackursIndyk-15} shows that if the edit distance between two strings of length $n$ can be computed in time $O(n^{2-\delta})$ for some constant $\delta > 0$, then satisfiability on CNF formulas with $n$ variables and $m$ clauses can be computed in $O(m^{O(1)}\cdot 2^{(1-\frac{\delta}{2})n})$ which would imply that $\SETH$ (Conjecture~\ref{conj:SETH}) is false.
Just like in the $\OV$ case, we observe that the classical reduction from $\CNFSAT$ to $\editdist$ is local, in the sense that accessing a single bit of the exponentially-long reduction output can be done in polynomial time:
Every segment of the strings that are an output of the reduction, depend only on a single partial satisfying assignment, out of the $2^{n/2}$ possible partial assignments.

This observation directly lets us use the reduction of \cite{EditDist-BackursIndyk-15} to give a quantum time lower bound of $\widetilde{\Omega}(n)$ for the $\editdist$ problem, where $n$ here is the length of the inputs to $\editdist$, conditioned on $\DCQSETH$.
However, an unconditional quantum query lower bound of $\Omega(n)$ can also be easily achieved by embedding of a problem with high query complexity, such as the majority problem, in an edit distance instance.

We witness that with $\DCQSETH$ conjecture, the $\SETH$-based fine-grained lower bounds at best transfer to a square root lower complexity in the quantum case. This is definitely interesting on its own, but we are aiming for larger quantum lower bounds, which is why we focus on our more general framework.

\subsection{Quantum Proofs of Useful Work}
\label{sec:QuPoW}

Other applications of $\DCQSETH$ include providing problems for which \textit{Proofs of Useful Work (uPoW)} can be presented in the quantum setting. The paper \cite{uPoW-BallRosenSabinVasudevan-17} proposes uPoW protocols that are based on delegating the evaluation of low-degree polynomials to the prover. They present a classical uPoW protocol for the $\OV$ problem (OV) whose security proof is based on the assumption that OV needs $\Omega(n^{2-o(1)})$ classical time in the worst case setting, implying that the evaluation of a polynomial that encodes the instance of $\OVabreviation$ has average-case hardness.
At the end of this protocol, the verifier is able to compute the number of orthogonal vectors in a given instance.

Therefore, the same protocol also works to verify the solutions to $\oplus$OV, where $\oplus$OV denotes the parity version of OV, i.e., given two sets $U$, $V$ of $n$ vectors from $\{0,1\}^d$ each, output the parity of number of pairs $(u,v)$ such that $u \in U$, $v \in V$ and $\Sigma_{l\in [d]} u_lv_l=0$, where $d$ is taken to be $\omega(\log n)$.
Assuming $\DCQSETH$ and assuming PARITY $\in \compressionOblivious(\DC)$ we get that $\oplus \CNFSATproblem$ takes $\Omega(2^n)$ quantum time.
Due to the reduction\footnote{Note that here one can use the classical reduction from $\CNFSAT$ to $\OV$ that runs in $\widetilde{O}(2^{n/2})$.} given in \cite{CNFSATtoOV-RWilliams-05}, this protocol then implies a conditional quantum time lower bound of $\Omega(n^2)$ for the $\oplus$OV problem.
Therefore, the uPoW protocol by \cite{uPoW-BallRosenSabinVasudevan-17} also requires quantum provers to take time $\widetilde{\Omega}(n^2)$.

\section{Lower bound for edit distance using \texorpdfstring{$\NCQSETH$}{NC-QSETH}}
\label{sec:ConsequuencesNCQSETH}
In this section we discuss a consequence of our $\NCQSETH$ conjecture: Quantum time lower bound for the $\editdist$ problem.
Edit distance (also known as the \textit{Levenshtein distance}) is a measure of dissimilarity between two strings.
For input strings of length $n$, the well known Wagner--Fischer algorithm (based on dynamic programming) classically computes the edit distance in $O(n^2)$ time. Unfortunately, all the best known classical (and quantum) algorithms to compute the edit distance are also nearly quadratic. As mentioned earlier, the result by \cite{EditDist-BackursIndyk-15} proves that these near quadratic time bounds might be tight. They show that a sub-quadratic classical algorithm for computing the $\editdist$ problem would imply that $\SETH$ (refer to Conjecture~\ref{conj:SETH}) is false.
$\SETH$ also implies quadratic lower bounds for many other string comparison problems like $\LCS$, $\DTW$ whose trivial algorithms are also based on dynamic programming \cite{EditDist-BackursIndyk-15, QuadLowerBounds-BringmannKunnemann-15}. Bouroujeni \emph{et al.}~in \cite{approxQEdit-BoroujeniEtAl-18} give a sub-quadratic quantum algorithm for approximating edit distance within a constant factor which was followed by a better classical algorithm in \cite{approxCEdit-ChakrabortyEtAl-18} by Chakraborty \emph{et al.}
However, no quantum improvements over the classical algorithms in the exact case are known to the best of our knowledge. 
Investigating why this is the case is an interesting open problem, which can be addressed in two directions. We formulate the following questions for the example of the $\editdist$ problem.

\begin{enumerate}
    \item Is there a bounded-error quantum algorithm for $\editdist$ that runs in a sub-quadratic amount of time?
    \item Can we use a different reduction to raise the linear lower bound for $\editdist$ that we achieve under $\DCQSETH$?
\end{enumerate}

While the first question still remains open, we address the second question in this section. Independently from our results, Ambainis \emph{et al.}~\cite{Dyck-AmbainisEtAl-19} present a quantum query lower bound of $\Omega(n^{1.5-o(1)})$ for the $\editdist$ problem when solved using the most natural approach by reducing $\editdist$ to connectivity on a 2D grid.
However, that doesn't rule out the possibility of other $\widetilde{O}(n^{1.5-\alpha})$ quantum algorithms for the $\editdist$ problem, for $\alpha > 0$.
In this section, using (a promise version of) the $\NCQSETH$ conjecture we prove a conditional quantum time lower bound of $\Omega(n^{1.5})$ for the $\editdist$ problem.

\subsection{The Edit Distance problem and the Alignment Framework}
\label{sec:EditDistAlignFrame}

Formally, the $\editdist$ problem is defined as follows:
\begin{definition}[The $\editdist$ problem]
\label{def:EditDistance}
Given two strings $a$ and $b$ over an alphabet set~$\Sigma$, the edit distance between $a$ and $b$ is the minimum number of operations (insertions, deletions, substitutions) on the symbols required to transform string $a$ to $b$ (or vice versa).
\end{definition}

One way to visualize the $\editdist$ problem is by using the alignment framework, also mentioned in  \cite{QuadLowerBounds-BringmannKunnemann-15}.  Let $a$, $b$ be two strings, of length $n$ and $m$ respectively, and let $n \geq m$. An \textit{alignment} is a set $A=\{(i_1,j_1),\dots,(i_k,j_k)\}$ with $0 \leq k \leq m$ such that $1 \leq i_1 <\dots<i_k \leq n$ and $1 \leq j_1 < \dots <j_k \leq m$. The set $\mathcal{A}_{n,m}$ denotes the set of all alignments over the index sets $[n]$ and $[m]$. We now claim the following:
\begin{fact}
Let $a$,$b$ be strings of length $n,m$, respectively. The edit distance between $a$ and $b$ is
\begin{equation*}
    \delta(a,b)=\min_{A \in \mathcal{A}_{n,m}} \alignCost(A),
\end{equation*}
where $n=|a|, m=|b|$, $\alignCost(A)=\sum_{(i,j) \in A} \delta(a[i],b[j]) + n +m -2|A|$, and $a[i]$ denote the $i^{th}$ symbol of string $a$, while $b[j]$ denotes the $j^{th}$ symbol of string $b$.
\end{fact}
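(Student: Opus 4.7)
The plan is to establish the claim by proving the two inequalities $\delta(a,b)\le\min_A\alignCost(A)$ and $\delta(a,b)\ge\min_A\alignCost(A)$ separately. The intuition, already implicit in the definition of $\alignCost$, is that an alignment $A=\{(i_1,j_1),\dots,(i_k,j_k)\}$ encodes a concrete edit script: every unpaired index of $a$ is deleted, every unpaired index of $b$ is inserted, and every matched pair $(i,j)\in A$ contributes $\delta(a[i],b[j])\in\{0,1\}$ (zero if the symbols agree, one if a substitution is needed).

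For the upper bound I would, given any $A\in\mathcal{A}_{n,m}$, write down an explicit edit script of cost $\alignCost(A)$. Sweep the two strings left to right: indices of $[n]$ not appearing as a first coordinate of $A$ are deleted from $a$, indices of $[m]$ not appearing as a second coordinate are inserted into $b$, and for each pair $(i_\ell,j_\ell)\in A$ one either leaves $a[i_\ell]$ unchanged (cost $0$) or substitutes it with $b[j_\ell]$ (cost $1$). The monotonicity $i_1<\dots<i_k$ and $j_1<\dots<j_k$ guarantees that these operations can be scheduled in a consistent order. The total cost of the script is
\begin{equation*}
\sum_{(i,j)\in A}\delta(a[i],b[j]) \;+\; (n-|A|) \;+\; (m-|A|) \;=\; \alignCost(A),
\end{equation*}
and minimising over $A$ gives $\delta(a,b)\le\min_{A}\alignCost(A)$.

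For the reverse inequality I would pass through the standard edit-distance dynamic program: let $D(i,j)$ denote the edit distance between the length-$i$ prefix of $a$ and the length-$j$ prefix of $b$, so that $D(i,j)=\min\{D(i-1,j)+1,\;D(i,j-1)+1,\;D(i-1,j-1)+\delta(a[i],b[j])\}$, with boundary values $D(0,j)=j$ and $D(i,0)=i$. Backtracking from $(n,m)$ to $(0,0)$ along any optimal choice at each cell emits a pair $(i,j)$ every time the diagonal transition is taken and emits nothing when a horizontal or vertical transition is taken. The emitted pairs are strictly decreasing in both coordinates along the backtrace (hence, after reversal, strictly increasing), so they form a valid element $A\in\mathcal{A}_{n,m}$. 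Summing the per-step contributions of the DP yields $D(n,m)=\sum_{(i,j)\in A}\delta(a[i],b[j]) + (n-|A|) + (m-|A|) = \alignCost(A)$, whence $\delta(a,b)\ge\min_A\alignCost(A)$.

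The only delicate point, and the one I would take care to spell out, is checking that the backtrace really produces a valid alignment (strict monotonicity in both coordinates) and that the bookkeeping between the $\sum_{(i,j)\in A}\delta(a[i],b[j])$ term and the insertion/deletion correction $n+m-2|A|$ matches the cumulative DP cost exactly; neither is hard, but these are precisely the places where a miscount would slip in. Combining the two inequalities yields the desired equality.
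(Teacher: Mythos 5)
Your proof is correct and supplies the detail that the paper's two-sentence sketch omits: the paper simply asserts the correspondence between alignments and edit scripts and treats both directions as immediate, while you make the $\delta(a,b)\le\min_A\alignCost(A)$ direction concrete by scheduling an explicit edit script of the stated cost, and the reverse inequality rigorous via the standard DP backtrace producing a valid alignment. This is the same underlying correspondence the paper invokes, carried out carefully rather than a genuinely different route, and your bookkeeping of the $n+m-2|A|$ insertion/deletion term against the per-cell DP cost is handled correctly.
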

\begin{proof}
There are many ways to transform the string $a$ into the string $b$ and each alignment $A \in \mathcal{A}_{n,m}$ specifies one such way.
For any alignment $A \in \mathcal{A}_{n,m}$ the $\alignCost(A)$ denotes the number of operations (insertions, deletions, substitutions) required to transform $a$ to $b$ under the alignment $A$.
As edit distance is defined to be the \emph{minimum} number of operations required to transform $a$ to $b$, we minimize the $\alignCost$ over all the alignments in $\mathcal{A}_{n,m}$ to get the edit distance $\delta(a,b)$.
\end{proof}
We chose the alignment framework to visualize the $\editdist$ problem because in this framework the edit distance between two strings can be related to the sum of edit distance between pairs of \emph{some} symbols from these two strings, a recursive behaviour that we will extensively use in the following results.

\subsection{Reduction from \texorpdfstring{$\text{BP-PP}_{\text{edit}}$}{BP-PPedit} to the Edit Distance problem}
\label{sec:BPPedittoEditDistance}
We present a conditional quantum time lower bound for the $\editdist$ problem as one of the first consequences of our $\NCQSETH$. First we define a promise property $\propertyPPedit$. We then give an efficient reduction from the problem of computing the property $\propertyPPedit$ on truth tables of some\footnote{The mention of \emph{some} is important because of two reasons: (1) The property $\text{PP}_{\text{edit}}$ is a promise property defined on truth tables of some branching programs. (2) As a part of the reduction for a given branching program as an input we construct gadgets whose sizes depend on the size of the input, and to avoid having the length of the gadgets be too large, we restrict ourselves to branching programs of size $2^{o(\sqrt{n})}$.} non-deterministic branching programs \cite{CCBook-AroraBarak-09} to the $\editdist$ problem. We provide a quantum time lower bound for computing the property $\propertyPPedit$ on a set of non-deterministic branching programs using the promise version of $\NCQSETH$, which because of the reduction translates to a conditional quantum time lower bound of $\Omega(n^{1.5})$ for the $\editdist$ problem.

The first part of our reduction mimics the approach in the \cite{EditDist-AbboudEtAl-16} paper.
Given a non-deterministic branching program $S$ (Definition~\ref{def:branchingPrograms} in Appendix~\ref{sec:AppBranchingPrograms}) with $n$~input variables, we do the following: Let $X_1=\{x_1, x_2,\dots,x_{{n/2}}\}$ and $X_2=\{x_{{n/2}+1}, x_{{n/2}+2},\dots,x_{n}\}$ be the first and the last half of the input variables to $S$, respectively.
Let $A=(a_1,a_2,\dots,a_{2^{n/2}})$ and $B=(b_1,b_2,\dots,b_{2^{n/2}})$ be two sequences containing all the elements from the set $\{0,1\}^{n/2}$ in the lexicographical order such that every pair $(a,b) \in A \times B$ together forms an input to $S$.
For each set $A$ and $B$, the reduction constructs two long sequences $x$ and $y$, such that these sequences are composed of subsequences (also referred to as \textit{gadgets}) that correspond to elements of $A$ and $B$, respectively.
We observe that computing the edit distance between these sequences $x$ and $y$ is equivalent to computing the property $\propertyPPedit$ of the truth table of the branching program $S$.
Therefore, we establish a connection between the $\text{BP-PP}_{\text{edit}}$ problem (Definition~\ref{def:BP-P}) which is solvable in exponential time and the $\editdist$ problem that is solvable in polynomial time.

\begin{definition}[The $\propertyPPedit$ property]
\label{def:PP_EditProperty}
Let $\matrixM$ be a boolean matrix of size $K \times L$ where $\matrixM_{ij}=\{0,1\}$ denotes the entry in the $i^{th}$ row and the $j^{th}$ column. 
We define a \textbf{path} $\pathR=((i_1,j_1), (i_2,j_2),\dots,(i_k,j_k))$ as a sequence of positions in the matrix $\matrixM$ which satisfy the following conditions:

\begin{enumerate}
    \item The column indices in a path are ordered, i.e.,\ $1= j_1 \leq j_2 \leq \dots\leq j_k=L$. This ensures that the path can only start from a cell in the first column and must end in a cell in the last column and the path progresses from left to right in the matrix.
    \item For all $p\in [k-1]$, either $i_{p+1}= i_{p}$ or $i_{p+1}>i_{p}$ or $i_{p+1}<i_{p}$. If $i_{p+1}=i_{p}$ then $j_{p+1}=j_p+1$. However, if $i_{p+1} \ne i_{p}$ then we say there is a jump to another row at $(i_p,j_p)$. When jumping to a row above, i.e.\ when $i_{p+1} < i_p$, then $j_{p+1}=j_p + i_p-i_{p+1}$. Whereas, while jumping to a row below, i.e.\ when $i_{p+1} > i_p$, then $j_{p+1}=j_p$.
    \item Finally, $\forall p$ such that $1<p<k$, if $i_p \neq i_{p-1}$ then $i_{p+1} = i_p$. 
    
\end{enumerate}
Let $\setOfPaths_{K,L}$ be a set of all possible \emph{paths} for a matrix of size $K \times L$. The \emph{cost} associated with a path $\pathR$ for a given matrix $\matrixM$ depends on the entries of $\matrixM$ and is defined as:
\begin{equation*}
    \pathCost(\matrixM, \pathR, \mu)=\sum_{(i_p,j_p) \in \pathR} C_{\matrixM_{i_pj_p}} + \sum_{(i_p,j_p) \in \pathR, i_p \neq i_{p+1}} \underbrace{(C_{jump}|i_{p+1}-i_p| + \mu - C_{\matrixM_{i_pj_p}}-C_{\matrixM_{i_{p+1}j_{p+1}}} }_{\textit{Jump costs}}),
\end{equation*}
where $C_0$, $C_1$ with ($C_1 < C_0$) are some fixed constants, $C_{jump}$ depends on values of $T$ and $S_G$\footnote{The constants $C_0=Q$ and $C_1=Q-\rho$ where as $C_{jump}=2T+S_G$ where $S_G$ is a constant but the variable $T$ is not. The details about the constants $Q,\rho, S_G$ and the variable $T$ are mentioned in Theorem \ref{Thm:BPSATtoEDIT}.} and $\mu$ is an integer parameter between $[0,Q]$ for some constant $Q$.
We define:
\begin{equation*}
    \minPathCost(\matrixM, \mu) = \min_{\pathR \in \setOfPaths_{K,L}}\pathCost(\matrixM, \pathR, \mu).
\end{equation*}

Given a fixed threshold value $T_r$\footnote{We fix $T_r=\frac{3L}{4}C_0+\frac{L}{4}C_1$.}, the property $\propPedit: \{0,1\}^K \times \{0,1\}^L \times \{0,1,\dots ,Q\} \rightarrow \{0,1\}$ is defined as follows:
\begin{equation*}
    \propPedit(\matrixM, \mu)=\begin{cases}
    1, & \text{if } (\minPathCost(\matrixM, \mu) < T_r),\\ 
    0 & \text{if } (\minPathCost(\matrixM, \mu) \geq T_r).
  \end{cases}
\end{equation*}
We now define a promise version of the $\propPedit$ property, namely $\propertyPPedit: \{0,1\}^K \times \{0,1\}^L \rightarrow \{0,1\}$ as follows:
\begin{equation*}
    \propertyPPedit(\matrixM)=\begin{cases}
    1, & \text{if } (\propPedit(\matrixM, \mu=Q)=1),\\
    0, & \text{if } (\propPedit(\matrixM, \mu=0)=0).
  \end{cases}
\end{equation*}
\end{definition}

\begin{definition}[$\text{BP-PP}_{\text{edit}}$ problem]
\label{def:BP-P}
Given a non-deterministic branching program $S$ with $n$ input variables, decide if $\propertyPPedit(\matrixM^{\truthtable(S)})=1$. Here $\truthtable(S)$ denotes the truth table of the function computed by the branching program $S$ and $\matrixM^{\truthtable(S)}$ denotes the $\encodingName$\footnote{$\encodingName$: Let the truth table $\truthtable(S)=X_1X_2 \dots X_{2^n}$ be of length $2^n$. Then the matrix $\matrixM^{\truthtable(S)}$ of size $(2^{{n/2}+1}-1) \times 2^{{n/2}}$ (refer to Figure~\ref{fig:MatrixM}) is generated in the following way: 
\begin{equation*}
    \matrixM^{\truthtable(S)}_{ij}=\begin{cases}
    X_{2^{{n/2}}(i+j-2^{{n/2}})+j}, & \text{if } (0 < (i+j-2^{{n/2}}) \leq 2^{{n/2}}),\\
    0, &\text{otherwise.}
    \end{cases}
\end{equation*}
Here $\matrixM^{\truthtable(S)}_{ij}$ denotes the entry at the $i^{th}$ row and the $j^{th}$ column of the matrix $\matrixM^{\truthtable(S)}$.} of $\truthtable(S)$.
\end{definition}

We now provide the reduction from the $\text{BP-PP}_{\text{edit}}$ promise problem to the $\editdist$ problem. The main lemmas and facts pertaining to the reduction are mentioned in this section while the detailed proofs are given in Appendix \ref{sec:AppReduction}.

\begin{definition}[The set $\mathcal{S}$]
\label{def:mathcalS}
Given a branching program $S$, let $\desc(S)$ denote a standard encoding of $S$ as a binary string. The set $\mathcal{S}$ denotes the set of non-deterministic branching programs with $n$ input variables such that:
\begin{equation*}
    \mathcal{S}=\{S \text{ }| \text{ } \matrixM^{\truthtable(S)} \in \propertyPPedit^{-1}(0) \cup \propertyPPedit^{-1}(1)\}.
\end{equation*}
 
\end{definition}

Having defined the $\text{BP-PP}_{\text{edit}}$ problem and the set $\mathcal{S}$, we now present the main theorem of this section.

\begin{theorem}\label{Thm:BPSATtoEDIT}
There is a reduction from the $\text{BP-PP}_{\text{edit}}$ problem on non-deterministic branching programs of size $2^{o(\sqrt{n})}$ (length $Z$ and width $W$) from set $\mathcal{S}$ to an instance of the $\editdist$ problem on two sequences of length $N = Z^{O(\log W)}2^{n/2}$, and the reduction runs in $O(N)$ time.
\end{theorem}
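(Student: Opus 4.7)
The plan is to combine the branching-program-to-edit-distance gadget construction of Abboud, Hansen, Vassilevska Williams, and Wang~\cite{EditDist-AbboudEtAl-16} with the alignment framework of Section~\ref{sec:EditDistAlignFrame}, and then verify that the resulting edit distance encodes $\minPathCost(\matrixM^{\truthtable(S)}, \mu)$ up to an additive constant. First, for each half-assignment $a \in A$ and $b \in B$ I would build sequences $G_a$ and $H_b$ of length $Z^{O(\log W)}$ such that $\delta(G_a, H_b) = C_{S(a,b)}$, using the standard recursive assignment gadget of~\cite{EditDist-AbboudEtAl-16}: each of the $O(\log W)$ levels halves the relevant sub-branching-program's width and multiplies the gadget length by $O(Z)$. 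Under the hypothesis $|S| = 2^{o(\sqrt{n})}$, this yields $Z^{O(\log W)} = 2^{o(n)}$.

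Next I would concatenate the $G_a$'s in lexicographic order, separated by long fillers of a fresh symbol, to form $x$, and analogously construct $y$ from the $H_b$'s. The total length is $N = 2^{n/2} Z^{O(\log W)}$, and since every gadget is computed locally from $\desc(S)$ and the corresponding half-assignment in time $Z^{O(\log W)}$, writing down $x$ and $y$ takes $O(N)$ time. The crucial property, ensured by the filler lengths, is that any optimal alignment must match each $G_a$ either against a single $H_b$ (paying $C_{S(a,b)}$) or leave it entirely unaligned (paying a per-gadget skip cost $S_G$ plus a re-synchronization cost $T$). An alignment then naturally induces a walk $\pathR$ in $\matrixM^{\truthtable(S)}$: the $a$-th gadget of $x$ aligned against the $b$-th of $y$ selects row $i = 2^{n/2}+a-b$, whose column-$j$ entry equals $S(a,b)$; a run of consecutive aligned pairs with fixed offset becomes a horizontal stretch along that row, and a change of offset becomes a vertical jump whose cost matches the formula of Definition~\ref{def:PP_EditProperty} once we set $C_{\text{jump}} = 2T + S_G$ and let $\mu$ absorb the global filler slack. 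Summing yields $\delta(x,y) = c_0 + \pathCost(\matrixM^{\truthtable(S)}, \pathR, \mu)$, and minimising gives $\delta(x,y) = c_0 + \minPathCost(\matrixM^{\truthtable(S)}, \mu)$. The promise of $\propertyPPedit$ says $\minPathCost(\matrixM, Q) < T_r$ in the ``yes'' case and $\minPathCost(\matrixM, 0) \geq T_r$ in the ``no'' case, and the induced $\mu$ lies in $[0,Q]$, so testing $\delta(x,y) < c_0 + T_r$ decides $\text{BP-PP}_{\text{edit}}$.

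The main obstacle is the path-alignment correspondence above. One direction---every path $\pathR$ is realized by some alignment---is direct from the construction. The harder direction is showing that no ``unintended'' alignment, for instance one that splits a gadget across two counterparts, aligns filler against gadget, or skips fewer symbols than the path model accounts for, can beat a path-induced alignment. This is standard alignment-gadget bookkeeping in the style of~\cite{EditDist-AbboudEtAl-16, QuadLowerBounds-BringmannKunnemann-15}, but it is precisely what forces the jump-cost formula of Definition~\ref{def:PP_EditProperty} to take its specific shape, with $\mu$ introduced to absorb exactly the slack an optimal alignment can exploit. Once the constants $C_0 = Q$, $C_1 = Q - \rho$, $S_G$, and $T$ are chosen consistently with the filler lengths, this accounting is mechanical and the quantitative bounds $N = Z^{O(\log W)} 2^{n/2}$ and reduction time $O(N)$ drop out.
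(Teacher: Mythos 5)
You have the right broad plan --- use the branching-program gadgets of~\cite{EditDist-AbboudEtAl-16}, concatenate them with separator blocks of length $T$, and establish a correspondence between alignments and paths in $\matrixM^{\truthtable(S)}$ --- and your size and time accounting ($N = Z^{O(\log W)} 2^{n/2}$, $Z^{O(\log W)} = 2^{o(n)}$ from $|S| = 2^{o(\sqrt{n})}$, local generation giving an $O(N)$-time reduction) is correct. The gap is in the construction of $x$ and $y$. You build them symmetrically, each as $2^{n/2}$ gadgets with separators of a fresh symbol, but this cannot realize the offsets the reduction needs. The property $\propertyPPedit$ minimizes over paths in the $(2\cdot 2^{n/2}-1)\times 2^{n/2}$ matrix $\matrixM^{\truthtable(S)}$; a path may start in \emph{any} row of the first column, so the alignment must be free to match $y$'s first gadget against essentially any gadget of $x$ at a cost independent of the chosen offset. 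If $x$ and $y$ contain the same number of gadgets and both are bordered by the same fresh-symbol fillers, any shift by $s$ gadget positions leaves $s$ gadgets at one end of one string with no counterpart, so you pay for the shift. Your $\delta(x,y)$ would then effectively track only the main diagonal of $\matrixM^{\truthtable(S)}$, not $\minPathCost$.

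The paper's construction is asymmetric in two ways you omit, and both are essential. First, $x$ is padded on \emph{both} sides with $|A|-1$ copies of a dummy gadget $r$ with $\delta(r,\overline{G}(b)) = Q$ for all $b$, so $x$ has $3\cdot 2^{n/2}-2$ gadget slots --- exactly the index count in the coarse alignments $\mathcal{C}_{3\cdot 2^{n/2}-2,\,2^{n/2}}$ of Lemma~\ref{thm:EditDistanceCourseAlignment} --- and every diagonal band of $\matrixM$ corresponds to a contiguous window of $x$'s gadgets. Second, $y$ is wrapped on both sides by $7^{|x|}$, with $7$ foreign to $x$'s alphabet, so the unused parts of $x$ align against the $7$s at the fixed cost $2|x|$, making all offsets equally cheap to set up. Without both devices Lemma~\ref{thm:CorrectnessOfReduction} does not go through. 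Two smaller points you pass over: the raw gadgets of~\cite{EditDist-AbboudEtAl-16} are not of equal length, so the paper first pads them over $\{0,1,2\}$ to a common size $S_G$ (Theorem~\ref{thm:AppGadgetSizePreserved}); and the ``no unintended alignments'' direction is not a direct port of~\cite{QuadLowerBounds-BringmannKunnemann-15} but requires the coarse-alignment machinery (Definition~\ref{def:course_alignment}, Lemmas~\ref{thm:Appx2Format}, \ref{thm:AppSeparatorAligns}, \ref{thm:AppEditDistanceCourseAlignment}, \ref{thm:AppPtoCmain}, \ref{thm:AppEitherPathorD}) to match the jump-cost model of Definition~\ref{def:PP_EditProperty}.
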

\begin{proof}
The reduction is as follows: Let $S \in \mathcal{S}$ be a branching program with $n$ input variables of size $2^{o(\sqrt{n})}$. Let $X_1=\{x_1, x_2, \dots, x_{{n/2}}\}$ and $X_2=\{x_{{n/2}+1}, x_{{n/2}+2}, \dots, x_n\}$ be the inputs to the branching program $S$. Let $A = (a_1, a_2, \dots,a_{2^{n/2}})$ and $B=(b_1, b_2,\dots,b_{2^{n/2}})$ such that both the sequences contain all the elements from the set $\{0,1\}^{n/2}$ in the lexicographical order. Then construct \textit{gadget sequences} such that the following conditions are met:
\begin{enumerate}
    \item Using the construction mentioned by~\cite{EditDist-AbboudEtAl-16}, construct the \textit{gadget}~$G(a)$ (of length~$Z^{O(\log W)}$, using symbols from $\Sigma=\{0,1\}$) for each $a \in A$ and another \textit{gadget}~$\overline{G}(b)$ (also of length $Z^{O(\log W)}$ using symbols from $\Sigma'=\{0,1\}$) for each $b \in B$, such that $\forall (a,b) \in A \times B$, $\delta(G(a),\overline{G}(b))= Q'-\rho$ if the pair $(a,b)$ satisfies the branching program $S$ and $\delta(G(a),\overline{G}(b))= Q'$ otherwise, for some constants $Q'$ and $\rho$ $\in \mathbb{Z}$. Recall that $\delta(G(\cdot),\overline{G}(\cdot))$ refers to the edit distance between the gadgets $G(\cdot)$ and $\overline{G}(\cdot)$. We slightly modify the construction of gadgets to match the sizes of these gadgets $|G(\cdot)|=|\overline{G}(\cdot)|$ denoted by $S_G$. Refer to the Theorem \ref{thm:AppGadgetSizePreserved} in Appendix \ref{sec:AppReduction} for a simple proof on how we achieve this for another constant $Q$, same $\rho$ and a bigger alphabet set $\Sigma=\{0,1,2\}$.
    \item We set $T=\omega(\log N)$ where $N=Z^{O(\log W)}2^{n/2}$ and construct the final sequences in the following way:
    \begin{equation*}
    \label{eq:AlignmentGadgetX1}
        x:= (\bigcirc_{i=1}^{|A|-1}5^Tr6^T)(\bigcirc_{a \in A}5^TG(a)6^T)(\bigcirc_{i=1}^{|A|-1}5^Tr6^T)
    \end{equation*} 
    \begin{equation*}
    \label{eq:AlignmentGadgetY1}
       y:=7^{|x|}(\bigcirc_{b \in B}5^T\overline{G}(b)6^T)7^{|x|}
    \end{equation*} 
    Here $r$ is a dummy gadget sequence such that $\delta(r,\overline{G}(b))= Q$, $\forall b \in B$ \cite{EditDist-AbboudEtAl-16} and $5^T$ (or $6^T$) known as a \emph{separator} represents the symbol $5$ (or $6$) occurring $T$ times. The choice of $T$ is made in a way that in an \emph{optimal alignment}\footnote{We say an alignment is an optimal alignment if the cost of the alignment is the edit distance.} the separators from $x$ align with the separators in $y$, hence forcing the gadgets to align with other gadgets.
\end{enumerate}  

In Section \ref{sec:EditDistAlignFrame} we saw how edit distance between two strings can be associated with minimum $\textit{alignment-cost}$ when viewed as an alignment of symbols.
Unfortunately, we cannot use the alignment framework in the same way at a gadget level.
Therefore, instead of using the $\textit{alignment}$ framework, which by definition is a set of pairs such that pairing is between two indices, we define a variant, which we call the $\textit{coarse alignment}$, which is a sequence of pairs, but the pairing here can be between an index and a sequence of indices (or vice-versa).
We observe that the edit distance $\delta(x,y)$ can be expressed as the cost of an optimal \emph{coarse alignment} of gadgets in $x$ and $y$ as presented in Lemma~\ref{thm:EditDistanceCourseAlignment}.

\begin{definition}[Coarse alignment]
\label{def:course_alignment}

Let $n',n'' \in [n]$ and $n'<n''$, we say $\mathcal{I}_{n',n''}=\{n',n'+1,\dots,n''\}$ and let $\mathcal{J}_{m}=[m]$ be two sets of indices. \\A \textit{coarse alignment} $A$ is defined as a sequence $((p_1,q_1),(p_2,q_2),\dots,(p_k,q_k))$, such that:
\begin{enumerate}
    \item $\forall i \in [k], p_i$ are sequences and $\bigcup_{i=1}^{k}p_i=\mathcal{I}_{n',n''}$. Similarly, $\forall i \in [k], q_i$ are also sequences and $\bigcup_{i=1}^{k}q_i=\mathcal{J}_{m}$.
    \item $\forall i \in [k], p_i\neq \emptyset$ and $\forall i \in [k], q_i\neq \emptyset$. 
    \item $\forall i,j \in [k],$ if $i\neq j, p_i \cap p_j=\emptyset$. Similarly, $q_i \cap q_j=\emptyset$, whenever $i\neq j$.
    \item $\forall i,j \in [k],$ $\forall u \in p_i$ and $\forall v \in p_j$ , $u < v$ if $i < j$. Similarly, $\forall u \in q_i$ and $\forall v \in q_j$ , $u < v$ whenever $i < j$.
    \item $\forall i \in [k]$, $|p_i|=1$ or $|q_i|= 1$.
\end{enumerate}
Let the set $\mathcal{A}_{n',n'',m}$ denote the set of coarse alignments given the indices sets $\mathcal{I}_{n',n''}$ and $\mathcal{J}_{m}$. We define the set $\mathcal{C}_{n,m}=\bigcup_{i,j \in [n], i<j}\mathcal{A}_{i,j,m}$ to denote the set of all possible coarse alignments given $n$ and $m$.
\end{definition}

The structure of the sequences $x$ and $y$, i.e., the padding of $5$s and $6$s between the gadgets and the padding of the $7$s in the beginning and end of the sequence $y$, ensures that the edit distance $\delta(x,y)$ has the following peculiar behaviour:

\begin{lemma}[Lemma \ref{thm:AppEditDistanceCourseAlignment} in Appendix \ref{sec:AppReduction}]
\label{thm:EditDistanceCourseAlignment}
There exists a coarse alignment $A \in \mathcal{C}_{(3 \cdot 2^{n/2}-2), 2^{n/2}}$, such that the edit distance between the two sequences $x$ and $y$ is:
\begin{equation*}
    \delta(x,y) = 2|x|+ \min_{A \in \mathcal{C}_{(3 \cdot 2^{n/2}-2), 2^{n/2}}} \editCost(A),
\end{equation*}
where $\editCost(A)=\sum_{(i,j)\in A}\delta(u_i,v_j)$ such that $u_i=\bigcirc_{p \in i}5^T g_p6^T$ and $v_j=\bigcirc_{q \in j}5^T \overline{G}(b_q)6^T$. Also $g_p=G(a_{p-2^{n/2}})$ when $0 < (p-2^{n/2}) \leq 2^{n/2}$ and $g_p=r$ otherwise. Here $r$ denotes the dummy gadget.
\end{lemma}

The proof of Lemma \ref{thm:EditDistanceCourseAlignment} uses results of Lemmas \ref{thm:Appx2Format}, \ref{thm:AppSeparatorAligns}, Fact \ref{thm:AppFactdiffAlignment} and Corollary \ref{thm:AppCorJustX2} mentioned in Appendix \ref{sec:AppReduction} where we also provide insights on how edit distance between sequences like $x$ and $y$ behave. We now prove the correctness of our reduction. 

\begin{lemma}[Lemma \ref{thm:AppCorrectnessOfReduction} in Appendix \ref{sec:AppReduction}]
\label{thm:CorrectnessOfReduction}
For every $n$, there exists a constant $C^* \in \mathbb{Z}$ such that
\begin{equation*}
    \delta(x,y) < C^*
\end{equation*}
if and only if $\propertyPPedit(\matrixM^{\truthtable(S)})=1$.
\end{lemma}     
\begin{proof}
If $\propertyPPedit(\matrixM^{\truthtable(S)})=1$ then it implies $\propPedit(\matrixM^{\truthtable(S)},Q)=1$ because of the promise that the branching program $S$ belongs to the set $\mathcal{S}$. The statement $\propPedit(\matrixM^{\truthtable(S)},Q)=1$ implies that there exists a $path$ $\pathP \in \setOfPaths_{2L-1,L}$ (here $L=2^{n/2}$) such that even with the highest allowed jump parameter $\mu=Q$ the path-cost is $\pathCost(\matrixM^{\truthtable(S)},\pathP, Q) < T_r$, for a fixed threshold $T_r$. Which invariably means that $\forall \mu$, $\pathCost(\matrixM^{\truthtable(S)},\pathP, \mu) < T_r$. Using the Algorithm \ref{Alg:PtoCmain} that generates a  $\text{coarse alignment}$ $\coarseAlignC \in \mathcal{C}_{3L-2,L}$ for a given path $\pathP \in \setOfPaths_{2L-1,L}$ as an input, we get a coarse alignment $\coarseAlignC$ corresponding to path $\pathP$. We get $\editCost(\coarseAlignC)< T_r$ because $\forall \mu, \pathCost(\matrixM^{\truthtable(S)},\pathP, \mu) < T_r$.\footnote{This happens due to the choices of $C_0$, $C_1$, $C_{jump}$ that were made in Definition \ref{def:PP_EditProperty}.} Also, from Lemma~\ref{thm:EditDistanceCourseAlignment} we have that $\delta(x,y)=2|x| + \min_{\coarseAlignC \in \mathcal{C}_{3 L-2, L}} \editCost(\coarseAlignC)$. This implies $\delta(x,y)< 2|x| + T_r$ and, we set our constant $C^*=2|x|+T_r$. 

We now prove the other direction. If $\propertyPPedit(\matrixM^{\truthtable(S)})=0$ it implies that $\propPedit(\matrixM^{\truthtable(S)},0)=0$ as the branching program $S\in \mathcal{S}$. Which in turn implies that $\forall \mu, \forall \pathP \in \setOfPaths_{2L-1,L}$, $\pathCost(\matrixM^{\truthtable(S)},\pathP,\mu)\geq T_r$. Using the result from Lemma \ref{thm:AppPtoCmain} and \ref{thm:AppEitherPathorD} in Appendix \ref{sec:AppReduction} we show that if $\forall \mu, \forall \pathP \in \setOfPaths_{2L-1,L}$, $\pathCost(\matrixM^{\truthtable(S)},\pathP,\mu)\geq T_r$ then $\forall \coarseAlignC \in \mathcal{C}_{3L-2, L}$, $\editCost(\coarseAlignC) \geq T_r$ which implies $\delta(x,y)\geq 2|x| + T_r$. Thus, implying $\delta(x,y) \geq C^*$.
\end{proof}

This constitutes the proof of the reduction from the $\text{BP-PP}_{\text{edit}}$ problem on branching programs of size $2^{o(\sqrt{n})}$ from the set $\mathcal{S}$ to the $\editdist$ problem. 
\end{proof}

\subsection{The Quantum Time Lower Bound for the Edit Distance problem}
In the previous sub-section we gave a reduction from the $\text{BP-PP}_{\text{edit}}$ problem on branching programs of size $2^{o(\sqrt{n})}$ from set $\mathcal{S}$ to the $\editdist$ problem. Therefore, if we prove that the time taken to compute the $\propertyPPedit$ on these branching programs in the white-box setting in $\epsilon$-bounded error model is $\Omega(2^{0.75n})$ then because of the reduction we prove a quantum time lower bound of $\Omega(n^{1.5})$ for the $\editdist$ problem. 

To achieve the quantum time lower bound for the $\text{BP-PP}_{\text{edit}}$ problem we use the results of the Theorem \ref{thm:PPeditQuery} (query complexity of $\propertyPPedit$) and Conjectures~\ref{thm:ConjPromiseNCQSETH} (promise version of $\NCQSETH$) and~\ref{thm:ConjPPeditNCoblivious} ($\propertyPPedit \in \compressionOblivious(\NC \cap \mathcal{S})$) mentioned below. 

\begin{theorem}[Theorem \ref{thm:AppPeditQuery}, Corollary \ref{thm:AppCorQofPPedit} in Appendix \ref{sec:AppQueryComplexity}]
\label{thm:PPeditQuery}
The bounded-error quantum query complexity for computing the property $\propertyPPedit$ on matrices that are $\encodingName$ of truth tables of non-deterministic branching programs with $n$ input variables of size $2^{o(\sqrt{n})}$ from set $\mathcal{S}$ is $\Omega(2^{0.75n})$. \end{theorem}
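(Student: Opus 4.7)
The plan is to prove an unconditional quantum query lower bound of $\Omega(L^{3/2}) = \Omega(2^{0.75n})$ where $L = 2^{n/2}$. The target exponent is natural for this setting: the matrix $\matrixM$ has size $(2L-1) \times L$, and the paths of Definition~\ref{def:PP_EditProperty} move column-by-column with occasional vertical jumps, which is essentially the structure of 2D grid path-finding problems for which Ambainis, Balodis, Iraids, Pr\={u}sis, and Smotrovs~\cite{Dyck-AmbainisEtAl-19} established $\Omega(L^{3/2-o(1)})$ quantum query lower bounds. The matching upper bound $O(L^{3/2})$ arises from Grover search within each column stitched together via the allowed jump moves, so the bound is tight at the target exponent.

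I would proceed in three steps. First, identify a concrete grid path-finding instance whose bounded-error quantum query complexity is $\Omega(L^{3/2})$ — for example, detecting a monotone column-by-column path of $1$s in an $L \times L$ boolean grid with bounded vertical movement, of the kind considered in~\cite{Dyck-AmbainisEtAl-19}. Second, construct a reduction that maps any such instance to a matrix $\matrixM = \matrixM^{\truthtable(S)}$ arising as the $\encodingName$ of the truth table of some non-deterministic branching program $S \in \mathcal{S}$ of size $2^{o(\sqrt{n})}$. Because the matrix encoding ties position $(i,j)$ to the truth-table bit indexed by the input pair $(p,q) = (i+j-L,\,j)$, one places the hard-instance bits along a chosen set of diagonals and pads the remaining diagonals with zeros; a non-deterministic branching program only has to recognize the pattern ``$(p,q)$ lies on a distinguished diagonal and the corresponding hard-instance bit equals $1$'', which fits comfortably within the $2^{o(\sqrt{n})}$ size budget. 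Third, verify the promise condition by calibrating $C_0, C_1, C_{jump}$ and $T_r$ so that YES grid instances yield $\propPedit(\matrixM, Q) = 1$ (a cheap path exists even with maximally penalized jumps) while NO grid instances yield $\propPedit(\matrixM, 0) = 0$ (no cheap path exists even with free jumps). Making the YES witnesses (nearly) straight paths and ensuring NO instances cannot shortcut via jumps achieves this.

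The main obstacle is the simultaneous interplay of three constraints: the rigid diagonal structure of the matrix encoding, the $2^{o(\sqrt{n})}$ size budget on $S$, and the two-sided promise at $\mu = 0$ and $\mu = Q$. The grid path-finding $\Omega(L^{3/2})$ lower bound must survive restriction to grid patterns that are both compatible with the diagonal layout and describable by a small branching program, and the promise must hold uniformly across the resulting hard family. I expect this to require a bespoke adversary argument on the restricted family rather than a direct black-box reuse of an existing lower bound, likely adapting the techniques of~\cite{Dyck-AmbainisEtAl-19} to enforce the promise on both sides simultaneously.
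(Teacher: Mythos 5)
Your proposal differs fundamentally from the paper's proof, and as written it leaves the central technical step unexecuted.

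The paper does not reduce from a pre-existing grid path-finding lower bound. Instead it proves the bound directly via Ambainis's adversary method by constructing explicit adversarial sets $X \subseteq \propPedit^{-1}(0)$ and $Y \subseteq \propPedit^{-1}(1)$. The key technical device is a recursive family of symbols $\{0_i, +_i, -_i\}$ at $t+2$ levels over a branching parameter $k$: each row of an $X$-matrix is a balanced arrangement (equal numbers of $+_t$ and $-_t$ blocks, hence exactly $N/4$ ones), while each $Y$-matrix has exactly one row that is unbalanced by a single unit. This makes the per-row traversal cost of every jump-free path equal to $V = \tfrac{N}{4}C_0 + \tfrac{N}{4}C_1$ on $X$ and strictly less than $V$ for one row on $Y$. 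The recursion is engineered so that the adversary relation $R$ has the degree parameters $m \geq (k/2)^{t+2}\cdot\tfrac{N}{2}$, $m' \geq (k/2)^{t+2}$, and $\ell = \ell' = 1$, yielding $\Omega\bigl((k/2)^{t+2}\sqrt{N}\bigr)$, and the constraints $k^{t+2} = N/2$ and $C_1\cdot k\cdot(t+2) < C_{jump}$ are chosen so that no jump can ever pay for itself, which is what lets the analysis restrict to jump-free paths. Optimizing $k = \omega(1)$ gives $\Omega\bigl(N^{1.5 - 1/\log k}\bigr) = \Omega(N^{1.5})$. None of this recursive-symbol machinery, the row-balance structure, or the no-jump calibration appears in your outline.

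Your plan also has an unresolved directional problem even on its own terms. You propose to import an $\Omega(L^{3/2-o(1)})$ bound from a grid $st$-connectivity problem and then ``place the hard-instance bits along a chosen set of diagonals,'' but you would still need to verify that (i) the restricted family of padded matrices satisfies the two-sided promise at $\mu=0$ and $\mu=Q$, (ii) the remaining instances correspond to truth tables of branching programs of size $2^{o(\sqrt{n})}$, and (iii) the adversary weight scheme or reduction survives the restriction without losing the $3/2$ exponent. You acknowledge this yourself in the last paragraph, where you write that a bespoke adversary argument on the restricted family is likely required. That bespoke argument is precisely the substance of the paper's proof, so as submitted the proposal is a roadmap with the destination unreached rather than a proof. (It is also worth noting that the reference you lean on, the Ambainis et al.\ Dyck-language bound, is in the paper treated as a \emph{consequence} of this adversary construction, via the corollary in Section~\ref{sec:Dyck}, not as a black-box input to it.)
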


\begin{conjecture}[Promise version of $\NCQSETH$] 
\label{thm:ConjPromiseNCQSETH}
For the class of representations $\NC$, i.e., the set of poly sized circuits of polylogarithmic depth consisting of fan-in 2 gates, for all properties $\propertyP \in \compressionOblivious(\NC \cap \mathcal{S})$, we have $\qTimeWB(\propertyP|_{\NC \cap \mathcal{S}}) \geq \Omega(\Query(\propertyP|_{\mathcal{S}}))$.
\end{conjecture}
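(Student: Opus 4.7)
Since the final statement is a conjecture rather than a theorem, a ``proof'' cannot be formal in the usual sense; the plan is instead to lay out the route by which one could \emph{justify} the conjecture, either by reducing it to a weaker and more widely-believed hypothesis or by ruling out the most plausible counterexamples. The overall approach is to run the contrapositive: assume some compression-oblivious property $\propertyP \in \compressionOblivious(\NC \cap \mathcal{S})$ admits a white-box quantum algorithm $\algoA$ with $\qTimeWB(\propertyP|_{\NC \cap \mathcal{S}}) = o(\Query(\propertyP|_{\mathcal{S}}))$, and try to exhibit this as an unexpected structural consequence -- ideally a contradiction with the defining inequality $\qTimeBB(\propertyP|_{\NC \cap \mathcal{S}}) \geq \Omega(\Query(\propertyP))$ that came with $\compressionOblivious$ membership.

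First, I would isolate the two sources of advantage a white-box algorithm can exploit: (i) the $\NC$ description of the input allows polynomial-time simulation of any single truth-table bit, so black-box simulation of $\algoA$ is immediate and adds only polylogarithmic overhead; and (ii) the description might, in principle, enable a ``global'' shortcut that no black-box algorithm has access to. Source (i) directly gives a black-box algorithm of essentially the same running time, so if we could show that the black-box model must itself respect $\Query(\propertyP|_{\mathcal{S}})$ up to polylog factors, we would be done -- this is precisely the compression-oblivious hypothesis $\qTimeBB(\propertyP|_{\NC \cap \mathcal{S}}) \geq \Omega(\Query(\propertyP))$ that is assumed. The remaining task is to close the gap between $\Query(\propertyP)$ and $\Query(\propertyP|_{\mathcal{S}})$, which one would do by checking (for each property of interest) that the adversary set realizing the query lower bound can be taken to lie entirely inside $\text{S}_{\NC \cap \mathcal{S}}$; this is the same style of argument used earlier in the paper to show $\text{OR}, \text{AND} \in \compressionOblivious(\DC)$.

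Source (ii) is where the real work lies. To rule it out uniformly one would need a generic method for converting a hypothetical white-box speed-up into a black-box speed-up -- for example, by running $\algoA$ on a learned/compressed description of the input. This is exactly the oracle-identification strategy invoked in the sketch of Theorem~\ref{thm:PneqPSPACE}, and it explains why we should expect this direction to be the main obstacle: the same theorem shows that unconditionally certifying even a single property as $\compressionOblivious$ (for $\Query = \widetilde{\omega}(\sqrt{N})$ properties living in small space) would separate $\P$ from $\PSPACE$. Hence a full unconditional proof of the conjecture is beyond current techniques, and a realistic plan is to aim for one of: (a) a reduction from $\NCQSETH$ itself to the basic $\DCQSETH$ (or from $\NCQSETH$ to the classical $\SETH$) for specific natural properties; (b) a verification of the conjecture on a case-by-case basis for the properties actually used downstream, most importantly $\propertyPPedit|_{\mathcal{S}}$; or (c) a proof in restricted models of quantum computation (e.g.\ query-plus-time models without non-trivial use of circuit structure).

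Concretely, the final plan is three-part. Step~1: verify, via the adversary/polynomial method arguments already present for $\text{OR}$ and $\text{AND}$, that the compression-oblivious set for $\NC \cap \mathcal{S}$ is non-trivial and contains the properties of interest, and that the relevant query lower bounds transfer to the promise restriction $\mathcal{S}$. Step~2: argue the reduction direction, showing that any white-box algorithm beating the query bound must not merely simulate queries; this narrows the space of potential counterexamples. Step~3: formalise the ``must not merely simulate queries'' step as a conditional theorem assuming a plausible statement about the limits of quantum compression (for instance, that $\NC$ circuits cannot be recovered from truth-table queries faster than $\sqrt{N}$ in general), leaving the final conjectural step to be the one genuinely needed piece of hardness. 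The expected main obstacle is Step~3: bridging it unconditionally appears to require complexity-theoretic separations comparable to $\P \neq \PSPACE$, which is precisely why the statement is formulated as a conjecture rather than a theorem.
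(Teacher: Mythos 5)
You correctly recognize that the statement is a conjecture and that the paper offers no proof of it; it is introduced as an assumption of the QSETH framework, specialized to $\NC$ circuits with the $\mathcal{S}$ promise so that Theorem~\ref{thm:editDistTime} can be stated cleanly. Your discussion of plausibility matches the paper's own framing: the observation that unconditionally certifying a compression-oblivious property with $\Query = \widetilde{\omega}(\sqrt{N})$ would separate $\P$ from $\PSPACE$ (Theorem~\ref{thm:PneqPSPACE}), the note that $\mathrm{OR}$ and $\mathrm{AND}$ can be shown compression oblivious via their adversary sets, and the suggestion to verify membership for the specific property $\propertyPPedit$ used downstream are precisely the considerations the authors raise. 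Since the paper does not attempt any justification beyond this, there is nothing further to compare; your proposal is consistent with how the conjecture is treated.
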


\begin{conjecture}
\label{thm:ConjPPeditNCoblivious}
The property $\propertyPPedit$ is $\NC \cap \mathcal{S})$-compression oblivious.
\end{conjecture}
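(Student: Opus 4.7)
The goal is to establish $\propertyPPedit \in \compressionOblivious(\NC \cap \mathcal{S})$, which by definition unpacks to the inequality $\qTimeBB(\propertyPPedit|_{\text{S}_{\NC \cap \mathcal{S}}}) \geq \Omega(\Query(\propertyPPedit))$. Since a quantum algorithm's running time is bounded below by its query count, we automatically have $\qTimeBB(\propertyPPedit|_X) \geq \Query(\propertyPPedit|_X)$ for any input restriction $X$. The plan is therefore to reduce the conjecture to the purely query-theoretic statement that restricting to NC-realizable truth tables does not significantly weaken the quantum query complexity, i.e., $\Query(\propertyPPedit|_{\text{S}_{\NC \cap \mathcal{S}}}) = \Omega(\Query(\propertyPPedit))$.

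My main tool is Theorem~\ref{thm:PPeditQuery}, which already gives $\Query(\propertyPPedit|_{\mathcal{S}}) = \Omega(2^{0.75n})$ via hard instances drawn from non-deterministic branching programs of size $2^{o(\sqrt{n})}$ inside $\mathcal{S}$. The strategy is to inspect the adversary family witnessing that lower bound and show that the underlying Boolean functions admit equivalent implementations by polynomial-size, polylogarithmic-depth fan-in-two circuits. Concretely, the adversary inputs used to lower-bound $\propertyPPedit$ are typically constructed by embedding a small hard-query gadget (of the majority or matching-triangle flavour) into a matrix via address-decoders and selector circuitry; each of these pieces is computable in constant or logarithmic depth, so the composite truth table should lie in $\text{S}_{\NC}$. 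Verifying this for every member of the adversary family — possibly after a mild restructuring to keep the depth logarithmic — would complete the argument.

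The principal obstacle is bridging the gap between the non-deterministic branching-program language used in the query lower bound and the depth-bounded language of $\NC$. Generic size-$2^{o(\sqrt{n})}$ non-deterministic BPs do not translate mechanically into polylog-depth circuits, because eliminating the non-deterministic witness bits can blow up either size or depth. One workaround is to replace the adversary family from Theorem~\ref{thm:PPeditQuery} by a tailored family whose members are designed from the outset as shallow $\NC$ circuits (e.g., via Barrington-style simulations for logarithmic-depth subroutines), accepting a possibly weaker leading constant in the exponent. An alternative is a contrapositive route: assume a fast white-box algorithm for $\propertyPPedit$ on $\NC$ inputs and show that it already violates the query lower bound on the broader class $\mathcal{S}$ by syntactically encoding small branching programs into $\NC$ with the $\propertyPPedit$-value preserved. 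Either route still turns on whether $\NC$ is expressive enough to host the genuinely hardest inputs, and it is precisely this expressiveness question that keeps the statement in the form of a conjecture rather than a theorem.
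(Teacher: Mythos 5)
The statement you are attacking is, in the paper, an unproven conjecture rather than a theorem, and Theorem~\ref{thm:PneqPSPACE} explains exactly why no argument of the kind you outline can succeed. Your first move is logically sound but aims at a false target: you reduce the conjecture to the query-complexity claim that $\Query(\propertyPPedit|_{\text{S}_{\NC \cap \mathcal{S}}}) = \Omega(\Query(\propertyPPedit))$. However, the set $\text{S}_{\NC \cap \mathcal{S}}$ consists of truth tables of polynomial-size circuits, so it has at most $2^{\poly(n)}$ elements, and the oracle-identification results of~\cite{OracleIdentification-AmbainisEtAl-04,OIdent-kothari-14} (already invoked in the proof sketch of Theorem~\ref{thm:PneqPSPACE}) give an \emph{unconditional} upper bound $\Query(\propertyPPedit|_{\text{S}_{\NC \cap \mathcal{S}}}) = \widetilde{O}(\sqrt{N})$, where $N = 2^n$ is the truth-table length. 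Since $\Query(\propertyPPedit) = \Omega(N^{0.75})$, the query-theoretic statement you want to prove is unconditionally false. In particular, your plan of exhibiting an adversary family inside $\text{S}_{\NC}$ witnessing $\Omega(N^{0.75})$ queries is not merely hard to carry out; it is impossible, because any adversary set certifying a $\widetilde{\omega}(\sqrt{N})$ query lower bound must contain truth tables with no polynomial-size circuit description.

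The definition of compression oblivious is formulated in terms of $\qTimeBB$ rather than $\Query$ for precisely this reason: the oracle-identification algorithm is query-efficient but is only \emph{time}-efficient under a strong assumption such as $\P = \PSPACE$, which is what is needed to recover and then evaluate the compressed circuit cheaply between queries. The conjecture is the assertion that this latent processing cost is genuinely high, and Theorem~\ref{thm:PneqPSPACE} shows that proving the conjecture unconditionally would imply $\P \neq \PSPACE$. So your ``contrapositive route'' fares no better than the adversary-set route; both are attempts at an unconditional proof, and the barrier applies to them equally. The paper's own suggested way forward, mentioned after Theorem~\ref{thm:PneqPSPACE}, is a \emph{conditional} proof of compression obliviousness for properties such as PARITY or $\propertyPPedit$ under QSETH for simpler properties like OR and AND; any correct argument for the statement must take a shape of that kind, or otherwise accept $\P \neq \PSPACE$ (or a comparable separation) as a hypothesis.
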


\begin{theorem}
\label{thm:PPeditTime}
The $\epsilon$-bounded error quantum time complexity for computing the property $\propertyPPedit$ in the white-box setting is $\qTimeWB(\propertyPPedit|_{\mathcal{\NC \cap S}})=\Omega(2^{0.75n})$ under a promise version of $\NCQSETH$.
\end{theorem}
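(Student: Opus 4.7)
The proof is a straightforward chaining of three results already established earlier in the excerpt, so the plan is short. First, I would invoke Conjecture~\ref{thm:ConjPPeditNCoblivious} to place $\propertyPPedit$ inside the set $\compressionOblivious(\NC \cap \mathcal{S})$. This is the prerequisite that qualifies $\propertyPPedit$ as a property to which the promise version of $\NCQSETH$ actually applies.

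Second, with the compression-obliviousness hypothesis in hand, I would apply Conjecture~\ref{thm:ConjPromiseNCQSETH} directly to $\propertyPPedit$, which immediately gives
\begin{equation*}
\qTimeWB(\propertyPPedit|_{\NC \cap \mathcal{S}}) \;\geq\; \Omega\bigl(\Query(\propertyPPedit|_{\mathcal{S}})\bigr).
\end{equation*}
Third, I would plug in the unconditional quantum query lower bound from Theorem~\ref{thm:PPeditQuery}, which states that $\Query(\propertyPPedit|_{\mathcal{S}}) = \Omega(2^{0.75n})$ when $\propertyPPedit$ is restricted to \encodingName-matrices of truth tables of non-deterministic branching programs of size $2^{o(\sqrt{n})}$ from $\mathcal{S}$. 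Substituting this into the previous inequality immediately yields $\qTimeWB(\propertyPPedit|_{\NC \cap \mathcal{S}}) = \Omega(2^{0.75n})$, which is exactly the claimed bound.

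There is essentially no obstacle in the final chaining itself: each step is a one-line application of a previously stated conjecture or theorem, and the only thing one has to double-check is that the domain on which the query lower bound is proved (branching programs from $\mathcal{S}$) is the same domain $\mathcal{S}$ that appears on the right-hand side of the promise $\NCQSETH$. The real conceptual weight of the argument sits in the cited inputs, namely (i) justifying that $\propertyPPedit$ is genuinely $(\NC\cap\mathcal{S})$-compression oblivious, and (ii) establishing the $\Omega(2^{0.75n})$ query bound on this restricted promise domain; both are handled elsewhere in the paper, so the present theorem reduces to this short combination.
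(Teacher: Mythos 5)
Your proof is correct and follows exactly the same route as the paper: the paper's one-line proof also simply combines Conjecture~\ref{thm:ConjPPeditNCoblivious}, Conjecture~\ref{thm:ConjPromiseNCQSETH}, and Theorem~\ref{thm:PPeditQuery} to conclude $\qTimeWB(\propertyPPedit|_{\NC \cap \mathcal{S}}) = \Omega(\Query(\propertyPPedit|_{\mathcal{S}}))=\Omega(2^{0.75n})$. Your version merely spells out the chaining in slightly more detail.
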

\begin{proof}
Combining the results of  Theorem \ref{thm:PPeditQuery}, Conjectures \ref{thm:ConjPromiseNCQSETH} and \ref{thm:ConjPPeditNCoblivious} we get $\qTimeWB(\propertyPPedit|_{\NC \cap \mathcal{S}}) = \Omega(\Query(\propertyPPedit|_{\mathcal{S}}))=\Omega(2^{0.75n})$. 
\end{proof}

\begin{theorem}
\label{thm:editDistTime}
Assuming Conjecture~\ref{thm:ConjPromiseNCQSETH}, the promise version of $\NCQSETH$, and assuming $\propertyPPedit \in \compressionOblivious(\NC \cap \mathcal{S})$, the bounded-error quantum time complexity for computing the \\$\editdist$ problem is $\Omega(n^{1.5})$.
\end{theorem}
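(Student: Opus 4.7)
The plan is a direct contrapositive argument that combines the reduction of Theorem~\ref{Thm:BPSATtoEDIT} with the white-box lower bound of Theorem~\ref{thm:PPeditTime}. First I would pin down the parameter relationship. For a non-deterministic branching program $S \in \mathcal{S}$ on $n$ input variables of size $2^{o(\sqrt n)}$, both the length $Z$ and the width $W$ are at most $2^{o(\sqrt n)}$, so the edit-distance instance produced has length $N = Z^{O(\log W)}\, 2^{n/2} = 2^{o(\sqrt n)\cdot O(\sqrt n) + n/2} = 2^{n/2 + o(n)}$. Moreover the reduction is explicit and local: any single symbol of $x$ or $y$ can be computed in time $\mathrm{poly}(n)$ from $\desc(S)$, so the composed quantum algorithm can access the edit-distance input via oracle queries without materializing the whole string.

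Suppose, for contradiction, that some bounded-error quantum algorithm solves $\editdist$ on length-$N$ inputs in time $O(N^{1.5-\delta})$ for some fixed $\delta > 0$. Composing with the reduction, and simulating each oracle query to $(x,y)$ by $\mathrm{poly}(n)$ work, yields a bounded-error quantum algorithm for $\text{BP-PP}_{\text{edit}}$ on the branching programs of set $\mathcal{S}$ of size $2^{o(\sqrt n)}$ with total running time $\widetilde{O}(N^{1.5-\delta}) = 2^{(0.75 - \delta/2)n + o(n)} = o(2^{0.75 n})$. Under Conjecture~\ref{thm:ConjPromiseNCQSETH} and the assumption that $\propertyPPedit \in \compressionOblivious(\NC \cap \mathcal{S})$, Theorem~\ref{thm:PPeditTime} gives $\qTimeWB(\propertyPPedit|_{\NC \cap \mathcal{S}}) = \Omega(2^{0.75 n})$; because polynomial-size (and certainly $2^{o(\sqrt n)}$-size) branching programs are simulable by polylogarithmic-depth circuits, the instances produced lie in $\NC \cap \mathcal{S}$ and the lower bound applies. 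Contradiction. Hence $\editdist$ on strings of length $N$ requires quantum time $\Omega(N^{1.5})$, and relabelling $N$ as $n$ yields the claimed $\Omega(n^{1.5})$ bound.

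The main obstacles are bookkeeping rather than conceptual. First, one must confirm that the $o(\sqrt n)$ bound on $|S|$ is tight enough to keep $Z^{O(\log W)}\, 2^{n/2}$ inside $2^{n/2 + o(n)}$, and that the resulting $o(n)$ slack is absorbed by the $\delta n /2$ gain coming from any fixed $\delta > 0$. Second, one must verify explicitly that an oracle query to a bit of $(x,y)$ can be answered in time polylogarithmic in $N$ from $\desc(S)$, so that the per-query overhead does not swamp the hypothesized $N^{1.5-\delta}$ running time. Third, the two distinct roles of the symbol $n$ (the number of branching program variables vs.\ the length of the edit-distance input) must be tracked carefully in the final relabelling, since the theorem's $n$ refers to the latter.
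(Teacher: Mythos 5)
Your proposal matches the paper's proof strategy exactly: combine the reduction of Theorem~\ref{Thm:BPSATtoEDIT} with the white-box lower bound of Theorem~\ref{thm:PPeditTime}, argue contrapositively, and track the parameter $N = Z^{O(\log W)}2^{n/2} = 2^{n/2 + o(n)}$ so that $\Omega(2^{0.75n})$ translates to $\Omega(N^{1.5-o(1)})$, ruling out any $O(N^{1.5-\delta})$ algorithm for fixed $\delta > 0$. The parameter bookkeeping is correct, and your observation about reduction overhead also goes through whether one appeals to locality or simply to the stated $O(N)$ running time of the reduction, since $2^{o(n)} = N^{o(1)}$ is absorbed.

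There is, however, one substantive error in the sentence bridging branching programs and $\NC$. You argue that the relevant instances lie in $\NC \cap \mathcal{S}$ because ``polynomial-size (and certainly $2^{o(\sqrt n)}$-size) branching programs are simulable by polylogarithmic-depth circuits.'' The parenthetical is backwards: a branching program of length $Z$ and width $W$ compiles into a circuit of depth $O(\log Z \cdot \log W)$ and size $\mathrm{poly}(ZW)$, which for $Z,W = 2^{o(\sqrt n)}$ gives depth $o(n)$ and superpolynomial size, far outside $\NC$. (Poly-size nondeterministic branching programs do sit inside $\NC^2$ via reachability, but $2^{o(\sqrt n)}$-size ones do not.) The translation the argument actually needs is the opposite one: $\NC$ circuits, having depth $\mathrm{polylog}(n)$, are evaluable in $\mathrm{SPACE}(\mathrm{polylog}(n))$ and hence by branching programs of size $2^{\mathrm{polylog}(n)} \subseteq 2^{o(\sqrt n)}$, so that any circuit furnished by Conjecture~\ref{thm:ConjPromiseNCQSETH} can be converted, in time $N^{o(1)}$ and with the same truth table (hence still within $\mathcal{S}$), into a branching program to which Theorem~\ref{Thm:BPSATtoEDIT} applies. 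The paper's one-line proof elides this translation entirely, so the gap is inherited, but your stated justification for it would not survive scrutiny and should be replaced by the $\NC \to$ branching-program compilation.
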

\begin{proof}
Using the reduction from the $\propertyPPedit$ problem on branching programs with $n$ input variables of size $2^{o(\sqrt{n})}$ from set $\mathcal{S}$ to the $\editdist$ problem (Theorem \ref{Thm:BPSATtoEDIT}) and using the results from Theorem~\ref{thm:PPeditTime} we obtain the conditional quantum time lower bound of $\Omega(n^{1.5})$ for the $\editdist$ problem. 
\end{proof}

\subsection{Lower bound for the restricted Dyck language}
\label{sec:Dyck}
Consider the restricted Dyck language, the language of balanced parentheses of depth bounded by~$k$.
The study of the quantum query complexity of this language was initiated in \cite{Trichotomy-AaronsonEtAl-19} by Aaronson, Grier, and Schaeffer where they provide an $\widetilde{O}(\sqrt{n})$ algorithm to decide the language for a constant~$k$. As a corollary to Theorem~\ref{thm:AppPeditQuery} in Appendix~\ref{sec:AppQueryComplexity} we show that the query complexity of restricted Dyck language is linear for any $k=\omega(\log n)$, partially answering an open question posed by the authors in \cite{Trichotomy-AaronsonEtAl-19}.
\begin{theorem}
The quantum query complexity of the restricted Dyck language is $\Omega(n^{1-o(1)})$ for any $k=\omega(\log n)$.
\end{theorem}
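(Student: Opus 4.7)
The plan is to derive this query lower bound as a direct consequence of Theorem~\ref{thm:AppPeditQuery}, by observing that the adversarial inputs used in its proof can be re-encoded as bracket sequences with only a mild blow-up in size and with controlled nesting depth. The core observation is that a path through the matrix from Definition~\ref{def:PP_EditProperty} alternates horizontal progress with vertical jumps, so the trajectory serializes naturally into a balanced bracket sequence in which the depth at each position tracks the current row index: downward jumps become runs of opening brackets, upward jumps become matching closing brackets, and horizontal steps become trivial bracket pairs.

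First, I would recast the hard instances of $\propertyPPedit$ constructed in the proof of Theorem~\ref{thm:AppPeditQuery} as candidate Dyck strings in this way. Each bit of the bracket string should be computable from $O(1)$ queries to the underlying matrix, so that a query algorithm for the restricted Dyck language yields, up to constants, a query algorithm for $\propertyPPedit$. I would then pick the branching-program size at the value $2^{o(\sqrt{n_0})}$ permitted by Theorem~\ref{thm:AppPeditQuery}, and choose the bracket-string length $n$ so that the $\Omega(2^{0.75 n_0})$ query lower bound on $\propertyPPedit$ (with $n_0$ the number of branching-program variables) translates into an $\Omega(n^{1-o(1)})$ query lower bound on the restricted Dyck language.

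Finally, I would verify the depth condition: since depth in the encoding tracks the row index in the underlying path, and the hard instances use matrices with $2^{n_0/2}$ rows, the required nesting depth $k$ scales with the maximum vertical jump allowed in the path, which for the parameter settings above is easily $\omega(\log n)$. The main obstacle is making the encoding faithful across the promise structure of $\propertyPPedit$: ``yes'' instances must map to properly balanced strings of depth at most $k$, while ``no'' instances must fail either the balance test or the depth constraint, and this must be achieved without introducing easy-to-detect structural features (for instance a large imbalance concentrated near a fixed position) that would make the Dyck decision strictly easier than $\propertyPPedit$ itself.
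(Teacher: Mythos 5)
Your proposal diverges fundamentally from the paper's argument, and the divergence introduces gaps that would be hard to repair. The paper does not serialize paths at all; it takes individual \emph{rows} of the adversary matrices built in the proof of Theorem~\ref{thm:AppPeditQuery}, reads each row as a 0/1 string with $0 = {}$`(' and $1 = {}$`)', pads it as $r' = 0^d r 1^d$, and observes that (i) the recursive $\{+_t, 0_t, -_t\}$ symbol construction guarantees every prefix of a row is within distance $d = O(k\cdot t) = \omega(\log n)$ of balanced, and (ii) deciding whether such a row is exactly balanced versus off-by-one is exactly the per-row adversary distinguishing task and so costs $\Omega(n^{1-o(1)})$ queries. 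The same adversary pairs thus form a hard set for the restricted Dyck language with $k = 2d$, with only an additive $2d$ loss.

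Your path-serialization idea has two concrete problems. First, a path $\pathP$ is not part of the input to $\propertyPPedit$; it is a certificate ranged over in $\minPathCost$. There is no fixed string to hand to a Dyck algorithm whose bits are computable from $O(1)$ local queries to the matrix, because the path is a global, answer-dependent object. Without such locality the reduction does not transfer query complexity from $\propertyPPedit$ to the Dyck language. Second, even setting that aside, your own depth bookkeeping fails: if depth tracks the row index, then across the $2^{n_0/2}$ rows the nesting depth is linear in the string length, not $\omega(\log n)$, so you would be proving a bound for \emph{unrestricted} Dyck rather than the bounded-depth variant the theorem addresses. Your closing caveat about promise structure is a symptom of this: the paper's construction sidesteps it entirely because a single row of the adversary matrix is already a bounded-imbalance bracket sequence by design, and the promise (``at most $d$ from balanced'') is precisely what the recursive symbol construction was built to enforce.
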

\begin{proof}[Proof (sketch)]
In the proof of Theorem~\ref{thm:AppPeditQuery} in Appendix~\ref{sec:AppQueryComplexity} we have constructed sets of length $n$ 0/1 strings such that, for each string in the sets, every prefix is at most some distance $d$ away from balanced, and such that the query complexity of deciding whether these strings are precisely balanced or not is $\Omega(n)$, whenever $d=\omega(\log n)$.

We can directly use the rows of the matrices that form the adversary set in Theorem~\ref{thm:AppPeditQuery} and use this as the adversary set to lower bound the query complexity of the restricted Dyck language with $k=2d$ in the following way:
\begin{enumerate}
    \item Interpret 0 as an open bracket `(', and 1 as a closing bracket `)'.
    \item For any row $r$ from the matrices of the adversarial set, we define
    \begin{equation*}
        r' = 0^d r 1^d
    \end{equation*}
\end{enumerate}
If $r$ is balanced, and satisfies the promise of never being $d$-far from balanced, this is a valid 1-instance for the restricted Dyck language with $k=2d$.
Additionally, if $r$ is not balanced, then this is a valid 0-instance of the restricted Dyck language.
Therefore, the adversary bound also is valid for this problem (losing an additive $2d$ in the lower bound, which is negligible in the parameter range that needs to be considered).
\end{proof}

\section{Conclusion and Future Directions}
\label{sec:Conclusion}

We presented a quantum version of the strong exponential-time hypothesis, as QSETH, and demonstrated several consequences from QSETH.
These included the transfer of previous Orthogonal-Vector based lower bounds to the quantum case, with a quadratically lower time bound than the equivalent classical lower bounds.
We also showed two situations where the new QSETH does not lose this quadratic factor: 
a lower bound showing that computing edit distance takes time $n^{1.5}$ for a quantum algorithm, and an $n^2$ quantum lower bound for the Proofs of Useful Work of Ball, Rosen, Sabin and Vasudevan~\cite{uPoW-BallRosenSabinVasudevan-17}, both conditioned on QSETH.

Possible future applications for the QSETH framework are numerous.
Most importantly, the QSETH can potentially be a powerful tool to prove conditional lower bounds for additional problems in $\BQP$.
The most natural first candidates are other string problems, such as computing the $\LCS$ for example, but there are many other problems for which the `basic QSETH' does not immediately give tight bounds.

Additionally, the notion of \emph{compression oblivious} properties are potentially interesting as an independent object of study.
We expect most natural properties to be compression oblivious, but leave as an open question what complexity-theoretic assumptions are needed to show that, e.g., the parity function is compression oblivious.

Future directions also include a careful study of quantum time complexity of the other core problems in fine-grained complexity, such as $\threeSUM$ and $\APSP$.
Just like with satisfiability, the basic versions of these problems are amenable to a Grover-based quadratic speedup.
It is possible that extensions of those key problems can be used to prove stronger conditional lower bounds, in a similar way to the reduction that was used for $\editdist$ in the current work.

\section*{Acknowledgments}

We would like to thank Andris Ambainis, Gilles Brassard, Fr\'ed\'eric Magniez, Miklos Santha, and Ronald de Wolf for helpful discussions.

SP is supported by the Robert Bosch Stiftung. HB, SP, and FS are additionally supported by NWO Gravitation grants NETWORKS and QSC, and EU grant QuantAlgo.


\bibliographystyle{alpha}
\bibliography{qseth.bib}

\begin{appendices}

\section{Branching Programs}
\label{sec:AppBranchingPrograms}

\begin{definition}[Non-deterministic Branching Programs]
\label{def:branchingPrograms}
A non-deterministic branching program is a directed acyclic graph with $n$ input variables~$x_1, \dots ,x_n$. It is a $Z$ layered directed graph with each layer having a maximum of $W$ nodes and the edges can only exists between nodes of neighbouring layers $L_i$ and $L_{i+1}$, $\forall i \in [Z-1]$. Every edge is labelled with a constraint of the form $(x_i=b)$ where $x_i$ is an input variable and $b \in \{0,1\}$. One of the nodes in the first layer is marked as the start node, and one of the nodes in the last layer is marked as the accept node. An evaluation of a branching program on an input $x_1, \dots ,x_n$ is a path that starts at the start node and non-deterministically follows an edge out of the current node. The branching program accepts the input if and only if the path ends up in the accept node. The size of this non-deterministic branching program is the total number of edges i.e. $O(W^2Z)$.
\end{definition}
With a non-deterministic branching program $S$ on $n$ inputs, we associate the boolean function $f=[S]$ as the function computed by the branching program $S$. We use $\truthtable(S)$ to denote the truth table of the function $f$ computed by the branching program $S$ and, a standard encoding of $S$ as a binary string is denoted by $\desc(S)$.  


\section{Theorems related to reduction from the \texorpdfstring{$\text{BP-PP}_{\text{edit}}$}{BP-PPedit} problem to the Edit Distance problem}
\label{sec:AppReduction}

\begin{theorem}
\label{thm:AppGadgetSizePreserved}
Given two strings $a,b \in \{0,1\}^*$ with $|a|>|b|$ such that, either $\delta(a,b)=Q'$ or $\delta(a,b)=Q'-\rho$ for some constants $Q', \rho \in \mathbb{Z^+}$ and $\rho < Q'$, we can create strings $a_{new},b_{new} \in \{0,1,2\}^*$ such that,

\begin{equation*}
\begin{split}
    a_{new}=2^{|a|}\bigcirc a, \\
    b_{new}=0^{|a|-|b|}2^{|a|}\bigcirc b,
\end{split}
\end{equation*}
and,
\begin{equation*}
\begin{split}
    \delta(a_{new},b_{new})=Q-\rho \text{ iff } \delta(a,b)=Q'-\rho, \\
    \delta(a_{new},b_{new})=Q \text{ iff } \delta(a,b)=Q',
\end{split}
\end{equation*}
for another constant $Q \in \mathbb{Z^+}$ and $\rho < Q$ and $|a_{new}|=|b_{new}|$.
\end{theorem}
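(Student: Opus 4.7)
The plan is to establish the stronger equality
\[
\delta(a_{new}, b_{new}) \;=\; \delta(a, b) + (|a| - |b|)
\]
for all $a, b \in \{0,1\}^*$ with $|a| \geq |b|$; setting $Q := Q' + (|a| - |b|)$ (a constant determined by the fixed gadget sizes) then yields both biconditionals simultaneously, and $|a_{new}| = |b_{new}| = 2|a|$ is immediate from the construction.

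For the upper bound $\delta(a_{new}, b_{new}) \leq \delta(a,b) + (|a|-|b|)$ I would exhibit the natural edit sequence: insert $|a|-|b|$ zeros at the beginning of $a_{new}$ at cost $|a|-|b|$, obtaining $0^{|a|-|b|} 2^{|a|} a$, which shares the prefix $0^{|a|-|b|} 2^{|a|}$ with $b_{new}$; then transform the trailing $a$-block into $b$ in $\delta(a,b)$ further edits.

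The lower bound demands a structural analysis of an arbitrary optimal alignment of $a_{new}$ with $b_{new}$. Since the symbol $2$ does not occur in $a$ or $b$, every matched pair $(i,j)$ of the alignment can be classified by which region of $a_{new}$ (its $2$-block or its $a$-suffix) contains $i$ and which region of $b_{new}$ (the $0$-prefix, the $2$-block, or the $b$-suffix) contains $j$, giving six match types $(X,Y)$. Order-preservation forbids several coexistences, e.g., a $(2,2)$-match cannot coexist with an $(a,0)$-match, and a $(2,b)$-match forbids both $(a,0)$- and $(a,2)$-matches; this splits the argument into a handful of cases. In each case I would expand the cost in the standard form $|a_{new}| + |b_{new}| - 2M_{\text{match}} - M_{\text{mismatch}}$, and bound it using the region-counting inequalities $n_{20}+n_{22}+n_{2b} \leq |a|$, $n_{a0}+n_{a2}+n_{ab} \leq |a|$, $n_{20}+n_{a0} \leq |a|-|b|$, $n_{22}+n_{a2} \leq |a|$, $n_{2b}+n_{ab} \leq |b|$, together with the key sub-alignment bound $n_{ab} + m_{ab} \leq |a| + |b| - \delta(a,b)$ obtained by observing that restricting the alignment to its $(a,b)$-matches produces a valid partial alignment of $a$ with $b$. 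In the ``natural'' case $n_{22} = |a|$ with all other mixed types vanishing, this yields exactly $|a|-|b| + \delta(a,b)$; every other configuration turns out to give a strictly larger lower bound.

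The principal obstacle is this lower-bound case analysis, specifically ruling out alignments that appear to save the $|a|-|b|$ insertions by substituting some $2$'s of $a_{new}$ for $0$'s in the prefix of $b_{new}$. Such substitutions look locally cheap, but order-preservation of alignments forces the remaining source $2$'s to be unmatched or cross-matched in ways that recoup more than the saving; quantifying this trade-off uniformly across all allowable configurations of the $n_{XY}$ is the technical core of the argument.
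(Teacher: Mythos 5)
Your plan targets the same equality as the paper, namely $\delta(a_{new},b_{new}) = \delta(a,b) + (|a|-|b|)$ with $Q := Q' + |a|-|b|$, and the upper bound is the same obvious alignment. Your lower-bound strategy, however, is genuinely different in execution. The paper argues by local exchange on the trailing $2$ of $a_{new}$'s $2$-block: it checks the three possibilities for how that symbol is handled in an optimal alignment (deleted, substituted for a $0$ of $b_{new}$'s prefix, or substituted for a symbol of $b$), rules out each non-matching case by showing it forces a costlier event elsewhere, concludes the $2$ must match a $2$ of $b_{new}$, and iterates so that the two $2$-blocks align pairwise. You instead classify all matched pairs by the pair of regions they connect (six types), prune coexistences via order-preservation, and propose a global cost accounting $|a_{new}|+|b_{new}|-2M_{\text{match}}-M_{\text{mismatch}}$ subject to region budgets and a sub-alignment inequality. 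The paper's swap argument is shorter but is stated rather informally (it never explicitly converts the local ``not optimal'' claims into a clean global cost comparison); your region-counting route, if carried out, would give a tighter and more uniform argument at the cost of more bookkeeping, and it side-steps the iterated ``repeat this argument'' step.

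Two caveats. First, the sub-alignment bound you state should carry a factor of $2$: restricting the optimal alignment to its $(a,b)$-type pairs yields a valid alignment of $a$ with $b$ of cost $|a|+|b|-2n_{ab}-m_{ab}$, which is at least $\delta(a,b)$, giving $2n_{ab}+m_{ab} \le |a|+|b|-\delta(a,b)$; your coefficient-$1$ version is weaker and may not leave enough slack in all configurations. Second, as you acknowledge yourself, the decisive case analysis over the admissible $(n_{XY})$-vectors is not carried out, so the proposal as written is a plan rather than a proof. The plan is sound, but the correctness of the lower bound still rests entirely on that unfinished accounting.
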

\begin{proof}
It follows from the construction of $a_{new},b_{new}$ that $|a_{new}|=|b_{new}|$. It is also easy to see that $\delta(a_{new},b_{new}) \leq |a|-|b|+\delta(a,b)$. We will now prove that $\delta(a_{new},b_{new}) =|a|-|b|+\delta(a,b)$ for any such $a,b$. Consider the last $2$ of the subsequence $2^{|a|}$ in the string $a_{new}$. This symbol $2$ could either get deleted, or get matched or get substituted. If this $2$ gets substituted with a symbol from the substring $b$ in $b_{new}$ then there is a symbol $2$ in $b_{new}$ that has to be inserted, which is not an optimal thing to do. If this symbol $2$ from $a_{new}$ gets substituted with a symbol in $0^{|a|-|b|}$ then there is a symbol $2$ in $b_{new}$ that has to be substituted with a symbol from the substring $a$ in $a_{new}$ or has to be inserted, either ways not an optimal thing to do. A similar argument holds for the case where this $2$ from $a_{new}$ gets deleted. Therefore, the only option left is that it gets matched with a $2$ in $b_{new}$. By repeating this argument for all the $2$s in $a_{new}$ we can say that in an optimal alignment all the $2$s from $a_{new}$ will align with all the $2$s from $b_{new}$. Therefore, $\delta(a_{new},b_{new}) =|a|-|b|+\delta(a,b)$ and $Q=Q'+|a|-|b|$.
\end{proof}

\begin{lemma}
\label{thm:AppEditDistanceCourseAlignment}
There exists a coarse alignment $\coarseAlignC \in \mathcal{C}_{(3 \cdot 2^{n/2}-2), 2^{n/2}}$, such that the edit distance between the two sequences $x$ and $y$ is:
\begin{equation*}
    \delta(x,y) = 2|x|+ \min_{\coarseAlignC \in \mathcal{C}_{(3 \cdot 2^{n/2}-2), 2^{n/2}}} \editCost(\coarseAlignC),
\end{equation*}
where $\editCost(\coarseAlignC)=\sum_{(i,j)\in \coarseAlignC}\delta(u_i,v_j)$ such that $u_i=\bigcirc_{p \in i}5^T g_p6^T$ and $v_j=\bigcirc_{q \in j}5^T \overline{G}(b_q)6^T$. Also $g_p=G(a_{p-2^{n/2}})$ when $0 < (p-2^{n/2}) \leq 2^{n/2}$ and $g_p=r$ otherwise. Here $r$ denotes the dummy gadget.
\end{lemma}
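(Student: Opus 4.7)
The plan is to prove both inequalities; the structure lemmas from the appendix (Lemma~\ref{thm:Appx2Format}, Lemma~\ref{thm:AppSeparatorAligns}, Fact~\ref{thm:AppFactdiffAlignment}, Corollary~\ref{thm:AppCorJustX2}) do the heavy lifting, and my task is to glue them into the identity $\delta(x,y)=2|x|+\min_{C} \editCost(C)$. Throughout I will write $L=2^{n/2}$, so that $x$ is built from $3L-2$ separator-wrapped gadget blocks and $y$ contains $L$ such blocks together with the two padding regions $7^{|x|}$.

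For the upper bound, I would exhibit, for any $C\in\mathcal{C}_{3L-2,L}$, an explicit edit sequence from $x$ to $y$ with cost at most $2|x|+\editCost(C)$. First, prepend and append $7^{|x|}$ to $x$ by $2|x|$ insertions. This produces a middle region whose gadget-block structure is identical to that of $x$. Now consume $C$: for each pair $(p_i,q_i)$, take an optimal edit sequence witnessing $\delta(u_i,v_j)$, where $u_i$ and $v_j$ are the gadget-bundle strings in the statement. Since these bundles tile the middle region of the $7$-padded word on the one side and tile the gadget part of $y$ on the other, concatenating these local edit sequences together with the $2|x|$ insertions produces a valid alignment of $x$ with $y$, and its cost is $2|x|+\sum_{(i,j)\in C}\delta(u_i,v_j)=2|x|+\editCost(C)$. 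Minimizing over $C$ gives $\delta(x,y)\le 2|x|+\min_C \editCost(C)$.

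For the lower bound, I would take an optimal alignment $\pi$ of $x$ and $y$ and carve out a coarse alignment $C_\pi$ of cost at most $\delta(x,y)-2|x|$. The first step uses Lemma~\ref{thm:AppSeparatorAligns}: because $T=\omega(\log N)$, the separators $5^T$ and $6^T$ are rigid enough in $\pi$ that every separator block of $x$ either matches its twin separator block in $y$, or is absorbed inside one of the $7^{|x|}$ regions; any other behaviour would create $\omega(\log N)$ mismatches that cannot be compensated by the polylogarithmically small savings available elsewhere. The next step uses Fact~\ref{thm:AppFactdiffAlignment} and Corollary~\ref{thm:AppCorJustX2} to show that each symbol $7$ in $y$ (there are exactly $2|x|$ of them, none of which appears in $x$) must be deleted or substituted, contributing at least $2|x|$ to the cost, and that this bound is tight in an optimal $\pi$, the freed room being used to place the $L$ gadget blocks of $y$ against a contiguous window of $x$-blocks, as in Lemma~\ref{thm:Appx2Format}. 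Subtracting this $2|x|$ leaves the cost of the middle portion.

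The third and technically delicate step is turning this middle portion into a coarse alignment satisfying $|p_i|=1$ or $|q_i|=1$ for every $i$. The separator-snapping forces $\pi$ to partition the $3L-2$ $x$-blocks and the $L$ $y$-blocks into matched groups $(p_i,q_i)$. If both $|p_i|\ge 2$ and $|q_i|\ge 2$, then some internal separator pair of $x$ and some internal separator pair of $y$ would both lie strictly inside the same group without being matched to each other, which contradicts the separator-snapping conclusion once we trace which separators each group uses. Hence $|p_i|=1$ or $|q_i|=1$, defining a valid $C_\pi\in\mathcal{C}_{3L-2,L}$, and the remaining cost in $\pi$ is at least $\sum_{(i,j)\in C_\pi}\delta(u_i,v_j)=\editCost(C_\pi)$, giving $\delta(x,y)\ge 2|x|+\editCost(C_\pi)\ge 2|x|+\min_C \editCost(C)$.

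The main obstacle I expect is precisely the last paragraph: the careful accounting that forces the partition into a \emph{coarse} (rather than arbitrary) alignment. One has to show that every configuration violating $|p_i|=1\vee|q_i|=1$ incurs strictly more cost than some coarse alternative, and to do this uniformly despite the flexibility introduced by the two $7^{|x|}$ regions on either side of the $y$-gadgets. The super-logarithmic choice of $T$ is what makes this rigid: any deviation from the separator-snapping pattern costs $\Omega(T)$, which dwarfs the at most $O(S_G)$ savings that could come from re-pairing individual gadgets inside a many-to-many block, so the coarse structure is forced.
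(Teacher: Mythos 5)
Your proof follows essentially the same route as the paper: Fact~\ref{thm:AppFactdiffAlignment} and Corollary~\ref{thm:AppCorJustX2} to peel off the $2|x|$ contribution from the $7$-padding, Lemma~\ref{thm:Appx2Format} to force the optimal middle window $x_2$ into the form $5^T\!\ndots 6^T$, and Lemma~\ref{thm:AppSeparatorAligns} (applied recursively) to produce the coarse alignment; your ``if both $|p_i|\ge 2$ and $|q_i|\ge 2$ then a separator pair inside the group must align, contradicting minimality'' is just the contrapositive of the paper's recursive splitting. One small imprecision in your upper-bound direction: after inserting $7^{|x|}$ on each side you claim the bundles $u_i$ ``tile the middle region,'' but a coarse alignment covers only a contiguous window $\mathcal{I}_{n',n''}\subseteq[3L-2]$, so the $x$-blocks outside that window are not accounted for by your edit sequence; the fix is to absorb them into the $7$-padding via $\delta(x_1,7^{|x|})=\delta(x_3,7^{|x|})=|x|$ (exactly as in Corollary~\ref{thm:AppCorJustX2}) rather than by pure insertion, so no extra cost is incurred.
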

\begin{proof}
We are given two sequences $x$ and $y$, such that
\begin{center}
    \begin{equation}
     \label{eq:AppAlignmentGadgetsXY}
     \begin{split}
     x:= (\bigcirc_{i=1}^{|A|-1}5^Tr6^T)(\bigcirc_{a \in A}5^TG(a)6^T)(\bigcirc_{i=1}^{|A|-1}5^Tr6^T),\\
     y:=\underbrace{7^{|x| }}_{y_1}\underbrace{(\bigcirc_{b \in B}5^T \overline{G}(b)6^T)}_{y_2}\underbrace{7^{|x|}}_{y_3}.
     \end{split}
    \end{equation}
\end{center}

Recall that $A = (a_1, a_2, \dots ,a_{2^{n/2}})$ and $B=(b_1, b_2, \dots ,b_{2^{n/2}})$ and both the sequences contain all the elements from the set $\{0,1\}^{n/2}$ in the lexicographical order. The gadgets $r, G(a)$ and $\overline{G}(b) \in \{0,1,2\}^{*}$ and $T>>|G(\cdot)|=|\overline{G}(\cdot)|=|r|=S_G$.

\begin{fact}[Fact 5.7 in \cite{QuadLowerBounds-BringmannKunnemann-15}]
\label{thm:AppFactdiffAlignment}
Let $y_1=7^{|x|}$, $y_2=\bigcirc_{b \in B}5^T\overline{G}(b)6^T$, $y_3=7^{|x|}$ as mentioned in Equation~\ref{eq:AppAlignmentGadgetsXY} above. We have the following statement to be true: The edit distance between the strings $x$ and $y$ is,
\begin{equation*}
    \delta(x,y) = \min_{x_1, x_2, x_3}(\delta(x_1,y_1)+\delta(x_2,y_2)+\delta(x_3,y_3)),
\end{equation*}
where $x_1, x_2, x_3$ ranges over all ordered partitions of $x$ and $x=x_1\bigcirc x_2\bigcirc x_3$.
\end{fact}

\begin{corollary}[of Fact~\ref{thm:AppFactdiffAlignment}]
\label{thm:AppCorJustX2}
The edit distance between the strings $x$ and $y$ is
\begin{equation*}
    \delta(x,y) = 2|x|+\min_{x_2}\delta(x_2,y_2),
\end{equation*}
such that $x_2$ ranges over all ordered partitions of $x$.
\end{corollary}
\begin{proof}
The strings $y_1$ and $y_3$ are strings of length $|x|$ (at least as large as $|x_1|$ and $|x_3|$) and consist of symbols that are not used in the entire string $x$. Therefore, the $\delta(x_1,y_1)=\delta(x_3,y_3)=|x|$ for any choice of $x_1$ and $x_3$.
\end{proof}

\begin{lemma} 
\label{thm:Appx2Format}
Given the two sequences $x$ and $y$ as mentioned above, there exists a substring $x_2$ of the form $5^T \ndots 6^T$ such that the $\delta(x,y)= 2|x|+\delta(x_2,y_2)$ when $y_2=\bigcirc_{b \in B}5^T\overline{G}(b)6^T$.
\end{lemma}
\begin{proof}
A proof by contradiction. Note that $x_2$ can be any substring of $x$. Let us assume that the (minimum) edit distance is achieved with $x_2$ that is \textbf{not} of the form $5^T\ndots6^T$, we then argue that one could change the format of $x_2$ to $5^T \ndots 6^T$ without increasing the cost. 

As we assume that $x_2$ is not of the form $5^T \ndots 6^T$ therefore it could be of any of the following forms: 
\begin{enumerate}
    \item Lets consider a scenario where $x_2$ is of the form $5^{\theta_w} \dots 6^T$ where $0 < \theta_w < T$, note that $y_2$ is already of the form $5^T \ndots 6^T$.
        \begin{equation*}
        \label{eq:AlignmentGadgetX21}
            x_2:= \overbrace{5^{\theta_w} a_i 6^T}^{w_1}\overbrace{5^T a_{i+1} 6^T \ndots 6^T}^{w_2}
        \end{equation*} 
        \begin{equation*}
        \label{eq:AlignmentGadgetY21}
            y_2:=\underbrace{5^T b_1 6^T5^T b_2 6^T \ndots 6^T}_{v_1 \bigcirc v_2}
        \end{equation*} 
        Recall from Fact~\ref{thm:AppFactdiffAlignment} that by fixing $w_1=5^{\theta_w}a_i6^T$ and $w_2=5^T a_{i+1} 6^T \ndots 6^T$ we have $\delta(x_2,y_2)=\min_{v_1, v_2} \delta(w_1,v_1)+\delta(w_2,v_2)$. 
        \begin{enumerate}
            \item Lets assume that the minimum is achieved when $v_1=5^Tb_16^{\theta_v}$, where $0 < \theta_v \leq T$. In such a scenario, $\delta(w_1,v_1)=\delta(5^{\theta_w}a_i6^T,5^Tb_16^{\theta_v})=\delta(a_i6^{T-\theta_v},5^{T-\theta_w}b_1)$. On the other hand we have $\delta(5^Ta_i6^T, v_1)=\delta(a_i6^{T-\theta_v},b_1)\leq \delta(a_i6^{T-\theta_v},5^{T-\theta_w}b_1)=\delta(w_1,v_1)$.\footnote{Consider three strings $s_1\in \Sigma^*$, and $s_2, s_3 \in \Gamma^*$, where $\Sigma$ and $\Gamma$ are two disjoint alphabet sets, i.e.~$\Sigma \cap \Gamma=\emptyset$, then $\delta(s_1\bigcirc s_2,s_3)\geq \delta(s_2,s_3)$ (similarly, $\delta(s_2,s_1\bigcirc s_3)\geq \delta(s_2,s_3)$). As the symbols in the string $s_1$ are different from symbols in $s_2$ and $s_3$ and the operations on symbols from $s_1$ will only be $delete$ or $substitute$. Therefore, one can get rid of the string $s_1$ by removing the symbols that got deleted (hence, reducing the cost) and by inserting the symbols in $s_3$ that otherwise would have been substituted by the symbols from $s_1$ (hence, maintaining the cost).} Therefore suggesting that if $v_1=5^Tb_16^{\theta_v}$ then setting $w_1=5^Ta_i6^T$ doesn't increase the cost.
            
            \item Lets assume that the minimum is achieved when $v_1=5^Tb_1^{\theta_v}$, where $0 < \theta_v \leq S_G$. $\delta(w_1,v_1)=\delta(5^{\theta_w}a_i6^T,5^Tb_1^{\theta_v})=\delta(a_i6^{T},5^{T-\theta_w}b_1^{\theta_v})\geq \delta(a_i6^{T},b_1^{\theta_v})=\delta(5^Ta_i6^T, v_1)$. Again suggesting that if $v_1=5^Tb_1^{\theta_v}$ then setting $w_1=5^Ta_i6^T$ cannot increase the cost.
            
            \item Lets assume that the minimum is achieved when $v_1=5^{\theta_v}$, where $0 \leq \theta_v \leq T$. $\delta(w_1,v_1)=\delta(5^{\theta_w}a_i6^T,5^{\theta_v})=\max(T+S_G,T+S_G+\theta_w-\theta_v)>\delta(\emptyset,5^{\theta_v})$. There by suggesting that if $v_1=5^{\theta_v}$ then setting $w_1=\emptyset$ will definitely cost less.
        \end{enumerate}
        This proves that no matter what form $v_1$\footnote{Note that we have not listed the scenario where $v_1$ is of the form $5^Tb_16^T*$. The reason being the following: If $v_1$ is of the form $5^Tb_16^T*$, then $v_2$ will be of the same form as of $x_2$ that we are arguing against.} is of, the minimum cost is achieved when $w_1=5^Ta_i6^T$ or when $w_1=\emptyset$, therefore, supporting the claim of Lemma \ref{thm:Appx2Format}. We use the same argument symmetrically to prove that $x_2$ cannot be of the form $5^T \ndots 6^{\theta_w}$, where $0< \theta_w < T$ or of the form $5^{\theta_{w_1}} \dots 6^{\theta_{w_2}}$ for $0< \theta_{w_1}, \theta_{w_2} < T$. 
    \item Consider the case where $x_2$ is of the form $a_i^{\theta_w} \dots 6^T$ where $0 < \theta_w \leq S_G$ and $y_2$ is of the form $5^T \ndots 6^T$.
        \begin{equation*}
        \label{eq:AlignmentGadgetX22}
            x_2:= \overbrace{a_i^{\theta_w} 6^T}^{w_1}\overbrace{5^T a_{i+1} 6^T \ndots 6^T}^{w_2}
        \end{equation*} 
        \begin{equation*}
        \label{eq:AlignmentGadgetY22}
            y_2:=\underbrace{5^T b_1 6^T5^T b_2 6^T \ndots 6^T}_{v_1 \bigcirc v_2}
        \end{equation*}
        \begin{enumerate}
            \item Lets assume that the minimum is achieved when $v_1=5^Tb_16^{\theta_v}$, where $0 < \theta_v \leq T$.
            Then, $\delta(w_1,v_1)=\delta(a_i^{\theta_w}6^T,5^Tb_16^{\theta_v})=\delta(a_i^{\theta_w}6^{T-\theta_v},5^Tb_1)$. The last $5$ of the substring $5^{T}b_1$ could either be substituted for a $6$ in $a_i^{\theta_w}6^{T-\theta_v}$ or could be substituted for a $\{0,1,2\}$ in $a_i^{\theta_w}$. Either way the $\delta(w_1,v_1) \geq S_G+(T-\theta_v) > Q+(T-\theta_v)\geq \delta(5^Ta_i6^T, v_1)$. Hence, suggesting that if $v_1=5^Tb_16^{\theta_v}$ then set $w_1=5^Ta_i6^T$.
            
            \item Lets assume that the minimum is achieved when $v_1=5^Tb_1^{\theta_v}$, where $0 < \theta_v \leq S_G$. $\delta(w_1,v_1)=\delta(a_i^{\theta_w} 6^T,5^Tb_1^{\theta_v})=T+\max(\theta_w,\theta_v)\geq T+\theta_v=\delta(\emptyset,v_1)$. Hence, suggesting that if $v_1=5^Tb_1^{\theta_v}$ then set $w_1=\emptyset$.
            
            \item Lets assume that the minimum is achieved when $v_1=5^{\theta_v}$, where $0 \leq \theta_v \leq T$. $\delta(w_1,v_1)=\delta(a_i^{\theta_w} 6^T,5^{\theta_v})>\delta(\emptyset,5^{\theta_v})$. Again suggesting that if $v_1=5^{\theta_v}$ then set $w_1=\emptyset$.
        \end{enumerate}   
        This again proves that no matter what form $v_1$ is of, the minimum cost is achieved when $w_1=5^Ta_i6^T$ or when $w_1=\emptyset$, therefore, supporting the claim of Lemma \ref{thm:Appx2Format}.  We use the same argument symmetrically to prove that $x_2$ cannot be of the form $5^T \ndots a_i^{\theta_w}$, where $0< \theta_w \leq S_G$ or of the form $a_i^{\theta_{w_1}} \dots a_j^{\theta_{w_2}}$ for $0< \theta_{w_1}, \theta_{w_2} \leq S_G$. 
    \item Lets consider a scenario where $x_2$ is of the form $6^{\theta_w}  \dots 6^T$ where $0 < \theta_w \leq T$.
        \begin{equation*}
        \label{eq:AlignmentGadgetX23}
            x_2:= \overbrace{6^{\theta_w}}^{w_1}\overbrace{5^T a_{i+1} 6^T  \ndots 6^T}^{w_2}
        \end{equation*} 
        \begin{equation*}
        \label{eq:AlignmentGadgetY23}
            y_2:=\underbrace{5^T b_1 6^T5^T b_2 6^T \ndots 6^T}_{v_1 \bigcirc v_2}
        \end{equation*}
        \begin{enumerate}
            \item Lets assume that the minimum is achieved when $v_1=5^Tb_16^{\theta_v}$, where $0 < \theta_v \leq T$. $\delta(w_1,v_1)=\delta(6^{\theta_w},5^Tb_16^{\theta_v})= \max(T+S_G,T+S_G+\theta_v-\theta_w) > Q+|\theta_w-\theta_v|\geq \delta(5^Ta_i6^{\theta_w},5^Tb_16^{\theta_v})$. Hence proving that if $v_1=5^Tb_16^{\theta_v}$ then set $w_1=5^Ta_i6^T$.
            
            \item Lets assume that the minimum is achieved when $v_1=5^Tb_1^{\theta_v}$, where $0 < \theta_v \leq S_G$. $\delta(w_1,v_1)=\delta( 6^{\theta_w},5^Tb_1^{\theta_v})=T+\theta_v=\delta(\emptyset,v_1)$. Hence proving that if $v_1=5^Tb_1^{\theta_v}$ then set $w_1=\emptyset$.
            
            \item Lets assume that the minimum is achieved when $v_1=5^{\theta_v}$, where $0 \leq \theta_v \leq T$. $\delta(w_1,v_1)=\delta(6^{\theta_w},5^{\theta_v})\geq \delta(\emptyset,5^{\theta_v})$. Hence again proving that if $v_1=5^{\theta_v}$ then set $w_1=\emptyset$.
        \end{enumerate}  
        This again proves that no matter what $v_1$ is, the minimum cost is achieved when $w_1=5^Ta_i6^T$ or when $w_1=\emptyset$, therefore, supporting the claim of Lemma \ref{thm:Appx2Format}. Again we use the same argument symmetrically to prove that $x_2$ cannot be of the form $5^T \ndots 5^{\theta_w}$, where $0< \theta_w \leq T$ or of the form $6^{\theta_{w_1}}  \dots 5^{\theta_{w_2}}$ for $0< \theta_{w_1}, \theta_{w_2} \leq T$. 
    \end{enumerate}
    Therefore, proving that there exists an $x_2$ of the form $5^T \ndots 6^T$ such that $\delta(x,y)=2|x|+\delta(x_2,y_2)$.
\end{proof} 
Using Corollary~\ref{thm:AppCorJustX2} and Lemma \ref{thm:Appx2Format} for a chosen partition of $y=y_1y_2y_3$ into three substrings,  we have shown that there exists a partition of $x=x_1x_2x_3$ such that $\delta(x,y)=2|x|+\delta(x_2,y_2)$ and $x_2$ is of the form $5^T  \ndots 6^T$. 

We still have to prove that there exists a coarse alignment $\coarseAlignC \in \mathcal{C}_{(3 \cdot 2^{n/2}-2), 2^{n/2}}$ such that $\delta(x_2,y_2)=\sum_{(i,j)\in \coarseAlignC}\delta(u_i,v_j)$ where $u_i$ and $v_j$ are substrings of $x_2$ and $y_2$ as mentioned above. To prove that we use the result from the Lemma \ref{thm:AppSeparatorAligns} below.

\begin{lemma}
\label{thm:AppSeparatorAligns}
Given two substrings $x_2$ and $y_2$, both of the form $5^T \ndots 6^T$. There exists a separator $6^T5^T$ in the string $x_2$ (assuming that $x_2$ has one) that completely aligns with a separator $6^T5^T$ from the other string $y_2$ (also assuming that $y_2$ has one). A separator is a substring $6^T5^T$ that repeatedly occurs in both the strings $x_2$ and $y_2$.
\end{lemma}
\begin{proof}
We have already established that $x_2$ is of the form $5^T  \ndots 6^T$, and earlier we chose $y_2$ to be of the form $5^T  \ndots 6^T$. Let,
\begin{equation*}
    x_2=5^Ta'_16^T5^Ta'_2\overbrace{6^T5^T}^{\text{sep }w}a'_36^T  \ndots 5^Ta'_{t_a}6^T,
\end{equation*}
\begin{equation*}
    y_2=5^Tb'_16^T5^Tb'_26^T5^Tb'_3\underbrace{6^T5^T}_{\text{sep}}b'_46^T  \ndots 5^Tb'_{t_b}6^T,
\end{equation*}
such that $\forall i \in [t_a], \forall j \in [t_b], a'_i \in \{0,1,2\}^*$ and $b'_j \in \{0,1,2\}^*$ and $ Q < |a'_i|=|b'_j|=S_G << T$. Without loss of generality, lets assume $t_a \leq t_b$. Then, $\delta(x_2,y_2)\geq 2T(t_b-t_a)+S_G(t_b-t_a)$, because there will definitely be $2T(t_b-t_a)+S_G(t_b-t_a)$ number of symbols inserted to convert the string $x_2$ to $y_2$. Also, it is easy to see that $\delta(x_2,y_2)\leq 2T(t_b-t_a)+S_G(t_b-t_a)+Q\cdot t_a$
We will now show that there is no such optimal alignment of symbols where there is no separator in $x_2$ that completely aligns with a separator in $y_2$.
\newline
A proof by contradiction. Let us assume that no separator in $x_2$ aligns with any separator in $y_2$. Let $w=6^T5^T$ be a separator from $x_2$ and let $w$ align with a substring $v$ from $y_2$ in an optimal alignment. The substring $v$ can be of the following forms:
\begin{enumerate}
    \item Let $v=b_i^{\theta_1}6^T5^Tb_{i+1}^{\theta_2}$ with $0 \leq \theta_1, \theta_2 \leq S_G$. This is a trivial case leading to contradiction.
    \item Let $v=b_i^{\theta_1}6^{\theta_2}$, with $0 \leq \theta_1 \leq S_G, 0\leq \theta_2 \leq T$. The $\delta(w,v)=\delta(6^T5^T,b_i^{\theta_1}6^{\theta_2})> 2T-\theta_1-\theta_2$. Note that this is a deletion cost because of the mismatch in the number of symbols in $w$ and $v$. As $\theta_1 \leq S_G$ and $\theta_2 \leq T$. The deletion cost $\geq T-S_G$. When $v=5^{\theta_1}b_i^{\theta_2}$ with $0\leq \theta_1 \leq T, 0 \leq \theta_2 \leq S_G$ we follow the same argument and get deletion cost $\geq T-S_G$.
    \item Let $v=6^{\theta_1}5^{\theta_2}$, with $0 \leq \theta_1, \theta_2 < T$. As the separator $w$ aligns with the substring $v$, that means the $6^{T- \theta_1}$s that are \emph{prefixing} $v$ has to be either inserted or substituted by some symbols $\in \{0,1,2\}$, or matched with the $6$s of a separator appearing before $w$ in the string $x_2$. For the 6s that get inserted, it is cheaper to match the 6s with the $6$s that get deleted instead in $\delta(6^T5^T,6^{\theta_1}5^{\theta_2})$ =$\delta(6^{T-\theta_1}5^{T-\theta_2},\emptyset)$. Same argument holds for the $5$s that are \emph{suffixing} $v$. Suppose these 6s are substituted from the 5s or the $\{0,1,2\}$ again it costs the same to just delete these and match freely with the 6s that are getting deleted in the alignment of $w$ and $v$. And say for any reason some 6 in the prefix gets matched with a 6 from a separator preceding $w$, then the deletion plus the substitution cost is $> T+S_G$. Therefore, it is cheaper to align the separator $w$ with the separator from $y_2$ that is surrounding $v$.
    \item Let $v=6^{\theta_1}5^Tb_i^{\theta_2}$, with $0 \leq \theta_1 < T, 0\leq \theta_2 \leq S_G$. Similar to the argument in item 3, we analyze the cost to generate the substring $6^{T- \theta_1}$s that is prefixing $v$. Even in this case the deletion plus substitution cost of not aligning the separators is $> T+S_G$. A similar argument holds when $v=b_i^{\theta_1}6^T5^{\theta_2}$, with $0 \leq \theta_1 \leq S_G, 0\leq \theta_2 < T$.
    \item Let $v=5^{\theta_1}b_i6^{\theta_2}$, with $0 \leq \theta_1,\theta_2 \leq T$. Consider the last 6 of the string $w$. Whether this 6 matches with a 6 from $6^{\theta_2}$ in $v$ or substitutes any symbol from the $b_i$ or $5^\theta_1$ in $v$ the deletion and the substitution cost is $> T$. 
    \item Let $v=5^{\theta_1}b_i6^T5^{\theta_2}$, with $0 \leq \theta_1 \leq T$, $0 \leq \theta_2 < T$. The argument here is similar to that of the argument in item 3, where we analyze the cost of generating the substring $5^{T-\theta_2}$ which succeeds the substring $v$ in $y_2$. Either we align the separators completely or pay a deletion plus substitution cost $>T+S_G$. Same argument can be used when $v=6^{\theta_1}5^Tb_i6^{\theta_2}$, with $0 \leq \theta_1 < T$, $0 \leq \theta_2 \leq T$. Just that, here we analyze the cost of generating the substring $6^{T-\theta_1}$ which precedes $v$.
    \item Let $v=6^{\theta_1}5^Tb_i6^T5^{\theta_2}$, with $0 \leq \theta_1,\theta_2 < T$. The deletion and substitution cost to generate the substring $6^{T-\theta_1}$ which is a prefix to the substring $v$ (or $5^{T-\theta_2}$ a suffix to $v$) is $>T+S_G$. 
    
\end{enumerate}
The deletion and substitution cost induced when a separator $6^T5^T$ from $x_2$ doesn't align with a separator from $y_2$ is $\geq S_G$\footnote{The deletion and substitution costs induced when a separator $6^T5^T$ from $x_2$ doesn't align with a separator from $y_2$ is either $\geq T-S_G$ or $> T$ or $> T +S_G$. As $S_G << T$, the deletion and substitution costs are always higher than $S_G$.}. As there are a total of $(t_a -1)$ such separators in $x_2$, if no separator from $x_2$ aligns with any separator from $y_2$ the total cost of the transformation then becomes $\geq 2T(t_b-t_a)+S_G(t_b-t_a) + S_G \cdot (t_a-1)$. Therefore, such a transformation is not an optimal transformation because the edit distance $\delta(x_2,y_2)\leq 2T(t_b-t_a)+S_G(t_b-t_a)+Q\cdot t_a$ and as long as $t_a > 1$ we have $S_G\cdot (t_a-1)>Q\cdot t_a$\footnote{Recall that, $S_G=|G(\cdot)|=|\overline{G}(\cdot)|$ and $Q=\delta(G(a),\overline{G}(b))$ such that $(a,b)$ wasn't a satisfying assignment. Clearly, $S_G>Q$. Also one can chose to make the gadgets in such a way that $S_G(t-1)>Q\cdot t$ for all $t>1$.} Thus, we prove that in an optimal transformation there will always exist a separator in $x_2$ that will freely align with a separator in $y_2$.
\end{proof}
As proved in Lemma \ref{thm:AppSeparatorAligns}, the existence of a completely aligning separator pair in an optimal alignment lets us to make the following statement: $\delta(x_2,y_2)=\delta(x',y')+\delta(x'',y'')$, such that $x',x'',y',y''$ are all of the form $5^T \ndots 6^T$. 
\begin{equation*}
    x_2=\overbrace{5^Ta'_16^T5^Ta'_2 \ndots 6^T}^{x'}\overbrace{5^Ta'_i6^T \ndots 5^Ta'_{t_a}6^T}^{x''},
\end{equation*}
\begin{equation*}
    y_2=\underbrace{5^Tb'_16^T5^Tb'_26^T5^Tb'_3 \ndots 6^T}_{y'}\underbrace{5^Tb'_j6^T  \ndots 5^Tb'_{t_b}6^T}_{y''},
\end{equation*}
Using this argument recursively we can see that the strings $x_2$ and $y_2$ gets partitioned into substrings of the form $5^T \ndots 6^T$ such that $\delta(x_2,y_2)$ is the sum of pair wise edit distance of these substrings, where only one of the substrings in each pair contains more than zero separators. This proves the claim of our Lemma \ref{thm:AppEditDistanceCourseAlignment}.
\end{proof} 

Using the result of the above mentioned Lemma \ref{thm:AppEditDistanceCourseAlignment} we have provided an insight on the relation between edit distance of sequences $x$ and $y$ and edit distance between the \emph{gadgets} of $x$ and \emph{gadgets} of $y$. We now prove the correctness of our reduction. 

\begin{lemma}
\label{thm:AppCorrectnessOfReduction}
For every $n$, there exists a constant $C^* \in \mathbb{Z}$ such that
\begin{equation*}
    \delta(x,y) < C^*
\end{equation*}
if and only if $\propertyPPedit(\matrixM^{\truthtable(S)})=1$.
\end{lemma}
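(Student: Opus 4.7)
The plan is to set $C^* = 2|x| + T_r$ and to translate the condition on $\propertyPPedit$ into a condition on the minimum cost over coarse alignments, using Lemma~\ref{thm:EditDistanceCourseAlignment} which already gives us $\delta(x,y) = 2|x| + \min_{\coarseAlignC \in \mathcal{C}_{3L-2, L}} \editCost(\coarseAlignC)$ (where $L = 2^{n/2}$). So the whole question reduces to: does there exist a coarse alignment of cost strictly below $T_r$, and this must be shown to be equivalent to the existence of a low-cost path in $\matrixM^{\truthtable(S)}$ under the jump parameter regime encoded by the promise version of $\propPedit$.

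For the forward direction, I would start from $\propertyPPedit(\matrixM^{\truthtable(S)}) = 1$, which by the $\mathcal{S}$-promise forces $\propPedit(\matrixM^{\truthtable(S)}, Q) = 1$. Unfolding the definition, there is a path $\pathP \in \setOfPaths_{2L-1, L}$ with $\pathCost(\matrixM^{\truthtable(S)}, \pathP, Q) < T_r$. Since $\mu$ appears additively and nonnegatively in the jump cost, this same bound holds for every $\mu \in [0,Q]$. I would then apply Algorithm~\ref{Alg:PtoCmain} to convert $\pathP$ into a coarse alignment $\coarseAlignC \in \mathcal{C}_{3L-2, L}$. The calibration of the constants $C_0, C_1, C_{jump}$ in Definition~\ref{def:PP_EditProperty} is precisely what guarantees that $\editCost(\coarseAlignC) \leq \pathCost(\matrixM^{\truthtable(S)}, \pathP, \mu)$ for some valid $\mu$, giving $\editCost(\coarseAlignC) < T_r$, and hence $\delta(x,y) < 2|x| + T_r = C^*$ by Lemma~\ref{thm:EditDistanceCourseAlignment}.

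For the reverse direction, I assume $\propertyPPedit(\matrixM^{\truthtable(S)}) = 0$, so by the promise $\propPedit(\matrixM^{\truthtable(S)}, 0) = 0$, and thus for every path $\pathP$, $\pathCost(\matrixM^{\truthtable(S)}, \pathP, 0) \geq T_r$. Increasing $\mu$ only increases the jump cost, so the same lower bound propagates to all $\mu \geq 0$. Now I need to show that every coarse alignment $\coarseAlignC \in \mathcal{C}_{3L-2, L}$ satisfies $\editCost(\coarseAlignC) \geq T_r$. For this I would invoke Lemmas~\ref{thm:AppPtoCmain} and~\ref{thm:AppEitherPathorD}, which are designed to go the other direction: either $\coarseAlignC$ itself corresponds to a path whose cost under some $\mu$ matches $\editCost(\coarseAlignC)$, or the structure of $\coarseAlignC$ forces an even higher cost due to mismatched many-to-one gadget pairings. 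Combining with Lemma~\ref{thm:EditDistanceCourseAlignment} then gives $\delta(x,y) \geq 2|x| + T_r = C^*$.

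The main obstacle I expect is the reverse direction, specifically translating a coarse alignment, which has the freedom to pair a single index on one side with an arbitrarily long contiguous block on the other, back into a path object that obeys the rigid monotone and jump-structure constraints of $\setOfPaths_{2L-1, L}$. Many coarse alignments will not admit a direct lift to a single path, and this is precisely why the jump parameter $\mu$ was introduced in $\propPedit$ in the first place, and why the path definition allows row changes with carefully accounted column increments. The numerical bookkeeping justifying that the jump costs $C_{jump}|i_{p+1} - i_p| + \mu - C_{\matrixM_{i_pj_p}} - C_{\matrixM_{i_{p+1}j_{p+1}}}$ in the path formalism are tight against the insertion/deletion/substitution costs incurred by a non-uniform coarse alignment is the content of Lemmas~\ref{thm:AppPtoCmain} and~\ref{thm:AppEitherPathorD}, which I would treat as the technical heart of the reduction and which must be invoked with care to close both directions simultaneously.
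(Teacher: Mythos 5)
Your proposal matches the paper's own proof essentially line-for-line: the same choice $C^* = 2|x| + T_r$, the same use of Lemma~\ref{thm:EditDistanceCourseAlignment} to reduce everything to coarse alignments, the forward direction via Algorithm~\ref{Alg:PtoCmain} and the monotonicity of $\pathCost$ in $\mu$, and the reverse direction via Lemmas~\ref{thm:AppPtoCmain} and~\ref{thm:AppEitherPathorD}. You also correctly flag the coarse-alignment-to-path translation as the technical crux, which is exactly where the paper delegates to those appendix lemmas.
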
     
\begin{proof}
If $\propertyPPedit(\matrixM^{\truthtable(S)})=1$ then it implies $\propPedit(\matrixM^{\truthtable(S)},Q)=1$ because of the promise that the branching program $S$ belongs to the set $\mathcal{S}$. The statement $\propPedit(\matrixM^{\truthtable(S)},Q)=1$ implies that there exists a $path$ $\pathP \in \setOfPaths_{2L-1,L}$ (here $L=2^{n/2}$) such that even with the highest allowed jump parameter $\mu=Q$ the path-cost is $\pathCost(\matrixM^{\truthtable(S)},\pathP, Q) < T_r$, for a fixed threshold $T_r$. Which invariably means that $\forall \mu$, $\pathCost(\matrixM^{\truthtable(S)},\pathP, \mu) < T_r$. Using the Algorithm \ref{Alg:PtoCmain} that generates a  $\text{coarse alignment}$ $\coarseAlignC \in \mathcal{C}_{3L-2,L}$ for a given path $\pathP \in \setOfPaths_{2L-1,L}$ as an input, we get a coarse alignment $\coarseAlignC$ corresponding to path $\pathP$. We get $\editCost(\coarseAlignC)< T_r$ because $\forall \mu, \pathCost(\matrixM^{\truthtable(S)},\pathP, \mu) < T_r$.\footnote{This happens due to the choices of $C_0$, $C_1$, $C_{jump}$ that were made in Definition \ref{def:PP_EditProperty}.} Also, from Lemma \ref{thm:AppEditDistanceCourseAlignment} in Appendix \ref{sec:AppReduction} we have that $\delta(x,y)=2|x| + \min_{\coarseAlignC \in \mathcal{C}_{3 L-2, L}} \editCost(\coarseAlignC)$. This implies $\delta(x,y)< 2|x| + T_r$ and, we set our constant $C^*=2|x|+T_r$. 

We now prove the other direction. If $\propertyPPedit(\matrixM^{\truthtable(S)})=0$ it implies that $\propPedit(\matrixM^{\truthtable(S)},0)=0$ as the branching program $S\in \mathcal{S}$. Which in turn implies that $\forall \mu, \forall \pathP \in \setOfPaths_{2L-1,L}$, $\pathCost(\matrixM^{\truthtable(S)},\pathP,\mu)\geq T_r$. Using the result from Lemma~\ref{thm:AppPtoCmain} and \ref{thm:AppEitherPathorD} in Appendix~\ref{sec:AppReduction} we show that if $\forall \mu, \forall \pathP \in \setOfPaths_{2L-1,L}$, $\pathCost(\matrixM^{\truthtable(S)},\pathP,\mu)\geq T_r$ then $\forall \coarseAlignC \in \mathcal{C}_{3L-2, L}$, $\editCost(\coarseAlignC) \geq T_r$ which implies $\delta(x,y)\geq 2|x| + T_r$. Thus, implying $\delta(x,y) \geq C^*$.
\end{proof}

\begin{algorithm}[h]
\label{Alg:PtoCmain}
\SetAlgoLined
\KwResult{Given an input a $\textit{path}$ $\pathP \in \setOfPaths_{2L-1,L}$, generate a $\textit{coarse alignment}$ $\coarseAlignC \in \mathcal{C}_{3L-2,L}$.}
 C=[], i=1,j=1, k=|P|\;
 \While{(i $\leq$ k)}{
  (a+1,b)=P[i]\;
  (c+1,d)=P[i+1]\;
  
  \eIf{(a $\neq$ c)}{
   \eIf{(d=b)}{
   C[j]=$((a+b, \dots ,c+d),d)$\;
   j=j+1\;
   }{
   C[j]=$(c+d,(b, \dots ,d))$\;
   j=j+1}
   i=i+2\;
   }{
   C[j]=$(a+b,b)$\;
   j=j+1\;
   i=i+1\;
  }
 }
 \Return C\;
\caption{Convert a given $path$ $\pathP$ to a $\text{coarse alignment}$ $\coarseAlignC$.}
\end{algorithm}

\begin{lemma}
\label{thm:AppPtoCmain}
Algorithm \ref{Alg:PtoCmain} when given a $path$ $\pathP=((i_1,j_1),(i_2,j_2), \dots, (i_k,j_k)) \in$ $\setOfPaths_{2L-1,L}$ as an input, outputs a coarse alignment $\coarseAlignC \in \mathcal{C}_{3L-2,L}$, such that $|\coarseAlignC| \leq |\pathP|$.
\end{lemma}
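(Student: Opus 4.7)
The plan is to separately establish the size bound $|C| \le |P|$ and the validity of $C$ as a coarse alignment, by a direct case analysis on the three branches of the loop.

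For the size bound, I would argue by straight bookkeeping of the loop: every iteration appends exactly one pair to $C$ (incrementing $j$ by one) and advances the path index $i$ by either $1$ (the no-jump branch) or $2$ (either jump branch). Since $i$ starts at $1$ and the loop terminates as soon as $i > |P|$, the total number of iterations is at most $|P|$, giving $|C|\le|P|$.

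For coarse-alignment validity, I would check the five conditions of Definition~\ref{def:course_alignment} in order. The singleton condition holds by inspection, since every emitted pair has at least one component that is a single integer, and both components are non-empty. Disjointness and ordering rest on the following bookkeeping, which uses that with the algorithm's notation $P[i]=(a{+}1,b)$ the combined index of the path position is $a+b$: after a no-jump iteration emitting $(a+b, b)$, the next iteration reads $P[i+1]=(a{+}1,b{+}1)$ and so contributes combined index $a+b+1$ and column $b+1$; after a jump-down iteration emitting the contiguous combined-index sequence $(a+b,\ldots,c+b)$ paired with column $b$, the next iteration reads $P[i+2]$, whose column is $b+1$ (forced by the ``horizontal after a jump'' rule) and whose combined index is $c+b+1$; and the jump-up case is dual, since $c+d=a+b$, so a single combined index $c+d$ is paired with the contiguous column sequence $(b,\ldots,d)$ and the next iteration begins at column $d+1$ and combined index $a+b+1$. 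This gives strict increase of both row-plus-column indices and column indices across consecutive iterations, plus contiguity of the sequences produced within the jump branches.

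Coverage then follows by telescoping: the emitted combined indices form a contiguous range from $i_1+j_1-1=i_1$ up to $i_k+j_k-1=i_k+L-1$, sitting inside $[3L-2]$ since $i_k\le 2L-1$; and the emitted columns form the contiguous range $[1,L]$ because $j_1=1$, $j_k=L$, and the above case analysis shows no column is skipped between or within iterations. The main obstacle I expect is the careful verification that the algorithm's ``absorb the forced horizontal step'' behaviour (the $i\leftarrow i+2$ in the jump branches) stitches combined indices and columns together without gaps or overlaps; a secondary subtlety is the boundary case in which the loop reads $P[i+1]$ at $i=|P|$, which I would handle by invoking the path axiom that a jump at position $p$ requires a horizontal step at $p+1$, so the last iteration is necessarily the no-jump branch (equivalently, $P[|P|+1]$ is never actually used on the jump branch).
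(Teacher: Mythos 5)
Your proof matches the paper's argument: both establish the size bound by counting loop iterations (each appends one pair and advances $i$ by 1 or 2), then verify the coarse-alignment axioms through branch-by-branch bookkeeping of the combined index $i_p+j_p-1$ and the column index, showing contiguity and strict monotonicity within and between the emitted pairs, with coverage following by telescoping from $(i_1, j_1)=(i_1,1)$ to $(i_k,j_k)=(i_k,L)$. One small imprecision: your boundary justification cites the path axiom ``a jump into $p$ forces a horizontal step out of $p$,'' which applies only for $1<p<|P|$ and does not govern $p=|P|$; the correct reason the last iteration defaults to the no-jump branch is simply that $P[|P|+1]$ does not exist, so no comparison can succeed — but this does not affect the correctness of the rest of your argument.
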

\begin{proof}
We provide a simple Algorithm \ref{Alg:PtoCmain} that when given as an input a $\textit{path}$ $\pathP \in \setOfPaths_{2L-1,L}$, it generates a sequence $\coarseAlignC=((p_1,q_1),(p_2,q_2), \dots ,(p_m,q_m))$. It is easy to see from the algorithm that $k=|\pathP|\geq m=|\coarseAlignC|$. We now show that the sequence $\coarseAlignC$ generated by this algorithm indeed is a $\textit{coarse alignment}$ $\coarseAlignC \in \mathcal{C}_{3L-2,L}$. 

For all neighbouring pairs $(i_l,j_l),(i_{l+1},j_{l+1}) \in \pathP$ that the algorithm reads as inputs it checks if there is a \textit{jump}\footnote{Refer to the Definition \ref{def:PP_EditProperty} for the definitions of a \textit{path} and also \textit{jumps} in a path.} between these neighbouring pairs. If there is no jump in the path $\pathP$ at $(i_l,j_l)$ and $(i_{l+1},j_{l+1})$, then the algorithm just adds a term $(p_*,q_*)$ to the sequence $\coarseAlignC$ such that $|p_*|=|q_*|=1$ and $p_*=i_l+j_l-1$ and $q_*=j_l$ and changes the position of the pointer to the next term. But, if there is a jump in the path $\pathP$ at $(i_l,j_l)$ and $(i_{l+1},j_{l+1})$ then the algorithm checks whether the jump is to a row above or to a row below. When the jump is to a row below then the algorithm adds a sequence $(p_*,q_*)$ such that $p_*=(i_l+j_l-1, \dots,i_{l+1}+j_{l+1}-1)$ ensuring that $|p_*|$ is the number of rows jumped below and $q_*=j_l=j_{l+1}$ ensuring that $|q_*|=1$ and changes the pointer to the next but one term. When the jump is to a row above then the algorithm adds a sequence $(p_*,q_*)$ such that $q_*=(j_l, \dots,j_{l+1})$ ensuring that $|q_*|$ is the number of rows jumped above and $p_*=i_l+j_l+1=i_{l+1}+j_{l+1}+1$\footnote{According to the definition of a $path$, when a jump is to a row above then $i_l+j_l=i_{l+1}+j_{l+1}$.} ensuring that $|p_*|=1$ and then changes the pointer to the next but one term. 

As the definition of a $path$ requires that $1=j_1 < j_2 < \dots<j_k=L$. Therefore, it is easy to see that $\forall r \in [m-1], q_{r} <_e q_{r+1}$\footnote{Given two sequences $a$ and $b$, $a <_e b$ implies $\forall u \in a, \forall v \in b, u<v$.} and $\forall r,s \in [m]$, $q_r \cap q_s=\emptyset$ and $\cup_{r=1}^{m}q_r=[L]$. Also using the same definition we know that either $i_l=i_{l+1}$ (implying $j_{l+1}=j_l +1$) or $i_l<i_{l+1}$ (implying $j_{l+1}=j_l$) or $i_l>i_{l+1}$ (implying $j_{l+1}=j_l +i_l-i_{l+1}$) therefore, ensuring that $i_{l+1}+j_{l+1}\geq i_l+j_l$ therefore proving that $\forall r \in [m-1], p_{r} <_e p_{r+1}$ and $\forall r,s \in [m]$, $p_r \cap p_s=\emptyset$ and $\cup_{r=1}^{m}p_r=[(i_1+j_1-1) \dots (i_{k}+j_{k}-1)]$.

It is given that $\pathP\in \setOfPaths_{2L-1,L}$ that implies $1 \leq i_1 \leq 2L-1, j_1=1$ and $1 \leq i_k \leq 2L-1, j_k=L$. Therefore, it is now clear that the sequence $\coarseAlignC$ produced by the Algorithm \ref{Alg:PtoCmain} is indeed a coarse alignment $\coarseAlignC \in \mathcal{C}_{3L-2,L}$.
\end{proof}

\begin{lemma}
\label{thm:AppEitherPathorD}
Algorithm \ref{Alg:CtoPmain} when given a coarse alignment $\coarseAlignC \in \mathcal{C}_{3L-2,L}$ (with $L=2^{n/2}$) as an input outputs a sequence $\pathP$. This sequence $\pathP$ is \textbf{either} a path $\pathP \in \setOfPaths_{2L-1,L}$ \textbf{or} there exists another coarse alignment $\coarseAlignD \in \mathcal{C}_{3L-2,L}$ for which a path $\pathR \in \setOfPaths_{2L-1,L}$ can be generated using Algorithm \ref{Alg:CtoPmain} and the $\editCost(\coarseAlignD) \leq \editCost(\coarseAlignC)$.
\end{lemma}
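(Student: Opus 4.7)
The plan is to analyze when Algorithm~\ref{Alg:CtoPmain} produces a valid element of $\setOfPaths_{2L-1,L}$ and, when it does not, to construct a modification $\coarseAlignD$ of $\coarseAlignC$ that does. First I would describe the intended behaviour of Algorithm~\ref{Alg:CtoPmain} as the natural inverse of Algorithm~\ref{Alg:PtoCmain}: it reads the pairs $(p_i,q_i)$ of $\coarseAlignC$ one at a time, emitting a no-jump cell $(p_i-q_i+1,\,q_i)$ whenever $|p_i|=|q_i|=1$, a down-jump of length $|p_i|$ in column $q_i$ whenever $|p_i|>1$, and an up-jump that advances the column by $|q_i|-1$ whenever $|q_i|>1$. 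The output $\pathP$ is a valid path precisely when no two consecutive pairs of $\coarseAlignC$ are both non-singletons (so that path condition~3 of Definition~\ref{def:PP_EditProperty} holds), when the coverage $\bigcup_i q_i=[L]$ and the column-ordering are as required (which is automatic from Definition~\ref{def:course_alignment}), and when the diagonal range $[n',n'']=\bigcup_i p_i$ satisfies $n'\leq 2L-1$ and $n''\geq L$ so that columns~$1$ and~$L$ are actually reachable.

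When one of these conditions fails, the plan is to build $\coarseAlignD$ by a local repair. For the main obstruction -- two consecutive non-singleton pairs $(p_i,q_i),(p_{i+1},q_{i+1})$ -- I would peel one diagonal off the end of $(p_i,q_i)$ (or the start of $(p_{i+1},q_{i+1})$) to create an intermediate singleton-singleton pair, producing three pairs in place of the original two, with a no-jump cell inserted between the two jumps. At the gadget level this corresponds to splitting a block of the form $5^T g 6^T\ndots 6^T$ aligned against $\bigcirc 5^T\overline{G}(b)6^T$ along an internal separator pair $6^T5^T$, and by Lemma~\ref{thm:AppSeparatorAligns} such a decomposition along separators does not increase the total edit cost. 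For the boundary obstruction ($n'>2L-1$ or $n''<L$) I would slide the diagonal range inward: since $\delta(r,\overline{G}(b))=Q$ while $\delta(G(a),\overline{G}(b))\leq Q$, replacing a prefix or suffix of dummy-gadget pairings by real-gadget pairings cannot increase $\editCost$.

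The main obstacle will be the careful edit-cost accounting in the first repair. One needs to verify, for each of the four shapes of consecutive jumps (down-down, down-up, up-down, up-up), that the sum of the edit costs of the two smaller blocks together with the inserted singleton cell is at most the edit cost of the original combined block. This is exactly the place where the constants $C_0,C_1,C_{\mathrm{jump}},S_G,T$ of Definition~\ref{def:PP_EditProperty} and Theorem~\ref{Thm:BPSATtoEDIT} were tuned, so that the additive jump cost in $\pathCost$ faithfully reflects the edit distance of the corresponding concatenated gadget block; the separator-domination argument of Lemma~\ref{thm:AppSeparatorAligns} is what ultimately makes the inequality go through. Once these local repairs have been applied a bounded number of times, $\coarseAlignD$ satisfies all the structural conditions and Algorithm~\ref{Alg:CtoPmain} applied to it outputs a valid path $\pathR\in\setOfPaths_{2L-1,L}$, completing the argument.
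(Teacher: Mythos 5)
The proposal misidentifies the obstruction that can make Algorithm~\ref{Alg:CtoPmain}'s output fail to be a valid path, and as a result the proposed repairs do not address the actual problem.

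You claim the main obstruction is two consecutive non-singleton pairs in $\coarseAlignC$, causing path condition~3 of Definition~\ref{def:PP_EditProperty} to fail. But condition~3 is \emph{never} violated by the output of Algorithm~\ref{Alg:CtoPmain}, regardless of whether adjacent pairs are non-singletons. Each non-singleton pair $(p_l,q_l)$ (say $|p_l|>1$, $|q_l|=1$) produces two consecutive cells with the \emph{same} column, i.e.\ the endpoints of a jump, and the contiguity conditions on a coarse alignment ($\min p_{l+1}=\max p_l+1$, $\min q_{l+1}=\max q_l+1$) force the first cell contributed by the next pair $(p_{l+1},q_{l+1})$ to be a same-row step from $(\max p_l - \max q_l +1, \max q_l)$ to $(\max p_l-\max q_l+1,\max q_l+1)$. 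So every jump emitted by the algorithm is automatically followed by a no-jump cell, exactly as condition~3 requires. (Condition~5 of Definition~\ref{def:course_alignment} rules out $|p_l|>1$ and $|q_l|>1$ simultaneously, so this case analysis is exhaustive.) Your proposed ``peel a diagonal off to insert a singleton--singleton pair'' repair, and the appeal to Lemma~\ref{thm:AppSeparatorAligns} to bound its cost, is therefore repairing something that was never broken, and Lemma~\ref{thm:AppSeparatorAligns} is about forcing separators to align inside a single optimal string alignment --- it is not a tool for comparing the $\editCost$ of two different coarse alignments.

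The actual obstruction, which the paper captures with the notion of \emph{bad terms}, is that a pair $(p_l,q_l)$ may contain indices $a\in p_l$, $b\in q_l$ with $a-b<0$ or $a-b\geq 2L-1$, so that the generated cell $(a-b+1,b)$ has an out-of-range row index. This can happen \emph{anywhere} in $\coarseAlignC$, not only at the diagonal endpoints $n',n''$; your boundary repair (``slide the diagonal range inward'') only touches the endpoints and does nothing for a group of bad terms sitting in the middle of the alignment. The paper's Fact~\ref{thm:AppFactDlessC} handles exactly this: it partitions $\coarseAlignC$ into maximal groups of bad terms, classifies them (drift below the valid band, above it, or both), shows that the ``both'' and ``crossing'' cases already force $\editCost(\coarseAlignC)$ above that of the trivial alignment $\coarseAlignN=((1,1),\dots,(L,L))$, and for the remaining cases locally rebalances the non-singleton pairs inside a bad group against neighboring singleton pairs to bring every index difference $a-b$ back into $[0,2L-1)$ without increasing $\editCost$, after which Fact~\ref{thm:AppFactGoodtermImpliesPath} shows the algorithm outputs a valid path. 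To complete the lemma you need some version of this good/bad classification and rebalancing argument; the proposal as written has no mechanism for producing a legal row range for interior bad groups.
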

\begin{proof}
Let $\coarseAlignC=((p_1,q_1),(p_2,q_2), \dots ,(p_m,q_m))$ be a coarse alignment from the set $\mathcal{C}_{3L-2,L}$. We classify every element $(i,j) \in \coarseAlignC$ into $bad$ and $good$ terms in the following way:
\begin{equation*}
    \bad(i,j)=\begin{cases}
    1, & \text{if } \exists a \in i \text{ and } \exists b \in j \text{ such that } (a-b)<0 \text{ or } (a-b)\geq 2L-1\\
    0 & \text{otherwise}.
  \end{cases}
\end{equation*}
The $\editCost$ for alignment $\coarseAlignC$ will be:

\begin{equation*} \label{eq1}
\begin{split}
\editCost(\coarseAlignC) & = \Sigma_{(i,j) \in C} \delta(u_i,v_j) \\
 & = \underbrace{\Sigma_{(i,j) \in \coarseAlignC, (\bad(i,j)=1)} \delta(u_i,v_j)}_{\text{bad terms}} + \underbrace{\Sigma_{(i,j) \in \coarseAlignC, (\bad(i,j)=0)} \delta(u_i,v_j)}_{\text{good terms}},
\end{split}
\end{equation*}
where $u_i=\bigcirc_{l' \in i}5^Tg_{l'}6^T$\footnote{Note that $g_{l'}=G(a_{l'+1-2^{n/2}})$ when $0 \leq (l'-2^{n/2}) < 2^{n/2}$ and $g_{l'}=r$ otherwise.} and $v_j=\bigcirc_{l'' \in j}5^T\overline{G}(b_{l''})6^T$.

\begin{fact}
\label{thm:AppFactDlessC}
For every coarse alignment $\coarseAlignC \in \mathcal{C}_{3L-2,L}$ that contains bad terms, there exists a coarse alignment $\coarseAlignD \in \mathcal{C}_{3L-2,L}$ such that $\coarseAlignD$ consists of good terms and the $\editCost(\coarseAlignD) \leq \editCost(\coarseAlignC)$.
\end{fact}
\begin{proof}
A bad term $(i,j)$ in the coarse alignment $\coarseAlignC$ implies $\exists a \in i, \exists b \in j$ such that $(a-b) <0$ or $(a-b) \geq 2L-1$. Let $K=2L-1$. This term can be a bad term in three different ways: 
\begin{enumerate}
    \item Category 1: $\forall a \in i, \forall b \in j$, either $(a-b)<0$ or $0 \leq (a-b) < K$.
    \item Category 2: $\forall a \in i, \forall b \in j$, either $(a-b)\geq K$ or $0 \leq (a-b) < K$.
    \item Category 3: $\exists a \in i, \exists c \in i$, such that $(a-j) < 0$ and $(c-j)\geq K$.
\end{enumerate}
Consider the coarse alignment $\coarseAlignN=((1,1),(2,2), \dots,(L,L))$. 
Clearly the $\editCost(\coarseAlignN)\leq L\cdot Q$ and the corresponding path for $\coarseAlignN$ using the Algorithm \ref{Alg:CtoPmain} is $\pathP_N=((1,1),(1,2), \dots,(1,L))$. Let $\coarseAlignC=((p_1,q_1),(p_2,q_2), \dots,(p_m,q_m))$ and let us label each of these terms into bad or good terms. Suppose $\coarseAlignC$ contains a bad term $(p,q)$ of category 3, then $|p|\neq 1$ because bad category 3 requires that the following condition is met: $\exists a \in p$ such that $(a-q)<0$ and $\exists b \in p$ such that $(b-q)\geq K$. Combining both these conditions we get $(b-a) \geq K$. The $\editCost(\coarseAlignC)\geq \delta(u_p,v_q) > K\cdot (2T+S_G) > \editCost(\coarseAlignN)$ which proves this fact. Therefore, we can safely only consider cases where the coarse alignment consists of bad terms of category 1 and 2.

Let $\groupG_1=((p_{k'},q_{k'}),(p_{k'+1},q_{k'+1}),\dots,(p_{k''},q_{k''}))$ be the first group of bad terms in $\coarseAlignC$. It is easy to see that the bad terms come in groups of 2 or more. This means that the entire preceding group $\groupG_0=((p_1,q_1),(p_2,q_2),\dots,(p_{k'-1},q_{k'-1}))$ and the next term $\groupG_2=((p_{k''+1},q_{k''+1}))$ has to be good. We now claim that there exists a group $\groupG'=((p'_1,q'_1),(p'_2,q'_2),\dots,(p'_{k'''},q'_{k'''}))$ that covers the set of indices of $\groupG_1$ and has all good terms and the $\editCost(\groupG') \leq \editCost(\groupG_1)$. As the terms $(p_{k'},q_{k'})$ and $(p_{k''},q_{k''})$ of $\groupG_1$ can be of any of the two bad categories we have a total of four cases to consider:
\begin{enumerate}
    \item Let $(p_{k'},q_{k'})$ of category 1 and $(p_{k''},q_{k''})$ of category 2: This scenario suggests that there will be two neighbouring terms $(p_{l},q_{l})$ and $(p_{l+1},q_{l+1})$ such that they are of bad category 1 and 2 respectively. As $(p_{l},q_{l})$ is of category 1, that implies $\exists a \in p_{l}$ and $\exists b \in q_{l}$ such that $(a-b) < 0$. Similarly, as $(p_{l+1},q_{l+1})$ is of category 2, that implies $\exists c \in p_{l+1}$ and $\exists d \in q_{l+1}$ such that $(c-d) \geq K = 2L-1$. Combining these two inequalities we get $(c-a)-(d-b)>K$. As $(p_{l},q_{l})$ and $(p_{l+1},q_{l+1})$ are neighbouring terms in a coarse alignment the $\editCost(((p_{l},q_{l}),(p_{l+1},q_{l+1})))> \delta(u_{p_l}\bigcirc u_{p_{l+1}},v_{q_l}\bigcirc v_{q_{l+1}}) > (2T + S_G)\cdot K > \editCost(\coarseAlignN)$. Therefore proving this fact.
    
    \item Let $(p_{k'},q_{k'})$ of category 2 and $(p_{k''},q_{k''})$ of category 1: This scenario doesn't exist because of the following reason. Let $(p_{l},q_{l})$ and $(p_{l+1},q_{l+1})$ be two neighbouring terms such that they are of bad category 2 and 1 respectively. This implies $\exists a \in p_{l}$ and $\exists b \in q_{l}$ such that $(a-b) \geq K= 2L-1$. Similarly, as $(p_{l+1},q_{l+1})$ is of category 1, that implies $\exists c \in p_{l+1}$ and $\exists d \in q_{l+1}$ such that $(c-d)<0$. Combining both these inequalities we get $(a-c)-(b-d)>K$. As $(p_{l},q_{l})$ and $(p_{l+1},q_{l+1})$ are elements of a coarse alignment both $(a-c)$ and $(b-d)$ will be negative. That implies $|b-d|>K=2L-1$ which is not possible because the indices in $q_{*}$ ranges between $1 \dots L$.
    \item Let both $(p_{k'},q_{k'})$ and $(p_{k''},q_{k''})$ be of category 1: We first claim that in this scenario all the intermediate bad terms in the group $\groupG_1$ will also be of category 1 because of the impossibility result from scenario 2. 
    
    Let $(p_l,q_l),(p_m,q_m)\in \groupG_1$ be two nearest terms of the form $|p_l|= 1, |q_l|\neq1$ and $|p_m|\neq1, |q_m|=1$. The only other intermediate terms in between these terms are of the form $(p_{*},q_{*})$ such that $|p_{*}|=|q_{*}|=1$ where $\editCost((p_{*},q_{*}))=Q$\footnote{The $\editCost((p_{*},q_{*}))=\delta(u_{p_*},v_{q_*})=\delta(5^Tg_{p_*}6^T,5^T\overline{G}(b_{q_*})6^T)=\delta(5^Tr6^T,5^T\overline{G}(b_{q_*})6^T)$ because as $p_* < L$ we have $g_{p_*}=r$}. We now do the following: w.l.o.g. lets assume $|q_l|\geq |p_m|$. We remove the $(p_m-1)$ maximum most elements from the set $q_l$ and $(p_m-1)$ minimum most elements from the set $p_m$. For an element that we remove from the set $q_l$ we pair it with an element that we have removed from the set $p_m$. Thereby reducing the total cost by a positive quantity\footnote{The minimum gain here is $2\cdot(2T+S_G)-3Q$ which is positive because $T>>S_G>Q$.}. We know that such pairs exists in $\groupG_1$ because the group $\groupG_0$ and $\groupG_2$ only consists of good terms. We keep repeating this process until we get rid of all the pairs like $(p_l,q_l)$ and $(p_m,q_m)$. Also note that this process reduces the edit-cost. Therefore, by following the procedure repeatedly we have converted the group of bad terms $\groupG_1$ into a new group $\groupG'$ which spans all the indices spanned by $\groupG_1$ and also has the $\leq \editCost(\groupG_1)$. And because we have got rid of all the pairs $(p_l,q_l),(p_m,q_m)$ therefore, either all the terms in $\groupG'$ are of the form $|p_*|= 1, |q_*|\neq1$ and  $|p_*|=1, |q_*| = 1$ or are of the form $|p_*|\neq 1, |q_*|=1$ and  $|p_*|=1, |q_*| = 1$.
    
    We now have to prove that all the terms in $\groupG'$ are good terms. Let $\Delta'=\max(p_{k'-1})-\max(q_{k'-1})$ and $\Delta''=\min(p_{k''+1})-\min(q_{k''+1})$. These $0 \leq \Delta' < K$ and $0 \leq \Delta'' < K$ as the $(p_{k'-1},q_{k'-1})$ and $(p_{k''+1},q_{k''+1})$ were the good terms outside of $G_1$. The difference between the number of indices spanned by the $p_*$ terms and the number of indices spanned by the $q_*$ term in $\groupG_1$ is $\Delta''-\Delta'$. Two cases to consider again:
    \begin{enumerate}
        \item Case $\Delta' > \Delta''$: The group $\groupG'$ should consist of a term where $(p',q')$ such that $|p'|=1$ and $|q'|\neq 1$ because the number of indices spanned by the $q'_*$ terms is higher than the number of indices spanned by the $p'_*$ terms. Let $\groupG'=((p'_1,q'_1),(p'_2,q'_2),\dots, (p'_l,q'_l))$ and let $\forall i \in [l], \Delta_i=p'_i-\max(q'_i)$ therefore, $\Delta_1 \leq \Delta'$ and $\Delta_l=\Delta''$. As all the other terms will also be either of the form $|p'_*|=1$ and $|q'_*|\neq 1$ or of the form $|p'_*|= 1$ and $|q'_*|=1$ (as proved in the previous paragraph) therefore, $\forall i \in [l-1], \Delta_{i+1} \leq \Delta_i$. Therefore, we have $\Delta' \geq \Delta_1 \geq \Delta_2 \geq \dots \geq \Delta_l=\Delta''$ implying that $\forall i \in [l], \text{ bad}(p'_i,q'_i)=0$.
        
        \item Case $\Delta'' > \Delta'$: The group $\groupG'$ should consist of a term where $(p',q')$ such that $|p'| \neq 1$ and $|q'|=1$ because the number of indices spanned by the $p_*$ terms is higher than the number of indices spanned by the $q_*$ terms. All the other terms will also be either of the form $|p_*| \neq 1$ and $|q_*|=1$ or $|p_*|= 1$ and $|q_*|=1$. Let $\groupG'=((p'_1,q'_1),(p'_2,q'_2), \dots,(p'_l,q'_l))$ and let $\forall i \in [l], \Delta_i=\min(p'_i)-q'_i$ therefore, $\Delta_1=\Delta'$ and $\Delta_l\leq \Delta''$. As all the other terms will also be either of the form $|p'_*| \neq 1$ and $|q'_*|= 1$ or of the form $|p'_*|= 1$ and $|q'_*|=1$ (as proved in the previous paragraph) therefore, $\forall i \in [l-1], \Delta_{i+1} \geq \Delta_i$. Therefore, we have $\Delta'= \Delta_1 \leq \Delta_2 \leq \dots \leq \Delta_l \leq \Delta''$ implying that $\forall i \in [l], \text{ bad}(p'_i,q'_i)=0$.
        
        \item Case $\Delta'' = \Delta'$: All the terms in the group $\groupG'$ should be of the form $|p'| = 1$ and $|q'|=1$ because the number of indices spanned by the $p_*$ terms is same as the number of indices spanned by the $q_*$ terms. Let $\groupG'=((p'_1,q'_1),(p'_2,q'_2), \dots,(p'_l,q'_l))$ and let $\forall i \in [l], \Delta_i=p'_i-q'_i$ therefore, $\Delta_1=\Delta'$ and $\Delta_l= \Delta''$. Also, $\forall i \in [l-1], \Delta_{i+1} = \Delta_i$ because all the terms in the group $\groupG'$ are of the form $|p'| = 1$ and $|q'|=1$. Therefore, $\Delta'= \Delta_1 = \Delta_2 =\dots= \Delta_l = \Delta''$ implying that that $\forall i \in [l], \text{ bad}(p'_i,q'_i)=0$.
    \end{enumerate}
    \item Let both $(p_{k'},q_{k'})$ and $(p_{k''},q_{k''})$ be of category 2: In this scenario all the intermediate bad terms in the group $\groupG_1$ will also be of category 2 because of the impossibility result from scenario 2. And the rest of the argument is same as in the scenario 3.
\end{enumerate}

Note that all the above mentioned steps were to analyze the first group of bad terms. We keep repeating this procedure till we arrive at a coarse alignment $\coarseAlignD$ that only contains good terms. As we see that the procedure mentioned above never increases the $\editCost$ we can therefore safely say that $\editCost(\coarseAlignD) \leq \editCost(\coarseAlignC)$.
\end{proof}

\begin{algorithm}[h]
\label{Alg:CtoPmain}
\SetAlgoLined
\KwResult{Given a $\textit{coarse alignment}$ $\coarseAlignC \in \mathcal{C}_{}$ generate a sequence $\pathP$.}
 P=\{\}, i=0, j=0, m=|C|\;
 \While{(i<m)}{
  (p,q)=C[i]\;
  \eIf{(|p| $\neq$ 1 $\text{or}$ |q| $\neq$ 1)}{
   \eIf{(|p| $\neq$ 1)}{
   P[j]=$(\min(p)-q+1,q)$\;
   P[j+1]=$(\max(p)-q+1,q)$\;
   j=j+2\;
   }{
   P[j]=$(p-\min(q)+1,\min(q))$\;
   P[j+1]=$(p-\max(q)+1,\max(q))$\;
   j=j+2\;
   }
   }{
   P[j]=$(p-q+1,q)$\;
   j=j+1\;
  }
  i=i+1\;
 }
 \Return P\;
\caption{Convert a given $\text{coarse alignment}$ $\coarseAlignC$ to a sequence $\pathP$.}
\end{algorithm}

\begin{fact}
\label{thm:AppFactGoodtermImpliesPath}
The Algorithm~\ref{Alg:CtoPmain} outputs a path $\pathR \in \setOfPaths_{2L-1,L}$ when the input is a coarse alignment $\coarseAlignD \in \mathcal{C}_{3L-2,L}$ containing only good terms.
\end{fact}
\begin{proof}
Apply the Algorithm \ref{Alg:CtoPmain} on the coarse alignment $\coarseAlignD=((p_1,q_1),(p_2,q_2), \dots,(p_m,q_m))$ as input and let the output sequence be $\pathR=((i_1,j_1),(i_2,j_2), \dots,(i_k,j_k))$. We will now prove that $\pathR \in \setOfPaths_{2L-1,L}$ when $\coarseAlignD$ contains only good terms.

Given an input $\coarseAlignD=((p_1,q_1),(p_2,q_2),\dots,(p_m,q_m))$, the Algorithm \ref{Alg:CtoPmain} checks each term $(p_l,q_l), \forall l \in [m]$ and creates two (or one) new terms $(\min(p_l)-\min(q_l)+1, \min(q_l))$ and $(\max(p_l)-\max(q_l)+1, \max(q_l)), \forall l \in [m]$ and creates the sequence $\pathR$. As $\coarseAlignD$ contains all good terms, it is clear that $\forall r \in [k]$, we have $1 \leq i_r \leq K=2L-1$ and $1 \leq j_r \leq L$. Using the definition of a \textit{coarse alignment} (Definition \ref{def:course_alignment}) we know that $\forall l \in [m-1], q_l <_e q_{l+1}$\footnote{Given two sequences $a$ and $b$, $a <_e b$ implies $\forall u \in a, \forall v \in b, u<v$.}, and $\cup_{i=1}^m q_i=[L]$ therefore, we have $1=j_1 \leq j_2 \leq \dots \leq j_k=L$. 

Consider the term $(p_l,q_l) \in \coarseAlignD$, the algorithm generates the following terms $(\min(p_l)-\min(q_l)+1, \min(q_l))$, $(\max(p_l)-\max(q_l)+1, \max(q_l))$\footnote{Note that if $|p_l|=|q_l|=1$ then both the terms are same and the algorithm just adds one term.} for the sequence $\pathR$. If $|p_l| \neq 1$ then the algorithm generates $(\min(p_l)-q_l+1, q_l)$, $(\max(p_l)-q_l+1, q_l)$, clearly generating two terms $(i_r,j_r),(i_{r+1},j_{r+1}) \in R$ such that $i_{r+1} > i_r$ while $j_{r+1}=j_r$ thus, satisfying another condition of a \textit{path}. Also, when $|q_l| \neq 1$ then the algorithm generates $(p_l-\min(q_l)+1, \min(q_l))$, $(p_l-\max(q_l)+1, \max(q_l))$, clearly generating two terms $(i_s,j_s),(i_{s+1},j_{s+1}) \in \pathR$ such that $i_{s+1} < i_s$ while $j_{s+1}=j_s+i_s-i_{s+1}$ thus, satisfying another condition. 

Consider two neighbouring terms $(p_l,q_l), (p_{l+1},q_{l+1}) \in \coarseAlignD$. Suppose the last (or the only) term generated for $(p_l,q_l)$ by the algorithm is $(i_r,j_r)=(\max(p_l)-\max(q_l)+1, \max(q_l))$ then the first (or the only) term generated for $(p_{l+1},q_{l+1})$ will be $(i_{r+1},j_{r+1})=(\min(p_{l+1})-\min(q_{l+1})+1, \min(q_{l+1}))$. According to the definition of a \textit{coarse alignment}, we have $p_l \cap p_{l+1}=\emptyset$ and $p_l <_e p_{l+1}$ which means that $\min(p_{l+1})=\max(p_l)+1$. Also the condition $q_l \cap q_{l+1}=\emptyset$ and $q_l <_e q_{l+1}$ implies $\min(q_{l+1})=\max(q_l)+1$ thus making sure that $j_{r+1}=j_r+1$. Combining these two conditions we get that $i_{r+1}=i_r$. Suppose, $i_{r+1}=i_r$, then $\max(p_l)+1-\max(q_l)=\min(p_{l+1})-\min(q_{l+1})+1$ which would imply that $\min(q_{l+1})=1+\max(q_l)$ proving $j_{r+1}=j_r+1$. Thus, satisfying another two conditions for a \textit{path}. 

Therefore, we see that if the input is a coarse alignment $\coarseAlignD \in \mathcal{C}_{3L-2,L}$ containing only good terms, then the Algorithm \ref{Alg:CtoPmain} generates a path $\pathR \in \setOfPaths_{2L-1,L}$.
\end{proof}

We thus prove Lemma \ref{thm:AppEitherPathorD} using Fact \ref{thm:AppFactDlessC} and Fact \ref{thm:AppFactGoodtermImpliesPath}.
\end{proof}

\section{Query Lower bound for the \texorpdfstring{$\propPedit$}{Pedit} property}
\label{sec:AppQueryComplexity}

\begin{figure}[h]
\centering
\begin{tikzpicture}
    [
        box/.style={rectangle,draw=gray, minimum size=0.5cm},
    ]

\foreach \x in {-2,-1.5,...,1.5}{
    \foreach \y in {-4,-3.5,...,3}
        \node[box, fill=lightgray] at (\x,\y){0};
}
\pgfmathsetseed{1};

\foreach \x in {-2,-1.5,...,1.5}{
    \pgfmathparse{int(random(0,1))}\let\vara=\pgfmathresult;
    \node[box, fill=white] at (\x,-0.5){\vara};
}

\foreach \x in {-1.5,-1,...,1.5}{
    \pgfmathparse{int(random(0,1))}\let\vara=\pgfmathresult;
    \node[box, fill=white] at (\x,0){\vara};
}

\foreach \x in {-1,-0.5,...,1.5}{
    \pgfmathparse{int(random(0,1))}\let\vara=\pgfmathresult;
    \node[box, fill=white] at (\x,0.5){\vara};
}

\foreach \x in {-0.5,0,...,1.5}{
    \pgfmathparse{int(random(0,1))}\let\vara=\pgfmathresult;
    \node[box, fill=white] at (\x,1){\vara};
}

\foreach \x in {0,0.5,...,1.5}{
    \pgfmathparse{int(random(0,1))}\let\vara=\pgfmathresult;
    \node[box, fill=white] at (\x,1.5){\vara};
}

\foreach \x in {0.5,1,...,1.5}{
    \pgfmathparse{int(random(0,1))}\let\vara=\pgfmathresult;
    \node[box, fill=white] at (\x,2){\vara};
}

\foreach \x in {1,1.5,...,1.5}{
    \pgfmathparse{int(random(0,1))}\let\vara=\pgfmathresult;
    \node[box, fill=white] at (\x,2.5){\vara};
}

\foreach \x in {1.5,...,1.5}{
    \pgfmathparse{int(random(0,1))}\let\vara=\pgfmathresult;
    \node[box, fill=white] at (\x,3){\vara};
}

\foreach \x in {-2,-1.5,...,1}{
    \pgfmathparse{int(random(0,1))}\let\vara=\pgfmathresult;
    \node[box, fill=white] at (\x,-1){\vara};
}

\foreach \x in {-2,-1.5,...,0.5}{
    \pgfmathparse{int(random(0,1))}\let\vara=\pgfmathresult;
    \node[box, fill=white] at (\x,-1.5){\vara};
}

\foreach \x in {-2,-1.5,...,0}{
    \pgfmathparse{int(random(0,1))}\let\vara=\pgfmathresult;
    \node[box, fill=white] at (\x,-2){\vara};
}

\foreach \x in {-2,-1.5,...,-0.5}{
    \pgfmathparse{int(random(0,1))}\let\vara=\pgfmathresult;
    \node[box, fill=white] at (\x,-2.5){\vara};
}

\foreach \x in {-2,-1.5,...,-1}{
    \pgfmathparse{int(random(0,1))}\let\vara=\pgfmathresult;
    \node[box, fill=white] at (\x,-3){\vara};
}

\foreach \x in {-2,-1.5,...,-1.5}{
    \pgfmathparse{int(random(0,1))}\let\vara=\pgfmathresult;
    \node[box, fill=white] at (\x,-3.5){\vara};
}

\foreach \x in {-2,...,-2}{
    \pgfmathparse{int(random(0,1))}\let\vara=\pgfmathresult;
    \node[box, fill=white] at (\x,-4){\vara};
}

\draw[color=blue, thick] (-1.25,-1.25)--(0.75,-1.25);
\draw[color=blue, thick] (-1.25,0.75)--(0.75,0.75);
\draw[color=blue, thick] (0.75,-1.25)--(0.75,0.75);
\draw[color=blue, thick] (-1.25,-1.25)--(-1.25,0.75);
\end{tikzpicture}

\caption{\small{Example of a boolean matrix $\matrixM$ of size $(2^{n/2+1}-1) \times 2^{n/2}$ and its corresponding sub-matrix $\matrixM'$ (highlighted in blue) of size $2^{n/2-1} \times 2^{n/2-1}$. Notice that all the cells in gray are 0s.}}
\label{fig:MatrixM}
\end{figure}

\begin{figure}[h]
\centering
\begin{tikzpicture}
    [
        box/.style={rectangle,draw=black, minimum size=0.5cm},
    ]

\foreach \x in {-2,-1.5,...,1.5}{
    \foreach \y in {-4,-3.5,...,3}
        \node[box, fill=lightgray] at (\x,\y){0};
}
\pgfmathsetseed{1};

\foreach \x in {-2,-1.5,...,1.5}{
    \pgfmathparse{int(random(0,1))}\let\vara=\pgfmathresult;
    \node[box, fill=white] at (\x,-0.5){0};
}

\foreach \x in {-1.5,-1,...,1.5}{
    \pgfmathparse{int(random(0,1))}\let\vara=\pgfmathresult;
    \node[box, fill=white] at (\x,0){0};
}

\foreach \x in {-1,-0.5,...,1.5}{
    \pgfmathparse{int(random(0,1))}\let\vara=\pgfmathresult;
    \node[box, fill=white] at (\x,0.5){0};
}

\foreach \x in {-0.5,0,...,1.5}{
    \pgfmathparse{int(random(0,1))}\let\vara=\pgfmathresult;
    \node[box, fill=white] at (\x,1){0};
}

\foreach \x in {0,0.5,...,1.5}{
    \pgfmathparse{int(random(0,1))}\let\vara=\pgfmathresult;
    \node[box, fill=white] at (\x,1.5){0};
}

\foreach \x in {0.5,1,...,1.5}{
    \pgfmathparse{int(random(0,1))}\let\vara=\pgfmathresult;
    \node[box, fill=white] at (\x,2){0};
}

\foreach \x in {1,1.5,...,1.5}{
    \pgfmathparse{int(random(0,1))}\let\vara=\pgfmathresult;
    \node[box, fill=white] at (\x,2.5){0};
}

\foreach \x in {1.5,...,1.5}{
    \pgfmathparse{int(random(0,1))}\let\vara=\pgfmathresult;
    \node[box, fill=white] at (\x,3){0};
}

\foreach \x in {-2,-1.5,...,1}{
    \pgfmathparse{int(random(0,1))}\let\vara=\pgfmathresult;
    \node[box, fill=white] at (\x,-1){0};
}

\foreach \x in {-2,-1.5,...,0.5}{
    \pgfmathparse{int(random(0,1))}\let\vara=\pgfmathresult;
    \node[box, fill=white] at (\x,-1.5){0};
}

\foreach \x in {-2,-1.5,...,0}{
    \pgfmathparse{int(random(0,1))}\let\vara=\pgfmathresult;
    \node[box, fill=white] at (\x,-2){0};
}

\foreach \x in {-2,-1.5,...,-0.5}{
    \pgfmathparse{int(random(0,1))}\let\vara=\pgfmathresult;
    \node[box, fill=white] at (\x,-2.5){0};
}

\foreach \x in {-2,-1.5,...,-1}{
    \pgfmathparse{int(random(0,1))}\let\vara=\pgfmathresult;
    \node[box, fill=white] at (\x,-3){0};
}

\foreach \x in {-2,-1.5,...,-1.5}{
    \pgfmathparse{int(random(0,1))}\let\vara=\pgfmathresult;
    \node[box, fill=white] at (\x,-3.5){0};
}

\foreach \x in {-2,...,-2}{
    \pgfmathparse{int(random(0,1))}\let\vara=\pgfmathresult;
    \node[box, fill=white] at (\x,-4){0};
}

\foreach \x in {-1,-0.5,...,0.5}{
    \foreach \y in {-1,-0.5,...,0.5}{
    \pgfmathparse{int(random(0,1))}\let\vara=\pgfmathresult;
    \node[box, fill=gray] at (\x,\y){\vara};
    }
}

\end{tikzpicture}

\caption{\small{Example of a boolean matrices in $\mathcal{M}$ and its corresponding sub-matrix $\matrixM'$ (highlighted in dark gray). Notice that all the cells in the white and light gray region are 0s.}}
\label{fig:MatrixMathcalM}
\end{figure}

\begin{theorem}
\label{thm:AppPeditQuery}
The bounded-error quantum query complexity for computing the property $\propPedit$ on matrices of size $(2^{{n/2}+1}-1) \times 2^{{n/2}}$ that are $\encodingName$ of truth tables of non-deterministic branching programs with $n$ input variables is $\Omega(2^{0.75n})$. \end{theorem}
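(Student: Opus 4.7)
The plan is to establish the $\Omega(2^{0.75n}) = \Omega(L^{1.5})$ quantum query lower bound (with $L = 2^{n/2}$) via the quantum adversary method, constructing a family of hard matrices that respect both the \encodingName\ structure and the requirement that each matrix arise as the truth table of a non-deterministic branching program of size $2^{o(\sqrt{n})}$. The starting observation is that the matrix has roughly $L^2$ non-zero positions confined to a diagonal band, and a path spans $L$ columns; an algorithm matching the natural $\widetilde{O}(L^{1.5})$ dynamic-programming upper bound must Grover-search $\Theta(L)$ candidates at each of $L$ sequential column steps, so a matching lower bound should prevent sharing work across columns.

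First, I would carve out a working region inside the diagonal band (such as the sub-matrix $\matrixM'$ highlighted in Figure~\ref{fig:MatrixM}) in which the entries can be chosen with sufficient freedom. Any Boolean function whose non-zero support is confined to this region, and which further has only few 1-inputs per "band slice," can be realized by a non-deterministic branching program of size $2^{o(\sqrt{n})}$: the program guesses the active inputs and then verifies them layer-by-layer. This means I can fill $\matrixM'$ with gadget entries while still certifying that the resulting matrix lies in the set $\mathcal{S}$ from Definition~\ref{def:mathcalS}.

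Second, I would design the adversary family so that each matrix decomposes into $L$ sequential \emph{column gadgets} along the path direction, each of height $\Theta(L)$, with a single "planted" 1-entry per column placed at an adversarially chosen row. The jump-cost $C_{jump}|i_{p+1}-i_p| + \mu - C_{\matrixM_{i_p j_p}} - C_{\matrixM_{i_{p+1} j_{p+1}}}$ is calibrated so that a path achieving cost below $T_r$ must, at essentially every column, \emph{land on} the planted 1 (any row deviation incurs jump penalties that accumulate past $T_r$). Setting $T_r = \tfrac{3L}{4}C_0 + \tfrac{L}{4}C_1$ is consistent with this calibration, since a path that hits all planted 1s accumulates roughly $\tfrac{L}{4}C_1$ discount, while one that misses a constant fraction of them stays above $T_r$. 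Thus determining $\propPedit(\matrixM)$ is equivalent to identifying the planted entries of each of $L$ independent search instances over $\Theta(L)$ bits each.

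Third, I would invoke a strong direct-product theorem for quantum search, or equivalently apply the positive-weights adversary bound directly to the family: each single-column search over $\Theta(L)$ candidates requires $\Omega(\sqrt{L})$ queries, the $L$ searches are independent (operating on disjoint column gadgets), and the decision rule for $\propPedit$ is sensitive to all of them simultaneously. Standard direct-product results (e.g., Lee–Roland, Ambainis et al.) yield a total lower bound of $\Omega(L \cdot \sqrt{L}) = \Omega(L^{1.5}) = \Omega(2^{0.75n})$. The corollary for the restricted Dyck language in Section~\ref{sec:Dyck} then follows from noticing that each row of this adversary set can be interpreted as an \emph{almost-balanced} bracket string.

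The main obstacle is the simultaneous enforcement of the structural and hardness constraints: typical constructions that realize the $L^{1.5}$ nested-search adversary live in unrestricted $\{0,1\}^N$, whereas here every matrix must be the \encodingName\ of the truth table of a $2^{o(\sqrt{n})}$-size branching program. The delicate step is verifying that my planted-gadget family only populates cells consistent with the Matrix Encoding relation $\matrixM_{ij}^{\truthtable(S)} = X_{2^{n/2}(i+j-2^{n/2})+j}$, and that the sparsity pattern of planted 1s corresponds to a Boolean function with an explicit non-deterministic branching program of the required small size; choosing the planted positions along a low-complexity combinatorial structure (e.g., positions indexable by $O(\sqrt{n})$-bit hashes) should suffice, but the careful accounting here is what the appendix must ultimately carry out.
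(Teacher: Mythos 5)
Your proposal contains two genuine gaps that prevent it from establishing the theorem.

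First, the planted-one-per-column construction is incompatible with the jump-cost calibration that the rest of the reduction depends on. By design $C_{jump}=2T+S_G=\omega(\log N)$, which dominates the per-cell discount $C_0-C_1=O(1)$; the paper's hardness argument explicitly enforces $C_1\cdot k\cdot(t+2) < C_{jump}$ precisely so that \emph{no} jump is ever worthwhile and the minimum-cost path is confined to a single row of the sub-matrix $\matrixM'$. A row-confined path can hit a column's planted entry only if that entry lies in that row, so your "hit all planted 1s" YES-instances cannot achieve cost below $T_r$ at all, and the threshold $T_r=\frac{3L}{4}C_0+\frac{L}{4}C_1$ does not separate YES from NO the way you describe. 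The hardness has to be encoded \emph{within} rows, not across columns.

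Second, the direct-product invocation is a mismatch: direct-product (and XOR/OR-of-products) theorems bound the cost of outputting all $L$ answers to $L$ disjoint instances, whereas $\propPedit$ has a single output bit, so no such theorem applies directly. What the paper actually does is build one Ambainis adversary relation $R\subseteq X\times Y$ inside $\matrixM'$: every row of $x\in X$ has exactly $N/4$ ones (traversal cost exactly the re-scaled threshold $V$), exactly one row of $y\in Y$ has $N/4+1$ ones (cost below $V$), and the ones are arranged in a recursive hierarchy of balanced and almost-balanced blocks $\{0_i,+_i,-_i\}$ over $k$-ary substrings to depth $t$, with $k^{t+2}=N/2$ and $k=\omega(1)$. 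This nesting gives $\ell=\ell'=1$ and $m\cdot m'=(k/2)^{2(t+2)}\cdot N/2$, whence the adversary bound is $\Omega((k/2)^{t+2}\sqrt{N})=\Omega((N/2)^{1.5-1/\log k})=\Omega(2^{0.75n-o(n)})$; the nesting simultaneously bounds prefix imbalance, which is what the Dyck corollary later exploits. You do correctly identify the sub-matrix viewpoint and the $\Omega(L^{1.5})$ target, but note that the theorem as stated imposes no size bound on the branching programs (every Boolean function is computed by some non-deterministic BP), so the $2^{o(\sqrt n)}$ realizability concern you raise is a burden on the reduction (Theorem~\ref{Thm:BPSATtoEDIT}), not on this query lower bound.
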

\begin{proof}
The matrices that are $\encodingName$\footnote{Refer to $\encodingName$ mentioned in Definition \ref{def:BP-P}.} of truth tables of non-deterministic branching programs with $n$ input variables (for example, matrix $\matrixM$) are of the form shown in Figure \ref{fig:MatrixM}. We prove the quantum query lower bound of $\propPedit$ on these matrices using the quantum adversary method by \cite{QLowerBounds-Ambainis-00}, but, instead of analyzing the matrix $\matrixM$ of size $2N-1 \times N$ (where $N=2^{{n/2}}$) we analyze the sub-matrix $\matrixM'$ of size $\frac{N}{2} \times \frac{N}{2}$ as shown in the Figure \ref{fig:MatrixM}. 

Let $\mathcal{M}=\{\matrixM|\matrixM \text{ are matrices of size $2N-1 \times N$ that are of the form shown in}$ $\text{ Figure \ref{fig:MatrixMathcalM}}\}$.\\ Computing the property $\propPedit$ on  matrices $\matrixM \in \mathcal{M}$ for a threshold $T_r=\frac{3N}{4}C_0+\frac{N}{4}C_1$ is equivalent to computing $\propPedit$ on the sub-matrices $\matrixM'$ (corresponding to each $\matrixM$) of size $\frac{N}{2} \times \frac{N}{2}$ for a different threshold value $V=\frac{N}{4} C_0+\frac{N}{4}C_1$ where the problem is to decide whether a minimum-cost path in $\matrixM'$ has its cost $< V$.\footnote{Using a simple geometrical argument one can prove that a square matrix $\matrixM'$ of size $\frac{N}{2}\times\frac{N}{2}$ fits inside the white region of matrix $\matrixM$ of size $2N-1 \times N$, refer to Figure \ref{fig:MatrixM} or Figure \ref{fig:MatrixMathcalM}.} 

Recall that the property $\propPedit: \{0,1\}^{\frac{N}{2}} \times \{0,1\}^{\frac{N}{2}} \times \{0,1,\dots,Q\} \rightarrow \{0,1\}$ (Definition \ref{def:PP_EditProperty}) is a function of a matrix $\matrixM$ and a parameter $\mu$ which is only required to calculate the \emph{jump} costs. However, the adversarial sets $X$ and $Y$ that we define to compute the quantum query complexity of the property $\propPedit$ doesn't need any reference to the \emph{jump} costs. Therefore, for the sake of simplicity of the proof we just define the property to be $\propPedit: \{0,1\}^{\frac{N}{2}} \times \{0,1\}^{\frac{N}{2}} \rightarrow \{0,1\}$.

\paragraph{Building the adversarial sets $X$ and $Y$:}
We choose a relation $R \subseteq X \times Y \subseteq \propPedit^{-1}(0) \times \propPedit^{-1}(1)$ where $X=\{x| (x,y) \in R\}$ and $Y=\{y| (x,y) \in R\}$. The relation $R$ is chosen such that, 
\begin{enumerate}
    \item For each matrix $x \in X$, each row in the matrix $x$ has exactly $\frac{N}{4}$ number of $1$s. Which implies that for all $x \in X$, the $\pathCost$ for each row is equal to $\frac{N}{4}C_0+\frac{N}{4}C_1=V$. We now have to ensure that for all matrices $x \in X$, the minimum-costing path also has its cost greater than or equal to $V$, which is addressed in item 3.
    \item For each matrix $y \in Y$, there is only one row in the matrix $y$ that has exactly $(\frac{N}{4}+1)$ number of $1$s and rest of the rows in $y$ have exactly $\frac{N}{4}$ number of $1$s. Which implies that for all $y \in Y$, there is a row whose $\pathCost$ is $(\frac{N}{4}-1)C_0+(\frac{N}{4}+1)C_1 < V$, as $C_0 > C_1$ (Refer to Definition \ref{def:PP_EditProperty}). 
    \item We impose additional constraints in building the sets $X$ and $Y$ so that considering only paths without any \emph{jumps} is enough to decide whether or not the min-cost path has its cost less than $V$. We introduce a set of symbols that will be useful in understanding our construction. 
    
    For a chosen even number $k$, we construct some symbols recursively in the following way: 
\begin{enumerate}
    \item $0_i \subseteq \{+_{i-1}, -_{i-1}, 0_{i-1}\}^k$ such that, number of  $+_{i-1}$-type symbols in $0_i$ is equal to number of  $-_{i-1}$-type symbols in $0_i$.
    \item $+_i \subseteq \{+_{i-1}, -_{i-1}, 0_{i-1}\}^k$ such that, number of  $+_{i-1}$-type symbols in $+_i$ is one more than number of  $-_{i-1}$-type symbols in $+_i$, but overall there is only one 1 more than the number of 0s.
    \item $-_i \subseteq \{+_{i-1}, -_{i-1}, 0_{i-1}\}^k$ such that, number of  $+_{i-1}$-type symbols in $-_i$ is one less than number of  $-_{i-1}$-type symbols in $-_i$, but overall there is only one 0 more than the number of 1s.
\end{enumerate}

Base case symbols $\{0_{0}, +_{0}, -_{0},\}$ are defined as follows:
\begin{enumerate}
    \item $0_0 \subseteq \{0,1\}^k$ such that, number of 1s in $0_0$ is equal to number of 0s in $0_0$. Therefore
    , $k$ has to be even.
    \item $+_0 \subseteq \{0,1\}^k$ such that, number of 1s in $+_0$ is two more than number of 0s in $+_0$. 
    \item $-_0 \subseteq \{0,1\}^k$ such that, number of 1s in $-_0$ is two less than number of 0s in $-_0$.
\end{enumerate}
For all $x \in X$, each row of the matrix $x$ contains $k$ symbols from the set $\{+_t,-_t,0_t\}$ such that number of $+_t$ symbols is equal to the number of $-_t$ symbols. While for all $y \in Y$, only one row in the matrix $y$ contains one more of $+_t$ symbol when compared to the number of $-_t$ symbols. The rest of the rows are balanced just like the rows of the matrices belonging to set $X$. Example of such a matrices can be viewed in Figure \ref{fig:gridAdversaryX}. 

The choice of $k$ and $t$ will be such that (1) $k^{t+2}=\frac{N}{2}$, condition that ensures that number of elements in each row of the matrices is $\frac{N}{2}$ and (2) $C_1\cdot k \cdot (t+2) < C_{jump}$, a condition that ensures that the reduction in the $\pathCost$ is less than the \emph{jump} cost, because of which we don't need to consider paths with jumps for computing the query complexity of $\propPedit$ on matrices in $X$ and $Y$. These conditions will be reviewed again in the later parts of the proof where we calculate the query lower bound of $\propPedit$ in the $\epsilon$-bounded error setting.
\end{enumerate}

\begin{figure}[h]
\centering
\begin{tikzpicture}
\draw[step=0.5cm,color=gray] (-1,-1) grid (1,1);

\node at (-0.75,+0.25) {$\textcolor{blue}{-}$};
\node at (-0.25,+0.25) {$\textcolor{gray}{0}$};
\node at (+0.25,+0.25) {$\textcolor{gray}{0}$};
\node at (+0.75,+0.25) {$\textcolor{red}{+}$};

\node at (-0.75,-0.75) {$\textcolor{gray}{0}$};
\node at (-0.25,-0.75) {$\textcolor{blue}{-}$};
\node at (+0.25,-0.75) {$\textcolor{gray}{0}$};
\node at (+0.75,-0.75) {$\textcolor{red}{+}$};

\node at (-0.75,+0.75) {$\textcolor{red}{+}$};
\node at (-0.25,+0.75) {$\textcolor{red}{+}$};
\node at (+0.25,+0.75) {$\textcolor{blue}{-}$};
\node at (+0.75,+0.75) {$\textcolor{blue}{-}$};

\node at (-0.75,-0.25) {$\textcolor{red}{+}$};
\node at (-0.25,-0.25) {$\textcolor{blue}{-}$};
\node at (+0.25,-0.25) {$\textcolor{red}{+}$};
\node at (+0.75,-0.25) {$\textcolor{blue}{-}$};


\draw[step=0.5cm,color=gray] (2,-1) grid (4,1);
\draw[-,color=gray] (2,-1) grid (2,1);


\node at (2.25,+0.75) {$\textcolor{red}{+}$};
\node at (2.75,+0.75) {$\textcolor{red}{+}$};
\node at (3.25,+0.75) {$\textcolor{blue}{-}$};
\node at (3.75,+0.75) {$\textcolor{blue}{-}$};

\node at (2.25,+0.25) {$\textcolor{blue}{-}$};
\node at (2.75,+0.25) {$\textcolor{gray}{0}$};
\node at (3.25,+0.25) {$\textcolor{gray}{0}$};
\node at (3.75,+0.25) {$\textcolor{red}{+}$};

\node at (2.25,-0.25) {$\textcolor{red}{+}$};
\node at (3.25,-0.25) {$\textcolor{red}{+}$};
\node at (3.75,-0.25) {$\textcolor{blue}{-}$};

\node at (2.75,-0.25) {$\textbf{\textcolor{cyan}{0}}$};

\node at (2.25,-0.75) {$\textcolor{gray}{0}$};
\node at (2.75,-0.75) {$\textcolor{blue}{-}$};
\node at (3.25,-0.75) {$\textcolor{gray}{0}$};
\node at (3.75,-0.75) {$\textcolor{red}{+}$};


\end{tikzpicture}
\caption{\small{Example of a boolean matrix $x \in X$ (left) and a boolean matrix $y \in Y$ (right) such that $(x,y) \in R$. The symbol $'+'=+_t$, $'0'=0_t$ and $'-'=-_t$.}}
\label{fig:gridAdversaryX}
\end{figure}

\paragraph{The quantum query complexity of $\propPedit$}
\begin{enumerate}
    \item For each $x \in X$, we have at least $(\frac{k}{2})^{t+2}\cdot \frac{N}{2}$ number of $y$s, such that $(x,y) \in R$.
    \item For each $y \in Y$, we have at least $(\frac{k}{2})^{t+2}$ number of $x$s, such that $(x,y) \in R$. 
    \item We can visualize a matrix of size $\frac{N}{2} \times \frac{N}{2}$ as a string of length $\frac{N^2}{4}$. For each $x \in X$ and $i \in [\frac{N^2}{4}]$, there is only one input $y \in Y$ such that $(x,y) \in R$ and $x_i \ne y_i$.
    \item For each $y \in Y$ and $i \in [\frac{N^2}{4}]$, there is only one input $x \in X$ such that $(x,y) \in R$ and $x_i \ne y_i$.
    \item Therefore, the quantum adversary method by \cite{QLowerBounds-Ambainis-00} gives a quantum lower bound of\\ $\Omega(\frac{k^{(t+2)}}{2^{(t+2)}}\cdot \sqrt{N})$ for distinguishing the sets $X$ and $Y$. 
    \item As mentioned earlier we chose the values $k$ and $t$ such that matrices in both these sets $X$ and $Y$ have their optimal paths without any jumps. The maximum number of ones that can be gained with a jump is $k\cdot (t+1)+\frac{k}{2} < k\cdot (t+2)$. The reduction in the traversal cost due to these $1$s will be less than $C_1\cdot k\cdot (t+2)$. Therefore, as long as we chose the values $k$ and $t$ such that $C_1\cdot k\cdot (t+2) < C_{jump}$ there will not be any jump whatsoever.
    \item We have to chose the values of $k$ and $t$ such that the lower bound mentioned in item 5 is maximised while satisfying the two following constraints: (1) $k^{t+2}=\frac{N}{2}$ and (2) $C_1\cdot k\cdot (t+2) < C_{jump}$. From (1) we get $t+2=\log_{k}(N/2)$, hence, $k\cdot(t+2)=\frac{k}{\log k}\log(\frac{N}{2})$. Therefore, the lower bound is $\Omega(\frac{N}{2}^{(1.5-\frac{1}{\log k})})$. By fixing $k=\omega(1)$, we get the lower bound of $\Omega(N^{1.5})$.\footnote{Recall from Definition \ref{def:PP_EditProperty} that $C_0, C_1$ are constants and $C_{jump}=\omega(\log N)$.} Therefore a lower bound of $\Omega(2^{0.75n})$ as $N=2^{n/2}$.
\end{enumerate}
Therefore, we can conclude that the bounded-error quantum query lower bound for computing $\propPedit$ on matrices that are $\encodingName$ of truth tables of non-deterministic branching programs with $n$ input variables is $\Omega(2^{0.75n})$.
\end{proof}

\begin{corollary}
\label{thm:AppCorQofPPedit}
The bounded-error quantum query complexity for computing $\propertyPPedit$ on matrices of size $(2^{n/2+1}-1) \times 2^{n/2}$ that are $\encodingName$ of truth tables of non-deterministic branching programs with $n$ input variables from the set $\mathcal{S}$ is $\Omega(2^{0.75n})$. Here the set $\mathcal{S}$ is
\begin{equation*}
\mathcal{S}=\{S \text{ }|\text{ } \matrixM^{\truthtable(S)} \in \propertyPPedit^{-1}(0) \cup \propertyPPedit^{-1}(1)\},
\end{equation*}
where $S$ denotes the branching program and $\matrixM^{\truthtable(S)}$ denotes the $\encodingName$ of the truth table of $S$.
\end{corollary}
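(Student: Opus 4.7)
The plan is to inherit the adversarial sets $X$ and $Y$ from the proof of Theorem \ref{thm:AppPeditQuery} and verify that each of them satisfies the strong promise needed for $\propertyPPedit$, so that every matrix in $X \cup Y$ corresponds to a branching program in $\mathcal{S}$. Once this is established, any quantum algorithm solving $\propertyPPedit$ on inputs from $\mathcal{S}$ in particular distinguishes $X$ from $Y$, and the $\Omega(2^{0.75n})$ lower bound of Theorem \ref{thm:AppPeditQuery} transfers verbatim.

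First I would verify that for every $x \in X$ we have $\propertyPPedit(\matrixM^{x}) = 0$, i.e., $\propPedit(\matrixM^{x}, 0) = 0$. Recall that $X$ was constructed with the parameters $k$ and $t$ chosen so that $C_1 \cdot k \cdot (t+2) < C_{jump}$; this was precisely engineered so that no sequence of jumps can ever pay for itself, regardless of the jump parameter $\mu$. Hence the minimum-cost path in $\matrixM^{x}$ is realized by a jump-free (straight) row path, and by construction each such row has cost exactly $V = \tfrac{N}{4}C_0 + \tfrac{N}{4}C_1$ in the central sub-matrix $\matrixM'$, which lifts to cost exactly $T_r = \tfrac{3N}{4}C_0 + \tfrac{N}{4}C_1$ in the full matrix (the $\tfrac{N}{2}$ zero-cells outside $\matrixM'$ contribute $\tfrac{N}{2}C_0$). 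Since this holds for the most generous choice $\mu = 0$ as well, $\minPathCost(\matrixM^{x}, 0) \ge T_r$, so $\propPedit(\matrixM^{x}, 0) = 0$.

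Next I would verify that for every $y \in Y$ we have $\propertyPPedit(\matrixM^{y}) = 1$, i.e., $\propPedit(\matrixM^{y}, Q) = 1$. By construction $y$ contains one distinguished row with $\tfrac{N}{4}+1$ ones, giving a jump-free straight path of cost $(\tfrac{N}{4}-1)C_0 + (\tfrac{N}{4}+1)C_1 < V$ in the sub-matrix $\matrixM'$, and correspondingly strictly less than $T_r$ in the full matrix. Since this path has no jumps, its cost is independent of $\mu$, so in particular $\minPathCost(\matrixM^{y}, Q) < T_r$ and $\propPedit(\matrixM^{y}, Q) = 1$. Therefore both $X$ and $Y$ lie in $\propertyPPedit^{-1}(0) \cup \propertyPPedit^{-1}(1)$, and hence the corresponding branching programs lie in $\mathcal{S}$ (any truth table is computed by some non-deterministic branching program).

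Finally, an algorithm that computes $\propertyPPedit$ on Matrix Encodings of truth tables of branching programs from $\mathcal{S}$ must in particular separate $X$ from $Y$, so its query complexity is lower-bounded by the quantum adversary bound of Theorem \ref{thm:AppPeditQuery}, namely $\Omega(2^{0.75n})$. The main thing to keep honest is the equivalence between the $\propPedit$ bound (which only distinguishes min-cost $< V$ versus $\ge V$ for a single fixed jump regime) and the stronger promise of $\propertyPPedit$ (which requires the decision to hold simultaneously at $\mu = 0$ and $\mu = Q$); the fact that the adversary construction was designed to make jumps strictly suboptimal is exactly what makes both endpoints of the promise fall on the same side, and this is the only non-routine step.
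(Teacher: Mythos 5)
Your proof is correct and follows essentially the same route as the paper's: both rely on the observation that the adversarial sets from Theorem~\ref{thm:AppPeditQuery} are built so that jumps are never profitable, hence the minimum path cost is $\mu$-independent and every matrix in $X \cup Y$ satisfies the stronger promise of $\propertyPPedit$, so the adversary bound transfers directly. The paper's version of this argument is a single terse sentence ("the adversarial sets \dots doesn't depend on the value of the parameter $\mu$"); you unpack the same idea by explicitly checking $\propPedit(\matrixM^x, 0) = 0$ and $\propPedit(\matrixM^y, Q) = 1$, which is a clearer presentation of the identical reasoning.
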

\begin{proof}
The property $\propertyPPedit$ defined at \ref{def:PP_EditProperty} is a promise version of property $\propPedit$. The results of Theorem \ref{thm:AppPeditQuery} also hold for the property $\propertyPPedit$ because the adversarial sets that we construct in that theorem doesn't depend on the value of the parameter $\mu$. Therefore, for a constant $\epsilon$, $\Query(\propertyPPedit|_{\mathcal{S}})=\Omega(2^{0.75n})$. 
\end{proof}

\end{appendices}

\end{document}